% !TeX spellcheck = en_GB
% !TeX program = pdflatex
\documentclass[fontsize=11pt, a4paper, oneside, DIV=15, toc=listofnumbered, toc=bibnumbered, numbers = noendperiod]{scrartcl}

\usepackage[utf8]{inputenc}
\usepackage[T1]{fontenc}
\usepackage{lmodern}

\usepackage{newunicodechar}
\newunicodechar{ρ}{\rho}
\newunicodechar{σ}{\sigma}
\newunicodechar{λ}{\lambda}
\newunicodechar{Λ}{\Lambda}
\newunicodechar{μ}{\mu}
\newunicodechar{ν}{\nu}
\newunicodechar{ψ}{\psi}
\newunicodechar{ϕ}{\phi}
\newunicodechar{φ}{\varphi}
\newunicodechar{π}{\pi}
\newunicodechar{α}{\alpha}
\newunicodechar{ϵ}{\epsilon}
\newunicodechar{ε}{\varepsilon}
\newunicodechar{δ}{\delta}
\newunicodechar{ω}{\omega}
\newunicodechar{∗}{\textasteriskcentered}

\usepackage{microtype}

\usepackage[english]{babel}
\usepackage{csquotes}

\usepackage{xparse}
\usepackage{xspace}
\usepackage{makecell}
\usepackage{tablefootnote}
\usepackage{titling}
\usepackage{appendix}

\usepackage{versions}
\includeversion{DRAFT}
\excludeversion{TRASH}
\excludeversion{EXCLUDED}
\includeversion{TQCE}

\usepackage{amssymb}
\usepackage{amsmath}
\usepackage{mathrsfs}
\usepackage{mathtools}

\usepackage{cool}
\usepackage{xcolor}
\usepackage{hyperref}
\hypersetup{
	%\ifbool{hardcopybool}{hidelinks}{colorlinks},
	colorlinks,
	linkcolor={red!35!black},
	citecolor={red!35!black},
	urlcolor={blue!80!black},
	linktoc=all
}
\usepackage[nameinlink]{cleveref}

\usepackage{amsthm}
\usepackage{thmtools}
\newtheorem{theorem}{Theorem}
\newtheorem{lemma}[theorem]{Lemma}

\newtheorem{proposition}[theorem]{Proposition}
\newtheorem{example}[theorem]{Example}
\newtheorem{definition}[theorem]{Definition}
\theoremstyle{definition}
\newtheorem{remark}[theorem]{Remark}

\newtheorem*{definition*}{Definition}

\usepackage{graphicx}       
\setkeys{Gin}{width=\textwidth} %
\usepackage[multidot,spae]{grffile} %% Enables \includegraphics to parse files with multiple dots and spaces
\usepackage[compatibility=false]{caption}
\usepackage{subcaption}
\graphicspath{{./}{img/}}

% Add short caption to caption 
\makeatletter
\let\x@caption\caption% original (caption package) \caption
\newcommand{\x@@caption}[2][\empty]{%
	\ifx\empty#1\relax\x@caption{#2}%
	\else\x@caption[#1]{\textbf{#1.} #2}%
	\fi}% modified caption
\let\caption\x@@caption% new global \caption for other float types
\renewcommand{\captionabove}[2][\empty]{\captionsetup{position=above}%
	\x@@caption[#1]{#2}%
}% new \captionabove
\renewcommand{\captionbelow}[2][\empty]{\captionsetup{position=below}%
	\x@@caption[#1]{#2}%
}% new \captionbelow
\makeatother

\usepackage[style=alphabetic, maxbibnames=99]{biblatex} %minalphanames = 1, maxalphanames = 3

\setlength\bibitemsep{1.5\itemsep}

% Fix biblatex authoryear hyperref
\DeclareFieldFormat{citehyperref}{%
	\DeclareFieldAlias{bibhyperref}{noformat}% Avoid nested links
	\bibhyperref{#1}}

\DeclareFieldFormat{textcitehyperref}{%
	\DeclareFieldAlias{bibhyperref}{noformat}% Avoid nested links
	\bibhyperref{%
		#1%
		\ifbool{cbx:parens}
		{\bibcloseparen\global\boolfalse{cbx:parens}}
		{}}}

\savebibmacro{cite}
\savebibmacro{textcite}

\renewbibmacro*{cite}{%
	\printtext[citehyperref]{%
		\restorebibmacro{cite}%
		\usebibmacro{cite}}}

\renewbibmacro*{textcite}{%
	\ifboolexpr{
		( not test {\iffieldundef{prenote}} and
		test {\ifnumequal{\value{citecount}}{1}} )
		or
		( not test {\iffieldundef{postnote}} and
		test {\ifnumequal{\value{citecount}}{\value{citetotal}}} )
	}
	{\DeclareFieldAlias{textcitehyperref}{noformat}}
	{}%
	\printtext[textcitehyperref]{%
		\restorebibmacro{textcite}%
		\usebibmacro{textcite}}}

% Fixing some Zotero Issues
\AtEveryBibitem{
	\clearfield{note}
	\clearfield{issn}
	\clearfield{urlyear}
	\clearfield{urlmonth}
	\clearfield{eprintclass}
	\iffieldundef{journaltitle}{}{\clearfield{version}}
	\iffieldundef{journaltitle}{\iffieldundef{eprint}{}{\clearfield{doi}}}{}
}

\renewbibmacro*{url}{%
	\iffieldundef{doi}{\iffieldundef{eprint}{%
			\printfield{url}%
		}{}}{}
}

\DeclareBibliographyCategory{cited}
\AtEveryCitekey{\addtocategory{cited}{\thefield{entrykey}}}

\usepackage{suffix}
\usepackage{xparse}
\usepackage{braket}

\let\bkbraket\braket

\usepackage{physics}

\renewcommand{\tr}{\Tr}

\usepackage{mathtools}
\usepackage{amssymb}
\usepackage{tensor}
\usepackage[binary-units=true]{siunitx}

%Bracketing

%\newcommand{\br}[1]{\autobracket*{#1}}
\newcommand{\br}[1]{\left(#1\right)}

%Eigenstates of Operators/Measurements

\NewDocumentCommand{\stared}{m m}{\IfBooleanTF{#2}{#1*}{#1}}
\NewDocumentCommand{\es}{m o}{
	\IfNoValueTF{#2}{
		\mathbb{\uppercase{#1}}^{\lowercase{#1\/}}
	}{
		\mathbb{\uppercase{#1}}^{#2\/}
	}
}

%
% Matrix Groups
%

%
% Hilbert Space and Operators
%

% Fields

\newcommand{\reals}{\mathbb{R}}

\newcommand{\naturals}{\mathbb{N}}

% Hilbert Space
\newcommand{\HS}{\mathcal{H}}

% BoundedOperators
\NewDocumentCommand{\BO}{o}{\mathcal{B}\br{\IfValueTF{#1}{#1}{\HS}}}

% Density Matrices
\NewDocumentCommand{\DM}{o}{\mathcal{D}\br{\IfValueTF{#1}{#1}{\HS}}}

% Positive operators
\NewDocumentCommand{\pos}{o}{\mathscr{P}\br{\IfValueTF{#1}{#1}{\HS}}}
%\newcommand{\pos}{\mathscr{P}}

%Total particle number

%Pre-Indexing (for Hilbert space subspaces)
%\newcommand{\preNT}[2]{\tensor[^{(N_T)}]{#1}{#2}}%\newcommand{\preNTk}[1]{\preNT{#1}{^{(k)}}}

%Logical equivalence (defined)

% Support
\NewDocumentCommand{\supp}{m}{\textrm{supp}\br{#1}}

% Variance

%
%	Entropies
%

% Kullback-Leibler-Divergence
\DeclarePairedDelimiterX{\infdivx}[2]{(}{)}{
	#1\;\delimsize\|\;#2
}

% Quantum Renyi Divergence
\DeclareDocumentCommand{\qrd}{o o m m}{\ensuremath{\IfValueTF{#1}{#1}{D}\IfValueTF{#2}{_{#2}}{}\/\infdivx*{#3}{#4}}}
% Geometric Divergence

% Sharp-divergence

% Shannon-Entropy
\NewDocumentCommand{\ent}{d()g}{\ensuremath{H\br{\IfValueTF{#1}{#1}{#2}}}}
% von-Neumann-Entropy
\NewDocumentCommand{\qent}{d()g}{\ensuremath{H\br{\IfValueTF{#1}{#1}{#2}}}}

%span
%\newcommand\span[1]{\textrm{span}\left\{#1\right\}}

\newcommand*\ric[1]{\vphantom{#1}\smash{#1_{}\kern-\scriptspace}}

%
% Old mystyle.sty
%

%%Simple Determinant
%\let\origdet\det
%% You should comment out one of the two next renewcommands
%\renewcommand{\det}[1]{\origdet\mathopen{}\left(#1\right)}   % to get "det(A)"
%% \renewcommand{\det}[1]{\left\lvert#1\right\rvert} % to get "|A|"

\newcommand{\renyi}{R\'enyi\xspace}

\newlength\oversetwidth
\newlength\underwidth
\newcommand\alignedoverset[2]{
	% #1 = over
	% #2 = under
	\settowidth\oversetwidth{$\overset{#1}{#2}$}
	\settowidth\underwidth{$#2$}
	\setlength\oversetwidth{\oversetwidth-\underwidth}
	\hspace{.5\oversetwidth}
	&
	\settowidth\oversetwidth{$\overset{#1}{#2}$}
	\settowidth\underwidth{$#2$}
	\setlength\oversetwidth{\oversetwidth-\underwidth}
	\hspace{-.5\oversetwidth}
	\overset{#1}{#2}
}
\newcommand{\Sn}{\mathfrak{S}_n}

\newcommand{\n}{^{\otimes n}}
\newcommand{\Pa}{P_A(\pi)}
\newcommand{\Pad}{\Pa^\dagger}
\newcommand{\Pb}{P_B(\pi)}
\newcommand{\Pbd}{\Pb^\dagger}
\newcommand{\divarg}[2]{\infdivx*{#1}{#2}}

\DeclareMathOperator*{\liminfsup}{\overline{\underline{\lim}}}%
\DeclareMathOperator*{\essentiallylim}{``\lim ''}%

\setcounter{tocdepth}{3}

\bibliography{bibliography.bib}

\title{Composite Classical and Quantum Channel Discrimination}
\author{Bjarne Bergh\thanks{Department of Applied Mathematics and Theoretical Physics, University of Cambridge, United Kingdom}\and Nilanjana Datta\thanksmark{1} \and Robert Salzmann\thanksmark{1}}

\usepackage{bm}
\renewcommand{\E}{\mathcal{E}}
\newcommand{\F}{\mathcal{F}}

\newcommand{\id}{\textrm{id}}
\renewcommand{\S}{\mathcal{S}}
\newcommand{\T}{\mathcal{T}}
\newcommand{\conv}{\mathcal{C}}
\newcommand{\C}{\conv}
\NewDocumentCommand{\cptp}{o}{\mathrm{CPTP}(\IfValueTF{#1}{#1}{A \to B})}
\newcommand{\M}{\mathcal{M}}

\newcommand{\X}{\mathcal{X}}
\newcommand{\Y}{\mathcal{Y}}
\renewcommand{\D}{\mathbf{D}}

\begin{document}
	
	\maketitle
	\begin{abstract}
		We study the problem of binary composite channel discrimination in the asymmetric setting, where the hypotheses are given by fairly arbitrary sets of channels, and samples do not have to be identically distributed. In the case of quantum channels we prove: $(i)$ a characterization of the Stein exponent for parallel channel discrimination strategies and $(ii)$ an upper bound on the Stein exponent for adaptive channel discrimination strategies. We further show that already for classical channels this upper bound can sometimes be achieved and be strictly larger than what is possible with parallel strategies. Hence, there can be an advantage of adaptive channel discrimination strategies with composite hypotheses for classical channels, unlike in the case of simple hypotheses. Moreover, we show that classically this advantage can only exist if the sets of channels corresponding to the hypotheses are non-convex. As a consequence of our more general treatment, which is not limited to the composite i.i.d.~setting, we also obtain a generalization of previous composite state discrimination results. 
	\end{abstract}
	
	\tableofcontents
	\section{Introduction and Outline}
	
	Hypothesis testing, or finding optimal strategies and minimal errors for discrimination tasks is one of the oldest and most studied tasks in information theory. In the quantum setting there have been plenty of results regarding the optimal discrimination of states~\cite{helstrom_quantum_1976, kholevo_asymptotically_1979, hiai_proper_1991, ogawa_strong_2000, nagaoka_converse_2006, audenaert_discriminating_2007, hayashi_error_2007, audenaert_asymptotic_2008, nussbaum_chernoff_2009, audenaert_quantum_2012, bae_quantum_2015, mosonyi_error_2021} and also quantum channels~\cite{chiribella_memory_2008,duan_perfect_2009,hayashi_discrimination_2009,harrow_adaptive_2010, wang_resource_2019, fang_chain_2020, wilde_amortized_2020}. However, most of the quantum literature focuses on the case where the two hypotheses are simple, i.e.~the hypotheses state that what we are given is \emph{exactly} one specific state (or channel). Arguably, much more practically relevant is the case where one allows for composite hypotheses, i.e.~hypotheses stating that the given state (or channel) belongs to a certain set. 
	This first of all includes the noisy regime, where we can assume that what we are given is approximately one of two possibilities, but also much more general settings, i.e.~questions of discriminating big sets with a certain structure (for states this could for example be sets of separable~\cite{brandao_adversarial_2020} or coherent~\cite{berta_composite_2021} states). 
	
	Throughout this paper we will be looking at binary composite hypothesis testing in the asymmetric setting. We are given $n$ instances of an unknown object, and have to make a decision between two hypotheses based on these $n$ instances, and are ultimately interested in the asymptotic limit $n \to \infty$. We will start with introducing the problem for discriminating two sets of states, and give an overview of previous results in the literature, before moving on to the problem of discriminating two sets of channels. To our knowledge, the task of composite binary quantum {\em{channel}} discrimination has not been studied thus far. Throughout our analysis, we will not restrict ourselves to the composite i.i.d.~setting, i.e.~we will also allow the provided objects (states or channels) to vary within the sets corresponding to the hypotheses.
	
	For binary asymmetric composite channel discrimination we show in this fairly general setting: $(i)$ a characterization of the Stein exponent for parallel channel discrimination strategies (\autoref{thm:parr}), and $(ii)$ an upper bound on the Stein exponent for adaptive channel discrimination strategies (\autoref{prop:adaptive_upper_bound}). We further show that already classically 
	this upper bound can sometimes be achieved and be strictly larger than what is possible with parallel strategies (\autoref{example:adaptive_advantage}), and hence there can be an advantage of adaptive channel discrimination strategies with composite hypotheses. We go on to show that classically this advantage can only exist if the sets of channels corresponding to the hypotheses are non-convex, and additionally assuming this convexity makes parallel strategies asymptotically optimal (\autoref{thm:classical_equality}). We leave the question open whether an adaptive advantage can exist in the quantum case when the sets of channels are convex. 
	\autoref{tab:main_results} gives an overview of what we are able to show regarding composite channel discrimination, and illustrates in which cases an adaptive advantage exists.
 
	As a consequence of our more general treatment which is not limited to the composite i.i.d. setting we also obtain a generalization of the composite state discrimination results of~\cite{berta_composite_2021} (\autoref{thm:general_states}). Note, however, that while we do not require provided states or channels to be identical, we still require them to be independent. Hence, our theorems do not aid in determining whether a generalized Stein's lemma holds in cases where the alternative hypothesis is given by a set of non-independent states, as conjectured in \cite{brandao_generalization_2010, berta_gap_2022}. 

     \newpage
    \subsection*{Summary of Main Results}

        \begin{center}
        \captionsetup{type=table}
        \begin{tabular}{c|c|c|c|c|c|c}
            \thead{Hypotheses} & \thead{Asymptotic \\ Parallel Exponent} & & \thead{Asymptotic\\ Adaptive\\ Exponent} & & \thead{Upper\\ Bound} & \thead{Shown in}\\\hline {\color{gray}
            \makecell{Quantum \\ Simple}} & {\color{gray} $D_{\mathrm{reg}}(\E\|\F)$} & {\color{gray}$=$} & {\color{gray}$D_{A}(\E\|\F)$} &  & & {\color{gray}\makecell{\cite{wang_resource_2019}\\\cite{ wilde_amortized_2020}\\\cite{fang_chain_2020}}}\\\hline
            \makecell{Classical\\ Composite \\ Convex Sets} & $ \displaystyle \max_{\nu} \min_{\substack{\E \in \S \\ \F \in \T}} D(\E(\nu)\|\F(\nu))$ & $=$ & $ \displaystyle \min_{\substack{\E \in \S \\ \F \in \T}} D(\E\|\F)$ & $=$ & $\displaystyle \min_{\substack{\E \in \S \\ \F \in \T}} D(\E\|\F)$ & Thm. \ref{thm:classical_equality}\\\hline
            \makecell{Classical\\ Composite \\ Finite Sets} & $ \displaystyle \max_{\nu} \min_{\substack{\E \in \S \\ \F \in \T}} D(\E(\nu)\|\F(\nu))$ & \makecell{$<$\\i.g.} & $?$ & \makecell{$\leq$} & $ \displaystyle \min_{\substack{\E \in \S \\ \F \in \T}}  D(\E\|\F)$ & \makecell{Prop. \ref{prop:classical_finite} \\ Ex. \ref{example:adaptive_advantage} \\ Prop. \ref{prop:adaptive_upper_bound}}\\\hline
            \makecell{Quantum\\ Composite} & $\displaystyle \lim_{n \to \infty} {1 \over n} \min_{\substack{\E_n \in \C(\S_n) \\ \F_n \in \C(\T_n)}} D(\E_n\|\F_n)$ & \makecell{$<$\\i.g.} & $?$ & \makecell{$<$\\i.g.} & $\displaystyle \min_{\substack{\E \in \S \\ \F \in \T}} D_A(\E\|\F)$ & \makecell{Thm. \ref{thm:parr} \\ Ex. \ref{example:adaptive_advantage} \\ Prop. \ref{prop:adaptive_upper_bound} \\
            Rem. \ref{remark:upper_bound_loose}}
        \end{tabular}
        \caption{Illustration of the relation between adaptive and parallel type II error exponents for various channel discrimination tasks. For the composite problems the task is to discriminate between two sets of channels $\S$ and $\T$ and the table also includes an upper bound based on the worst-case simple i.i.d. problem. 
        \enquote{Quantum Simple} refers to the quantum channel discrimination problem with simple hypotheses. With \enquote{Classical} we mean that all channels are classical, and \enquote{Convex Sets} or \enquote{Finite Sets} refers to whether the sets of channels $\S$ and $\T$ are convex or finite.
        Please see the respective theorems for a general formulation of the results and a precise definition of the quantities involved; $\C$ denotes the convex hull. We write i.g. to denote that these inequalities will be strict in general, although there exist specific examples where equality holds.}
        \label{tab:main_results}
        \end{center}
	
	\section{Mathematical Preliminaries}	 
	We write $\HS$ for a complex finite-dimensional Hilbert space, and $\BO$ for the set of linear operators acting on $\HS$. We write $\pos$ for the set of positive semi-definite operators acting on $\HS$. For $A, B \in \pos$, we further write $A \ll B$ if $\supp{A} \subseteq \supp{B}$  and $A \not \ll B$ if $\supp{A} \not \subseteq \supp{B}$.  Let $\DM$ denote the set of density matrices, i.e., the set of positive semi-definite operators with trace one. A quantum channel (in this paper usually denoted as $\E$ or $\F$) is a completely positive trace preserving map between density operators. We will label different quantum systems by capital Roman letters ($A$, $B$, $C$, etc.) and often use these letters interchangeably with the corresponding Hilbert space or set of density matrices (i.e., we write $\rho \in \DM[A]$ instead of $\rho \in \DM[\HS_A]$ and $\E: A \to B$ instead of $\E: \DM[\HS_A] \to \DM[\HS_B]$). We will also concatenate these letters to mean tensor products of systems, i.e.~we will write $\rho \in \DM[RA]$ for $\rho \in \DM[\HS_R \otimes \HS_A]$. We write $\cptp$ for the set of all completely positive trace preserving maps from $\DM[\HS_A]$ to $\DM[\HS_B]$. Throughout the paper we will write $\X$ and $\Y$ for classical systems. A classical state $\rho \in \DM[\X]$ is then diagonal in the computational basis, and we write $\cptp[\X \to \Y]$ for the set of classical channels. 
	
	For any subset $A$ of a vector space, we will write $\C(A)$ for the convex hull of $A$. We write $\log$ for the logarithm to the base two.

	\subsection{Measurements and POVMs}\label{sec:povms}
	Throughout this paper we will treat measurements and POVMs as quantum-classical channels, i.e.~we associate a POVM specified through the operators $\{M_i\}_{i=1}^{n} \subset \BO[\HS]$ with the quantum-classical channel
	
	\begin{equation}
		\M : \DM[\HS] \to \DM[\mathbb{C}^{n}] \qquad      \rho \mapsto \sum_{i = 1}^{n} \Tr(\rho M_i) \ketbra{i}\,.
	\end{equation}

	\subsection{Quantum Information Measures}
	
	For $\rho \in \DM$ and $\sigma \in \pos$ the (Umegaki) quantum relative entropy is defined as~\cite{umegaki_conditional_1962}
	\begin{equation}
		D(\rho\|\sigma) \coloneqq \Tr(\rho(\log \rho - \log \sigma)),
	\end{equation}
	if $\rho \ll \sigma$ and 
	$D(\rho\|\sigma) \coloneqq \infty$ if $\rho \not \ll \sigma$. One of its most important properties is the data-processing inequality~\cite{lindblad_completely_1975}, which states that for every quantum channel $\E$:
	\begin{equation}
		D(\rho\|\sigma) \geq D(\E(\rho)\|\E(\sigma)) \,.
	\end{equation}
	A self-contained proof can be found e.g.~in~\cite{khatri_principles_2020}.
	More generally, we call a function of $\rho$ and $\sigma$ a divergence if it satisfies the data-processing inequality.

	We can also define the measured relative entropy as the maximal classical relative entropy when measuring both states with some POVM. Specifically
	
	\begin{equation}
		D_M(\rho\|\sigma) \coloneqq \sup_{\M \text{ POVM}} D(\M(\rho)\|\M(\sigma))
	\end{equation}
	where $\M$ is a POVM (with arbitrarily many outcomes) interpreted as a quantum-classical channel as outlined above.
 
	For $\rho \in \DM$ and $\sigma \in \pos$, define the quantum max-divergence (or the max-relative entropy) as \cite{datta_min-_2009}
	\begin{equation}
		D_{\max}(\rho\|\sigma) \coloneqq \log \inf\Set{\lambda \in \reals | \rho \leq \lambda \sigma}\,.
	\end{equation}
	The quantum max-divergence also satisfies the data-processing inequality \cite{datta_min-_2009}.

	We will also frequently use the hypothesis testing relative entropy, which for $\rho, \sigma \in \DM[\HS]$ is defined as follows~\cite{wang_one-shot_2012}
	\begin{equation}
		D_H^{\varepsilon}(\rho\|\sigma) \coloneqq - \log \qty[ \min_{\substack{0 \leq M \leq \IdentityMatrix_{\HS} \\ \Tr(M \rho) \geq 1- \varepsilon}} \Tr(\sigma M)]\,.
	\end{equation}

	\subsubsection{Channel Divergences}
	
	For every given divergence $\mathbf{D}$ for states, one can define an associated channel divergence~\cite{leditzky_approaches_2018} by performing a (stabilized) maximization over all input states, i.e.~with $\E, \F: A \to B$ being quantum channels
	\begin{equation}
		\mathbf{D}(\E\|\F) \coloneqq \sup_{\rho_{RA} \in \DM[R \otimes A]}\mathbf{D}\big((\id_R \otimes \E)(\rho)\|(\id_R \otimes \F)(\rho)\big).
	\end{equation}
	Since $\mathbf{D}$ satisfies the data-processing inequality by definition, the supremum can be restricted to pure states such that the reference system $R$ is isomorphic to the channel input system $A$ (this is shown below as \autoref{lem:restrict_reference_size}). 

    For the Umegaki relative entropy $D$, we can also define the regularized and amortized \cite{wilde_amortized_2020} channel divergences as
    \begin{align}
        D_{\mathrm{reg}}(\E\|\F) &\coloneqq \lim_{n \to \infty}{1 \over n} D(\E^{\otimes n}\|\F^{\otimes n}), \\
        D_{A}(\E\|\F) &\coloneqq \sup_{\substack{\rho, \sigma \in \DM[RA] \\ R \text{ arbitrary}}}\qty[D(\E(\rho)\|\F(\sigma)) - D(\rho\|\sigma)].
    \end{align}
    Note that for the amortized divergence, there is no known way in which the size of the reference system can be restricted. 
	
	\section{Composite State Discrimination}\label{sec:composite-state-discrimination}
	In simple quantum state discrimination, given $n$ identical copies of an unknown state which is promised to be either $\rho$ or $\sigma$, the task is to decide which of the two options it is. In composite quantum state discrimination, we are only promised that the states are all from one of two sets $S$ or $T$, and the task is to decide which set they come from (but not to further identify which state exactly was provided). Since there are now multiple states for each hypothesis, there are multiple possible scenarios how the $n$ input states one receives are related: We could still be given $n$ identical copies of a state, or alternatively, we could be given $n$ completely different states but all from the same set $S$ or $T$, or something in between, where the states are non-identical but still related. We would like to cover all these different scenarios in our analysis, and hence we will describe composite hypotheses as sequences of sets $S_n$ which include all the possible combinations of $n$ states we could get. We will make some small assumptions on these sets:
	
	\begin{definition}\label{def:composite_state_hypothesis}
		For the purpose of this work, a composite quantum state hypothesis (in the asymptotic setting) is a sequence of sets of states $$\bm{S} = (S_n \subset \DM[\HS^{\otimes n}])_n$$ such that
		\begin{enumerate}
			\item Each set $S_n$ is topologically closed.
			\item Each element $\rho_n \in S_n$ is a tensor product of states $\rho_n = \rho^{(1)} \otimes \ldots \otimes \rho^{(n)}$, with each $\rho^{(i)} \in \DM[\HS]$ for $i = 1, ..., n$. 
			\item The sets $S_n$ are closed under tracing out any subsystem, i.e.~for any $i = 1, ..., n$ and $\rho_n \in \S_n$ we have that $\Tr_i(\rho_n) \in \S_{n-1}$, where $\Tr_i$ denotes the partial trace over the $i^{th}$ subsystem.
			\item Each set $S_n$ is closed under permutation of the $n$ subsystems, i.e.~for any permutation $\pi \in \Sn$ and associated canonical unitary representation $P_{\HS}(\pi)$, we have for all $\rho_n \in S_n$: $P_{\HS}(\pi) \rho_n P_{\HS}(\pi)^\dagger \in \S_n$. 
		\end{enumerate}
		
	\end{definition}

	Interesting examples of this include:
	\begin{enumerate}
		\item The composite i.i.d.\ case: We have two sets $S$, $T \subset \DM$, and are given $n$ identical copies of an element from $S$ if the null hypothesis is true, and $n$ identical copies of an element from $T$ if the alternate hypothesis is true. This corresponds to:
		\begin{align}
			S_n &\coloneqq \Set{\rho^{\otimes n}| \rho \in S}, \\
			T_n &\coloneqq \Set{\sigma^{\otimes n}| \sigma \in T}.
		\end{align}
		\item The arbitrarily varying case: This is similar to the composite i.i.d. case, but we are not given $n$ identical copies, but $n$ (potentially different) elements from $S$ or $T$. This corresponds to:
		\begin{align}
			S_n &\coloneqq \Set{\rho_1 \otimes \ldots \otimes \rho_n| \rho_1, \ldots, \rho_n \in S}, \\
			T_n &\coloneqq \Set{\sigma_1 \otimes \ldots \otimes \sigma_n| \sigma_1, \ldots,\sigma_n \in T}.
		\end{align}
		\item The slightly-varying case: This is an example of a scenario that lies in between the arbitrarily varying case (where there is no correlation between the samples, except for them all being in the same set) and the composite i.i.d. case (where there is maximal correlation between the samples, as they are all identical). For any given $\varepsilon \in [0,1]$ (which might depend on $n$) and any distance function $d : \DM \cross \DM \to [0,1]$ (e.g.~ trace distance or purified distance) set
		\begin{align}
			S_n &\coloneqq \Set{\rho_1 \otimes \ldots \otimes \rho_n| \rho_1, \ldots, \rho_n \in S, \quad d(\rho_i, \rho_j) \leq \varepsilon \; \forall i,j}, \\
			T_n &\coloneqq \Set{\sigma_1 \otimes \ldots \otimes, \sigma_n| \sigma_1, \ldots, \sigma_n \in T,\quad d(\sigma_i, \sigma_j) \leq \varepsilon \; \forall i,j}.
		\end{align}
		\item The simple i.i.d.\ case: The simple i.i.d.\ case can be seen as a special case of the above where $S$ and $T$ each contain one element. 
	\end{enumerate}

	\begin{lemma}\label{lem:partial_measurement_conditioning}
		If $\bm{S}$ is a quantum state hypothesis, then performing a measurement on any (joint) $k$ subsystems of a state $\rho_n \in S_n$, and conditioning on the measurement result, yields a state $\rho_{n-k}$ on the remaining subsystems that is an element of $S_{n-k}$. 
		Precisely, let $k \in \{1, ..., n\}$, $\rho_n \in S_n \subset \DM[\HS^{\otimes n}]$ and $0 \leq M \leq \IdentityMatrix \in \BO[\HS^{\otimes k}]$. If
		\begin{equation}
			\omega_{n-k} \coloneqq \Tr_{1, ..., k}\qty[ (M \otimes \IdentityMatrix_{\HS_{k + 1}} \otimes ... \otimes \IdentityMatrix_{\HS_n}) \rho_n]
		\end{equation}
		then either $\tr(\omega_{n-k}) = 0$ or $\omega_{n-k}/\tr(\omega_{n-k}) \in S_{n-k}$.
	\end{lemma}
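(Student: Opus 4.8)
The plan is to reduce the statement to two of the defining properties of a composite quantum state hypothesis in \autoref{def:composite_state_hypothesis}, namely that every element of $S_n$ is a tensor product (property~2) and that the family $\bm{S}$ is closed under partial traces (property~3), together with the elementary observation that, on a product state, measuring the first $k$ systems and conditioning on the outcome changes the remaining $n-k$ systems only by an overall scalar factor.

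Concretely, I would first fix $\rho_n \in S_n$ and invoke property~2 to write $\rho_n = \rho^{(1)} \otimes \ldots \otimes \rho^{(n)}$ with each $\rho^{(i)} \in \DM[\HS]$. Since $M$ acts only on the first $k$ tensor factors, the operator $(M \otimes \IdentityMatrix_{\HS_{k+1}} \otimes \ldots \otimes \IdentityMatrix_{\HS_n})\,\rho_n$ factorizes as $\left(M\left(\rho^{(1)} \otimes \ldots \otimes \rho^{(k)}\right)\right) \otimes \rho^{(k+1)} \otimes \ldots \otimes \rho^{(n)}$, so that taking $\Tr_{1, \ldots, k}$ collapses the first factor to the scalar $p \coloneqq \tr\!\left[M\left(\rho^{(1)} \otimes \ldots \otimes \rho^{(k)}\right)\right]$ and leaves
\begin{equation}
	\omega_{n-k} = p \left(\rho^{(k+1)} \otimes \ldots \otimes \rho^{(n)}\right).
\end{equation}
Because $0 \le M \le \IdentityMatrix$ is being evaluated on a density operator we have $p \in [0,1]$, and because $\rho^{(k+1)} \otimes \ldots \otimes \rho^{(n)}$ has unit trace it follows that $\tr(\omega_{n-k}) = p$. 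Hence, if $p = 0$ then $\tr(\omega_{n-k}) = 0$, which is the first alternative in the claim, while if $p > 0$ then $\omega_{n-k}/\tr(\omega_{n-k}) = \rho^{(k+1)} \otimes \ldots \otimes \rho^{(n)}$.

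It then remains to check that $\rho^{(k+1)} \otimes \ldots \otimes \rho^{(n)} \in S_{n-k}$, which I would obtain by applying property~3 of \autoref{def:composite_state_hypothesis} $k$ times in succession: starting from $\rho_n \in S_n$, repeatedly trace out the current first subsystem, thereby deleting the factors $\rho^{(1)}, \ldots, \rho^{(k)}$ one at a time and descending through $S_{n-1}, S_{n-2}, \ldots$ down to $S_{n-k}$. If instead one wants $M$ to act on an arbitrary block of $k$ subsystems rather than on the first $k$, then permutation invariance (property~4 of \autoref{def:composite_state_hypothesis}) reduces that situation to the one treated above. I do not expect any genuine obstacle here; the only points needing a little care are the index bookkeeping in the iterated partial trace --- the label of the subsystem removed at each step decreases by one --- and verifying that $\tr(\omega_{n-k})$ really equals $p$, both of which are immediate from the product structure of $\rho_n$.
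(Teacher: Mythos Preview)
Your proposal is correct and is precisely the argument the paper gives, just written out in full detail: the paper's proof is the one-line observation that the tensor-product structure of $\rho_n$ (property~2) together with closure under removing tensor factors (property~3) immediately yields the claim.
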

	\begin{proof}
		This follows immediately from the fact that each element $\rho_n \in S_n$ is a tensor product, and that removing an element in the tensor product gives an element in $S_{n-1}$. 
	\end{proof}

    \begin{remark}
        Note that while we state the results below with the assumptions of \autoref{def:composite_state_hypothesis}, we could replace the required tensor product structure (point 2 in \autoref{def:composite_state_hypothesis}) with the statement of \autoref{lem:partial_measurement_conditioning}. While this might be more general, we find the assumptions of \autoref{def:composite_state_hypothesis} to be more natural. Similarly, later on when talking about hypotheses in the context of channel discrimination, we could replace the tensor product structure for channels (point 2 in \autoref{def:composite_channel_hypothesis}) with the statement of \autoref{lem:partial_measurement_conditioning} for any tensor product input state. 
    \end{remark}
 	
	For our discrimination problem, given an $n$ and an unknown state in $\DM[\HS^{\otimes n}]$ we will still want to perform a binary POVM (fully specified by one of its elements, which we write as $M$) to decide between the two hypotheses. In the end we want to avoid making an error, i.e.~claiming that our state comes from $S_n$ when it actually comes from $T_n$ and vice-versa, and these two error probabilities are again what we call type-I and type-II error now in this setting (see below for a formal definition). If we settle on a measurement $M$, the probability of making an error might still depend on which particular state from either $S_n$ or $T_n$ we actually end up getting. Here, we will be focussing on minimizing the worst case errors, i.e.~we want to choose measurements which minimize the error uniformly over all states from $S_n$ and $T_n$. More formally, we define the type I and type II error probabilities (also again called type I and type II errors) as:
	\begin{align}\label{eq:def_errors_states}
		\alpha(M, S_n) &= \sup_{\rho \in S_n} \Tr((\IdentityMatrix - M) \rho) \\
		\beta(M, T_n) &= \sup_{\sigma \in T_n} \Tr(M \sigma)\,.
	\end{align}
	We will be focussing on the asymmetric setting, where we want to minimize the type II error, $\beta$, under the constraint that the type I error $\alpha$ is below a certain threshold. The main quantity of interest is then the negative logarithm of this minimal type II error under the type I error constraint, which we also call the hypothesis testing relative entropy of the two sets $S_n$ and $T_n$:
	\begin{equation}
		D_H^{\varepsilon}(S_n\|T_n) \coloneqq - \inf_{\substack{0 \leq M \leq \IdentityMatrix \\ \alpha(M, S_n) \leq \varepsilon}}  \log \beta(M, T_n) \,.
	\end{equation}
	
	As the expression in \eqref{eq:def_errors_states} is linear in $\rho$, it is easy to see that 
	\begin{equation}
		\alpha(M, \C(S_n)) = \alpha(M, S_n)
	\end{equation}
	(where $\C$ is the convex hull)
	as the supremum will be achieved at an extremal point, and the same holds also for $\beta$. 
	
	Hence,
	\begin{equation}
		D_H^{\varepsilon}(S_n\|T_n) = D_H^{\varepsilon}(S_n\|\C(T_n)) = D_H^{\varepsilon}(\C(S_n)\|T_n) = D_H^{\varepsilon}(\C(S_n)\|\C(T_n))
	\end{equation}
	and hence the discrimination task considered is equivalent to discriminating between these convex hulls of the sets $S_n$ and $T_n$.
	We are interested in the quantum Stein exponent for this discrimination task, i.e.~the optimal exponential decay rate of the type II error in the limit $n \to \infty$ (which corresponds to infinitely many states being provided). Morally, the expression we are interested in is
	\begin{equation} \label{eq:composite_state_target_expr}
		\lim_{\varepsilon \to 0} {\essentiallylim_{n \to \infty}} \frac{1}{n} D^{\varepsilon}_H(S_n\|T_n)\,.
	\end{equation}
	However, due to the optimization over the elements from the sets $S_n$ and $T_n$, for any fixed $ε > 0$, $D^{\varepsilon}_H(S_n\|T_n)$ need not be super-additive in $n$, and hence we are not aware of any way to show that the limit $n \to \infty$ exists in these expressions. However, in many cases we are able to show that after additionally taking the limit $ε \to 0$ outside, the final expression does not depend on whether one takes a $\liminf$ or a $\limsup$ inside. Hence, we introduce the following notation
	
	\begin{definition}\label{def:liminfsup}
		For any function $f(n, ε): \naturals \cross (0,1) \to \reals$, and $A \in \reals \cup \{- \infty, \infty\}$, we write
		\begin{equation}
			\lim_{ε \to 0} \liminfsup_{n \to \infty} f(n, ε) = A
		\end{equation}
		if
		\begin{equation}
				\lim_{ε \to 0} \liminf_{n \to \infty} f(n, ε)  = A
		\end{equation}
		and 
		\begin{equation}
			\lim_{ε \to 0} \limsup_{n \to \infty} f(n, ε) = A\,.
		\end{equation}
	\end{definition}
		
	To come back to \eqref{eq:composite_state_target_expr}, in~\cite{berta_composite_2021} this problem was studied specifically in the composite i.i.d.~case, i.e.~with $S_n = \Set{\rho^{\otimes n} | \rho \in S}$, $T_n = \Set{\sigma^{\otimes n}| \sigma \in T}$, where $S, T \subset \DM$ were also assumed to be closed and convex, leading to:
	\begin{theorem}[\cite{berta_composite_2021}] \label{thm:berta_composite}
		Let $S, T$ be closed and convex, and define for all $n$: $S_n \coloneqq \Set{\rho^{\otimes n} | \rho \in S}$, $T_n \coloneqq \Set{\sigma^{\otimes n} | \sigma \in T}$. Then
		
		\begin{equation}\label{eq:berta_thm_main}
			\lim_{\varepsilon \to 0} \liminfsup_{n \to \infty} \frac{1}{n} D^{\varepsilon}_H(S_n\|T_n) = \lim_{n \to \infty} \frac{1}{n} \inf_{\substack{\rho_n \in \S_n \\ \sigma_n \in \C(T_n)}} D(\rho_n \|\sigma_n) \,,
		\end{equation}
		where $\liminfsup$ can be $\liminf$ or $\limsup$ (see \autoref{def:liminfsup}), and one can find cases where this is strictly smaller than
		\begin{equation}\label{eq:berta_thm_collapse}
			\inf_{\substack{\rho \in S \\ \sigma \in T}} D(\rho\|\sigma)\,.
		\end{equation}
	\end{theorem}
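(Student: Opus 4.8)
The plan is to prove the equality~\eqref{eq:berta_thm_main} by two matching inequalities and then to exhibit an explicit example witnessing the strict inequality with~\eqref{eq:berta_thm_collapse}. \emph{The easy direction} ($\le$) is a data-processing argument: since $\alpha(M,\cdot)$ and $\beta(M,\cdot)$ are affine in the state we have $D^\varepsilon_H(S_n\|T_n)=D^\varepsilon_H(\C(S_n)\|\C(T_n))$, as already noted, and for any $\rho_n\in S_n$, $\sigma_n\in\C(T_n)$ every test feasible for $S_n$ is feasible for the simple null $\rho_n$ while $\beta(M,\C(T_n))\ge\Tr(M\sigma_n)$, so $D^\varepsilon_H(S_n\|T_n)\le\inf_{\rho_n\in S_n,\,\sigma_n\in\C(T_n)}D^\varepsilon_H(\rho_n\|\sigma_n)$; inserting the standard one-shot bound $D^\varepsilon_H(\rho\|\sigma)\le\tfrac{1}{1-\varepsilon}(D(\rho\|\sigma)+h(\varepsilon))$ with $h$ the binary entropy (e.g.~\cite{wang_one-shot_2012}), dividing by $n$, and letting $n\to\infty$ and then $\varepsilon\to 0$ yields $\le$ the right-hand side of~\eqref{eq:berta_thm_main}. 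This presupposes that the regularized quantity on the right (call it $s$), and the double limit on the left, exist; I would establish this by a sub-/super-additivity argument in the block length, using that partial traces of elements of $\C(T_n)$ lie in $\C(T_{n-1})$ (point~3 of~\autoref{def:composite_state_hypothesis}) together with a chain-rule estimate for the relative entropy to get approximate super-additivity of $n\mapsto\inf_{\sigma_n\in\C(T_n)}D(\rho^{\otimes n}\|\sigma_n)$ for fixed $\rho$, then Fekete's lemma, then a uniformity argument in $\rho$ (compactness of $S$, equicontinuity) to interchange $\inf_{\rho\in S}$ with the limit; one also checks that replacing $S_n$ by $\C(S_n)$ does not change $s$, since distinct $\rho^{\otimes n}$ have asymptotically orthogonal support.

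\emph{The hard direction} ($\ge$) is where essentially all the work lies. Using $D^\varepsilon_H(S_n\|T_n)=D^\varepsilon_H(\C(S_n)\|\C(T_n))$, the plan is: (i) fix a large block length $\ell$ and a hardest-case alternative $\sigma^\star_\ell=\sum_i p_i\tau_i^{\otimes\ell}\in\C(T_\ell)$, i.e.\ one achieving $\inf_{\rho\in S,\,\sigma_\ell\in\C(T_\ell)}D(\rho^{\otimes\ell}\|\sigma_\ell)$ up to $o(\ell)$; (ii) use compactness of $S$ and $T$ together with continuity of the relevant one-shot quantities to reduce the composite hypotheses to finite $\delta$-nets, with an error vanishing as $\delta\to 0$; (iii) exploit the permutation symmetry of $S_n$ and $T_n$ (point~4 of~\autoref{def:composite_state_hypothesis}) to restrict to permutation-covariant (``universal'') tests, which handle the remaining finitely many states in both hypotheses simultaneously; (iv) apply the direct part of the i.i.d.\ quantum Stein lemma~\cite{hiai_proper_1991} to the resulting essentially i.i.d.\ problem, and finally send $\ell\to\infty$, $\delta\to0$, $\varepsilon\to0$ to recover the rate $s$. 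Steps (ii)--(iii) may equivalently be packaged as a Sion minimax interchange between the test operator and the worst-case state in the convex compact set $\C(T_n)$. The main obstacle is that the sets $\C(T_n)$ are \emph{not} closed under tensor products, so the least favourable alternative on $n$ systems is a genuinely non-product mixture and a naive ``run the block test on $n/\ell$ blocks'' argument does not close; reconciling uniformity over $\C(T_n)$ with an i.i.d.\ Stein bound — through the permutation/de~Finetti route above, a careful minimax argument, or a Brand\~{a}o--Plenio-type analysis of the regularized divergence adapted to these sets — is the real difficulty, and it is exactly what forces the regularization in~\eqref{eq:berta_thm_main}.

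For the final claim I would construct explicit closed convex $S$ and $T$ for which the regularization is strict. This example must be genuinely non-commutative: classically the right-hand side of~\eqref{eq:berta_thm_main} always collapses to~\eqref{eq:berta_thm_collapse}, since by the method of types $\tfrac1n D(p^{\otimes n}\|\C(T_n))\to\inf_{q\in T}D(p\|q)$ uniformly on $S$. Quantumly one instead takes $T$ to be the convex hull of a suitable continuous family of pairwise non-commuting states, so that the optimal $n$-copy mixture $\sum_i p_i\tau_i^{\otimes n}$ exploits the distinct typical subspaces of the $\tau_i^{\otimes n}$ in a way unavailable to any single $\tau^{\otimes n}$, driving $\lim_n\tfrac1n D(\rho^{\otimes n}\|\C(T_n))$ strictly below $\inf_{\sigma\in T}D(\rho\|\sigma)$; a concrete such family is given in~\cite{berta_composite_2021}.
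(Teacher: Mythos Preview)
Your converse direction is essentially the paper's: it uses \autoref{lem:upper_bound_individual_mmt} together with \autoref{lem:wang_renner_upper_bound} exactly as you do. (Existence of all limits, which you propose to establish separately via approximate super-additivity and Fekete, actually falls out for free in the paper from sandwiching the converse $\limsup$ against the achievability $\liminf$; your route may well work but is unnecessary.)

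For the achievability direction, however, the paper (which obtains \autoref{thm:berta_composite} as the state-special case of \autoref{thm:parr}) follows a genuinely different and much more concrete route than your outline. It does \emph{not} attack the quantum composite test directly with nets and universal/permutation-covariant tests. Instead it fixes a block length $k$, a block input and a fixed block POVM $\M_k$, thereby reducing the $n$-block problem to a \emph{classical} adversarial hypothesis-testing problem between the measured sets $P=\{\M_k(\rho_k):\rho_k\in S_k\}$ and $Q=\{\M_k(\sigma_k):\sigma_k\in T_k\}$; the key engine is then the classical result \autoref{thm:adversarial}, which already handles the non-i.i.d./uniformity-over-$\C(T_n)$ issue in one stroke. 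After that, a minimax exchange (\cite[Lemma~13]{brandao_adversarial_2020} over measurements, then \autoref{prop:main_exchange} over states/channels) replaces the inner sup-over-POVMs by the measured relative entropy $D_M$, and \autoref{lem:inf_perm_covariant}, \autoref{lem:channel_div_perm_invariant} and \autoref{lem:measured_asmptotically_equal} pass from $D_M$ to the Umegaki $D$ using permutation invariance. The block length $k$ is sent to infinity at the very end.

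The substantive gap in your proposal is precisely at the point you yourself flag: you identify that $\C(T_n)$ is not closed under tensor products so the block-Stein test does not automatically cover all of $\C(T_n)$, and you then list three possible rescues (de~Finetti/permutation, a ``careful minimax'', or a Brand\~ao--Plenio-type argument) without committing to any of them or sketching why they close. That is the heart of the achievability proof, and as written it is a deferred difficulty rather than a plan. The paper's solution to exactly this obstacle is the classical-adversarial reduction just described: once you measure on each block, the worst case over $\C(T_n)$ is dominated by the adversarial worst case over $Q$, for which \autoref{thm:adversarial} gives the single-letter rate $\min_{p\in P,q\in Q}D(p\|q)$ directly, and only then does the minimax/permutation machinery enter to identify this with the regularized Umegaki expression. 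If you want to salvage your direct-quantum approach, the analogue of that step is the one you need to actually supply.

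For the strict-inequality claim, pointing to the explicit construction in~\cite{berta_composite_2021} is exactly what the paper does (cf.\ \autoref{remark:upper_bound_loose}, which also cites the discrete example in~\cite{mosonyi_error_2021}); your heuristic about non-commuting mixtures is correct but, as you say, not a proof on its own.
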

	Remember that $\C$ stands for the convex hull, and it is precisely this convex hull in the infimum on the right-hand side of \eqref{eq:berta_thm_main} which prevents the regularization from collapsing, as the elements of $S_n$ and $T_n$ are tensor products, and the relative entropy is additive. Without this convex hull, the exponent \eqref{eq:berta_thm_main} would be exactly equal to the single-letter expression \eqref{eq:berta_thm_collapse}, which we call the worst-case i.i.d. exponent, as it is equal to the exponent of the worst-case simple i.i.d. problem. Intuitively, one pays for the compositeness by having to include convex combinations in the Stein exponent, and this makes the discrimination problem strictly harder in some cases. 

	As a consequence of our channel discrimination result further below (\autoref{thm:parr}), we will arrive at this following generalization of \autoref{thm:berta_composite}:
	
	\begin{theorem}\label{thm:general_states}
		Let $\bm{S} = (S_n)_n, \bm{T} = (T_n)_n$ be two composite quantum state hypotheses. Then
		\begin{equation}
			\lim_{\varepsilon \to 0} \liminfsup_{n \to \infty} \frac{1}{n} D_H^{\varepsilon}(S_n\|T_n) = \lim_{n \to \infty} \frac{1}{n} \min_{\substack{\rho_n \in \C(\S_n) \\ \sigma_n \in \C(T_n)}} D(\rho_n \|\sigma_n) \,,
		\end{equation}
		where $\liminfsup$ can be $\liminf$ or $\limsup$ (see \autoref{def:liminfsup}).
		Furthermore, if each $S_n$ lies in the intersection of $\DM[\HS^{\otimes n}]$ with a linear subspace of $\BO[\HS^{\otimes n}]$ with dimension polynomial in $n$ (this holds for example in the composite i.i.d. case, where each $\rho_n \in S_n$ is permutation invariant), then we can remove the first convex hull to get
		\begin{equation}
			\lim_{\varepsilon \to 0} \liminfsup_{n \to \infty} \frac{1}{n} D^{\varepsilon}_H(S_n\|T_n) = \lim_{n \to \infty} \frac{1}{n} \min_{\substack{\rho_n \in \S_n \\ \sigma_n \in \C(T_n)}} D(\rho_n \|\sigma_n) \,.
		\end{equation}
		
	\end{theorem}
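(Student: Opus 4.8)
The plan is to obtain the first equality as a direct corollary of the parallel channel-discrimination theorem \autoref{thm:parr}, by encoding states as replacer channels, and then to deduce the second equality from the first by a Carathéodory argument combined with an elementary entropy estimate for the relative entropy. The substantive input is \autoref{thm:parr} itself, which I take as given.

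\emph{First equality.} Fix a (say one-dimensional) input system $A$, and for $\tau \in \DM[\HS^{\otimes n}]$ let $\mathcal{R}_\tau \colon A \to \HS^{\otimes n}$ be the channel that discards its input and prepares $\tau$. Setting $S_n^{\mathrm{ch}} \coloneqq \{ \mathcal{R}_{\rho_n} : \rho_n \in S_n\}$ and $T_n^{\mathrm{ch}} \coloneqq \{\mathcal{R}_{\sigma_n} : \sigma_n \in T_n\}$, one checks that $\bm{S}^{\mathrm{ch}} = (S_n^{\mathrm{ch}})_n$ and $\bm{T}^{\mathrm{ch}} = (T_n^{\mathrm{ch}})_n$ are composite channel hypotheses of the kind required by \autoref{thm:parr}: topological closedness is preserved since $\tau \mapsto \mathcal{R}_\tau$ is continuous, and the tensor-product, partial-trace-closure and permutation-invariance properties transfer verbatim from $\bm{S}, \bm{T}$ because $\mathcal{R}_{\tau \otimes \tau'} = \mathcal{R}_\tau \otimes \mathcal{R}_{\tau'}$ and tracing out a tensor factor of the output (together with the corresponding trivial input) corresponds to partial trace on $\tau$. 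The key observation is that applying $\mathcal{R}_{\rho_n}$ to any input $\omega_{RA^n}$ produces $\omega_R \otimes \rho_n$, so the reference register carries no information: for each fixed input the induced binary test has $D_H^{\varepsilon}$ exactly equal to $D_H^{\varepsilon}(S_n\|T_n)$ --- ``$\geq$'' uses the test $\IdentityMatrix_R \otimes M$, while ``$\leq$'' follows by replacing a test $M$ on $R \otimes \HS^{\otimes n}$ with $\Tr_R[(\omega_R^{1/2} \otimes \IdentityMatrix) M (\omega_R^{1/2} \otimes \IdentityMatrix)]$, which preserves both error probabilities. Hence the parallel channel Stein exponent of $\bm{S}^{\mathrm{ch}}$ versus $\bm{T}^{\mathrm{ch}}$ equals $\lim_{\varepsilon \to 0}\lim_{n\to\infty} \tfrac1n D_H^\varepsilon(S_n\|T_n)$. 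On the other hand $\C(S_n^{\mathrm{ch}}) = \{\mathcal{R}_{\tau_n} : \tau_n \in \C(S_n)\}$ and the channel divergence satisfies $D(\mathcal{R}_\tau \| \mathcal{R}_\kappa) = D(\tau\|\kappa)$ (since $(\id_R \otimes \mathcal{R}_\tau)(\omega) = \omega_R \otimes \tau$ and relative entropy is additive on tensor products), so the right-hand side in \autoref{thm:parr} becomes $\lim_{n\to\infty}\tfrac1n \min_{\rho_n \in \C(S_n),\, \sigma_n \in \C(T_n)} D(\rho_n\|\sigma_n)$; the minima are attained because $\C(S_n),\C(T_n)$ are compact and $D$ is lower semicontinuous. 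Combining the two sides gives the first equality.

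\emph{Second equality.} It remains to show that, under the polynomial-dimension hypothesis, the first convex hull may be dropped without changing the regularized quantity. Since $S_n$ lies in a real subspace of $\DM[\HS^{\otimes n}]$ of dimension $d_n = \mathrm{poly}(n)$, Carathéodory's theorem writes any $\rho_n \in \C(S_n)$ as $\rho_n = \sum_{j=1}^{d_n + 1} p_j \rho_n^{(j)}$ with $\rho_n^{(j)} \in S_n$. The estimate I would use is
\begin{equation}
  \sum_j p_j\, D(\rho_n^{(j)}\|\sigma_n) \;\leq\; D(\rho_n\|\sigma_n) + H(\{p_j\}),
\end{equation}
which follows from $p_j \rho_n^{(j)} \leq \rho_n$: monotonicity of the relative entropy in its second argument gives $D(\rho_n^{(j)}\|\rho_n) \leq D(\rho_n^{(j)}\|p_j\rho_n^{(j)}) = -\log p_j$, and substituting this into the identity $D(\rho_n\|\sigma_n) = \sum_j p_j \Tr(\rho_n^{(j)}(\log\rho_n - \log\sigma_n))$ yields the claim. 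Since $\min_j D(\rho_n^{(j)}\|\sigma_n) \leq \sum_j p_j D(\rho_n^{(j)}\|\sigma_n)$ and $H(\{p_j\}) \leq \log(d_n + 1)$, applying this to a minimizer of the right-hand side of the first formula gives $\min_{\rho_n \in S_n,\, \sigma_n \in \C(T_n)} D(\rho_n\|\sigma_n) \leq \min_{\rho_n \in \C(S_n),\, \sigma_n \in \C(T_n)} D(\rho_n\|\sigma_n) + \log(d_n+1)$, while the reverse inequality (without the $\log$ term) is immediate from $S_n \subseteq \C(S_n)$. Dividing by $n$ and using $\log(d_n+1) = O(\log n) = o(n)$ squeezes the $S_n$-version between two sequences with the common limit from the first part, which is the second equality. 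Modulo \autoref{thm:parr}, the only steps needing care are the faithful transfer of the error quantities under the replacer-channel encoding, and the dimension count in the second part --- and it is precisely the polynomiality of $d_n$ (which in the composite i.i.d.\ case comes from permutation invariance) that makes the entropy penalty negligible in the limit, explaining why the second equality requires that extra assumption.
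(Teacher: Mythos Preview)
Your proof is correct and follows the same route as the paper, which simply states that the theorem is a special case of \autoref{thm:parr}. Your replacer-channel encoding is the natural way to realize this specialisation, and for the second equality you essentially reprove \autoref{lem:entropy_convex_caratheodory} inline rather than invoking the second part of \autoref{thm:parr} (which itself rests on that lemma), so there is no substantive difference.
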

	\begin{proof}
		This follows as a special case of \autoref{thm:parr} below. 
	\end{proof}
	\begin{remark}
		Our \autoref{thm:general_states} generalizes the previous \autoref{thm:berta_composite} in multiple ways: Already in the composite i.i.d. setting it no longer requires the sets $\S$ and $\T$ to be convex. Additionally our theorem also includes all the non-i.i.d. cases such as the arbitrarily (or slightly varying) cases defined above. 
	\end{remark}
	
	\subsection{Classical Adversarial Hypothesis Testing}
	Similar to~\cite{brandao_adversarial_2020, berta_composite_2021} our results are based on a reduction to a classical problem, the one of adversarial hypothesis testing. The following is a brief recapitulation of the treatment of adversarial hypothesis testing in~\cite{brandao_adversarial_2020}. 
	Let $P, Q \subset \reals^{\Omega}$ (for a finite domain $\Omega$) be two sets of probability distributions. In the typical composite i.i.d.~setting, we are presented with $n$ samples from a distribution in $P$ or $Q$ and have to make a decision which set the distribution comes from. In the adversarial setting, the adversary is allowed to change the distribution within $P$ or $Q$ for each sample, and he can make this change based on the samples we observed previously. Note that while the adversary has access to the previous samples, he can only select a probability distribution $p \in P$ or $q \in Q$ (depending on which hypothesis is true) for the next sample, but he cannot select the sample outcome itself. The adversary is fully specified by two sets of functions $\hat{p}_k : \Omega^{k-1} \to P$ and $\hat{q}_k: \Omega^{k-1} \to Q$, which for each $k$ specify how the adversary picks the next probability distribution based on the previous $k-1$ sample outcomes. The two hypotheses then correspond to whether the adversary uses $\hat{p}_k$, and hence always chooses a probability distribution in $P$, or $\hat{q}_k$ and always chooses a probability distribution in $Q$. If the null hypothesis is true (i.e. the adversary uses $\hat{p}_k$), the probability of a sample string $\mathbf{x} \in \Omega^n$ is then given by
	\begin{equation}
		\hat{p}(\mathbf{x}) \coloneqq \prod_{k = 1}^{n} \hat{p}_k(x_1, \ldots, x_{k-1})(x_k)\, ,
	\end{equation} and we define $\hat{q}(\mathbf{x})$ in a similar manner.
	For any decision region $A_n \subset \Omega^n$, the type I and type II errors are then going to be the worst-case errors over all adversarial strategies. We define the corresponding $n$-shot error exponent as	 
	\begin{equation}
		D_{\mathrm{adv}, n}^{\varepsilon}(P\|Q) = - \log \inf\Set{\sup_{\hat{q}} \hat{q}(A_n) | A_n \subset \Omega^n, \; \sup_{\hat{p}} \hat{p}(A_n^{c}) \leq \varepsilon}
	\end{equation}
	The key statement of~\cite{brandao_adversarial_2020} is that if the sets $P$ and $Q$ are closed and convex, adversarial hypothesis testing is asymptotically no harder than the worst-case i.i.d. setting, specifically:
	
	\begin{theorem}[{\cite[Theorem 2]{brandao_adversarial_2020}}]\label{thm:adversarial}
		Let $\Omega$ be a finite domain and $P, Q \subset \reals^{\Omega}$ be two closed, convex sets of probability distributions. Then, for any $\varepsilon \in (0,1)$:
		\begin{equation}
			\lim_{n \to \infty} {1 \over n} D^ε_{\mathrm{adv}, n}(P\|Q)  = \min_{p \in P, q \in Q} D(p \|q)\,.
		\end{equation}		
	\end{theorem}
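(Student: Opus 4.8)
The plan is to prove the two inequalities separately; the upper bound on the adversarial exponent is essentially immediate, while the lower bound needs a carefully chosen type test together with a change-of-measure estimate exploiting convexity. For the converse $\le$, I would note that the adversary may always discard its adaptivity: let $(p^*,q^*)$ attain $\min_{p\in P,\,q\in Q}D(p\|q)$ and consider the non-adaptive adversary that outputs $p^*$ at every round under $H_0$ and $q^*$ at every round under $H_1$. Any region $A_n$ feasible for the adversarial problem, i.e.\ with $\sup_{\hat p}\hat p(A_n^c)\le\varepsilon$, in particular satisfies $(p^*)^{\otimes n}(A_n^c)\le\varepsilon$, and for any such region $\sup_{\hat q}\hat q(A_n)\ge (q^*)^{\otimes n}(A_n)$. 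Minimising over feasible $A_n$ gives $D^{\varepsilon}_{\mathrm{adv},n}(P\|Q)\le D_H^{\varepsilon}\!\big((p^*)^{\otimes n}\,\|\,(q^*)^{\otimes n}\big)$, and the ordinary classical Stein lemma yields $\tfrac1n D_H^{\varepsilon}\!\big((p^*)^{\otimes n}\,\|\,(q^*)^{\otimes n}\big)\to D(p^*\|q^*)=\min_{p,q}D(p\|q)$.

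For the achievability $\ge$, I would fix a small $\delta>0$, set $P_\delta:=\{t:\exists\,p\in P,\ \|t-p\|_1\le\delta\}$, and use the type test that accepts $H_0$ (puts $\mathbf x\in A_n$) precisely when the empirical distribution $\hat P_{\mathbf x}$ of the sample string lies in $P_\delta$. For the type I error: under any adaptive $\hat p$, the $\reals^{\Omega}$-valued process $M_j=\sum_{k\le j}\big(e_{x_k}-\hat p_k(x_{<k})\big)$ (with $e_\omega$ the indicator vector of $\omega$) is a martingale whose increments are bounded coordinatewise in $[-1,1]$, so Azuma--Hoeffding and a union bound over $\Omega$ give $\|M_n\|_\infty\le\eta n$ with probability at least $1-2|\Omega|e^{-\eta^2 n/2}$; on that event $\hat P_{\mathbf x}$ is within $\ell_1$-distance $|\Omega|\eta$ of the convex combination $\tfrac1n\sum_k\hat p_k(x_{<k})\in P$, so taking $\eta=\delta/|\Omega|$ forces $\sup_{\hat p}\hat p(A_n^c)\le 2|\Omega|e^{-c_\delta n}\to0<\varepsilon$ for large $n$. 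This is the one place convexity of $P$ is used: the adaptively averaged mean stays inside $P$.

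For the type II error I would, for each adaptive $\hat q$ (every $\hat q_k$ valued in $Q$) and each fixed type $t\in P_\delta$, estimate the hitting probability by the Chernoff method: for all $\lambda\in\reals^{\Omega}$,
\[
\Pr_{\hat q}\big[\hat P_{\mathbf x}=t\big]\ \le\ 2^{-n\langle\lambda,t\rangle}\,\mathbb E_{\hat q}\Big[\textstyle\prod_{k=1}^{n}2^{\lambda(x_k)}\Big]\ \le\ 2^{-n\langle\lambda,t\rangle}\Big(\sup_{q\in Q}\mathbb E_q\big[2^{\lambda}\big]\Big)^{\!n},
\]
the last step following by peeling off the coordinates one at a time, since $\mathbb E[2^{\lambda(x_k)}\mid x_{<k}]=\sum_\omega\hat q_k(x_{<k})(\omega)2^{\lambda(\omega)}\le\sup_{q\in Q}\mathbb E_q[2^{\lambda}]$ because $\hat q_k(x_{<k})\in Q$. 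Optimising over $\lambda$, then exchanging $\inf_\lambda\sup_{q\in Q}$ via Sion's minimax theorem — the exponent $\log\mathbb E_q[2^{\lambda}]-\langle\lambda,t\rangle$ is convex in $\lambda$ and concave in $q$, and $Q$ is convex and compact — and using the Donsker--Varadhan identity $D(t\|q)=\sup_\lambda\big(\langle\lambda,t\rangle-\log\mathbb E_q[2^{\lambda}]\big)$ converts this into $\Pr_{\hat q}[\hat P_{\mathbf x}=t]\le 2^{-n\min_{q\in Q}D(t\|q)}$. Summing over the at most $(n+1)^{|\Omega|}$ types in $P_\delta$ gives $\sup_{\hat q}\hat q(A_n)\le (n+1)^{|\Omega|}2^{-n\min_{t\in P_\delta,\,q\in Q}D(t\|q)}$, so $\liminf_n\tfrac1n D^{\varepsilon}_{\mathrm{adv},n}(P\|Q)\ge\min_{t\in P_\delta,\,q\in Q}D(t\|q)$. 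Finally, letting $\delta\to0$ and using lower semicontinuity of $D$ and compactness of the simplex, the right-hand side converges to $\min_{p\in P,\,q\in Q}D(p\|q)$, matching the converse.

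The hard part will be the type II estimate: the whole content of the theorem is that an adaptive adversary confined to the convex set $Q$ cannot drive the empirical type toward a target $t$ any faster than the single worst-case large-deviation rate $\min_{q\in Q}D(t\|q)$, and convexity of $Q$ enters there precisely through Sion's minimax. Its failure for non-convex sets is exactly what leaves room for an adaptive advantage, consistent with the later examples in the paper.
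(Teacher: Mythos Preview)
The paper does not prove this statement; it is quoted from \cite[Theorem~2]{brandao_adversarial_2020} and used as a black box in the achievability argument of Theorem~\ref{thm:parr}, so there is nothing in the present paper to compare your attempt against.

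That said, your proof is correct and is essentially the argument of the cited reference. The converse via restricting the adversary to the i.i.d.\ pair $(p^*,q^*)$ and invoking the ordinary Stein lemma is the standard one. For achievability, your empirical-type test with Azuma--Hoeffding controlling the type~I error (convexity of $P$ ensuring that the running average $\tfrac1n\sum_k\hat p_k(x_{<k})$ stays in $P$) is exactly the mechanism used there. Your type~II analysis --- the Chernoff tilt, the one-step peeling $\mathbb E[2^{\lambda(x_k)}\mid x_{<k}]\le\sup_{q\in Q}\mathbb E_q[2^{\lambda}]$, and then Sion's minimax on the convex--concave map $(\lambda,q)\mapsto \log\mathbb E_q[2^{\lambda}]-\langle\lambda,t\rangle$ together with Donsker--Varadhan to identify the resulting exponent with $\min_{q\in Q}D(t\|q)$ --- is precisely where convexity and compactness of $Q$ enter, and the polynomial type-counting and $\delta\to0$ lower-semicontinuity closure are handled correctly. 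Your closing remark that the Sion step is exactly what fails for non-convex $Q$, leaving room for the adaptive advantage exhibited in Example~\ref{example:adaptive_advantage}, is also on point.
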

    Note that since we are taking the supremum over all adversaries, by picking an adversary that deterministically picks states in a certain sequence, this result implies that also any composite problem is classically asymptotically equally as hard as the worst-case i.i.d. problem (it is also easy to see that the composite problem cannot be simpler than the worst-case i.i.d. problem).

	\section{Composite Channel Discrimination}
		The task of composite channel discrimination is very similar in nature to composite state discrimination, but also considerably harder to study: Given an unknown quantum channel as a black box and the side information that it comes from two sets of possible channels, the task is again to determine the set (but not necessarily the exact identity) of the channel.  
		The additional complexity here comes from the fact that, on top of finding the best measurement to perform on the output of the channel, we also have to figure out which quantum states to send as inputs to the channel. 
  
        If we are given access to the black box multiple times (or say we are given multiple black-boxes) the problem becomes considerably more interesting, as the channel inputs could be chosen based on previous channel outputs. Say we are given access to $n$ black-boxes (we will allow for the case where not all black-boxes are identical and will specify further below what scenarios exactly we consider, intuitively though the scenario is always that we want to distinguish $n$ black-boxes from one set to $n$ black-boxes from another set). There are now different strategies (sometimes also called protocols) in which we could set up our decision experiment -- the so-called {\em{parallel}} and {\em{adaptive}} strategies.

		In a parallel strategy one prepares a joint state, usually entangled between the input systems of all the $n$ channels and an additional reference (or memory) system. This state is then fed as input to all the $n$ channels at once (with the state of the reference system being left undisturbed). Finally, a binary positive operator-valued measure (POVM) is performed on the joint state at the output of the channels and the reference system in order to arrive at a decision.
		In an adaptive strategy, on the other hand, one prepares a state of the input system of a single channel (again usually entangled with a reference system) which is fed into the first channel, with the state of the reference system being left undisturbed. The input to the next use of the channel is then chosen depending on the output of the first channel and the state of our reference system. This is done, most generally, by subjecting the latter to an arbitrary quantum operation (or channel), which we call a preparation operation. This step is repeated for each successive black-box channel until all the $n$ black-boxes have been used.
		Then a binary POVM is performed on the joint state of the output of the last use of the channel and the reference system. See \autoref{fig:adaptive_strategy} for a depiction of an adaptive strategy. Adaptive strategies are also sometimes called sequential, which is, however, not to be confused with the setting of sequential hypothesis testing~\cite{martinez_vargas_quantum_2021, li_optimal_2022, li_sequential_2022}, where samples (i.e.~states or channels) can be requested one by one. 
		
		One particularly interesting question is whether and to what degree adaptive strategies give an advantage over parallel ones. Note that any parallel strategy can be written as an adaptive strategy by taking all but one channel input as part of the reference system, and then choosing each preparation operation such that it extracts the next part of the joint input state for the next channel use and replaces it by the output of the previous channel use. However, the converse is not true, and so adaptive strategies are more general. Parallel strategies are conceptually a lot simpler than adaptive ones -- aside from the measurement, everything is specified just by the joint input state -- in contrast to adaptive strategies, in which after each channel use we can perform an arbitrary quantum operation to prepare the input to the next use of the channel. It is thus interesting to determine to what degree parallel strategies can still be optimal.
		
		This problem has been studied extensively for channel discrimination with simple hypotheses, where it is known that in certain cases adaptive strategies can give an advantage over parallel ones. In~\cite{harrow_adaptive_2010} the authors constructed an example in which an adaptive strategy with only two channel uses could be used to discriminate two channels with certainty, which is shown not to be possible with a parallel strategy, even if arbitrarily many channel uses are allowed. Asymptotically, however, it was shown that in the simple binary asymmetric case adaptive and parallel strategies are equivalent~\cite{wang_resource_2019, fang_chain_2020, wilde_amortized_2020}. We will show below that this fails to stay the case with composite hypotheses, already classically. 
		
		Specifically, in this section we will study the following:
		1. We start with a treatment of parallel channel discrimination strategies, where we provide matching achievability and converse bounds for the Stein exponent in terms of a regularized expression (\autoref{thm:parr}), in analogy to what has previously been shown~\cite{berta_composite_2021} for state discrimination (i.e.~\autoref{thm:berta_composite}).  
		2. We prove an upper bound on the Stein exponent for adaptive strategies (\autoref{prop:adaptive_upper_bound}), where we show that this upper bound can sometimes but not always be achieved, and can also be larger than the parallel exponent (\autoref{example:adaptive_advantage}), hence demonstrating that adaptive strategies can sometimes be advantageous (we show this even classically). 
		3. We show that classically, under an additional convexity assumption which was not satisfied in the previous example, parallel and adaptive strategies are asymptotically equivalent in the asymmetric composite setting, and the Stein exponent is given by a single-letter entropic formula (\autoref{thm:classical_equality}). 
        4. We further show classically, and in some further restricted setting, that if we replace the convexity assumption with a finiteness assumption, we can still get a single-letter entropic expression for the Stein exponent for parallel strategies (\autoref{prop:classical_finite}).

	Following the above discussion for composite state discrimination, we want to apply a similar level of generality to discriminating channels, where we want to allow the $n$ black-boxes not be identical. Hence, in analogy with \autoref{def:composite_state_hypothesis} we will work with general hypotheses satisfying the following conditions:
	
	\begin{definition}\label{def:composite_channel_hypothesis}
		For the purpose of this work, a composite quantum channel hypothesis (in the asymptotic setting) is a sequence of sets of channels $$\bm{\S} = (\S_n \subset \cptp[A^n \to B^n])_n$$ such that
		\begin{enumerate}
			\item Each set $\S_n$ is topologically closed.
			\item Each element $\E_n \in \S_n$ is a tensor product of channels $\E_n = \E^{(1)} \otimes \ldots \otimes \E^{(n)}$, with $\E^{(i)} \in \cptp[A \to B]$, for $i = 1, ..., n$. 
			\item For every $\E_n = \E^{(1)} \otimes \ldots \otimes \E^{(n)} \in \S_n$, removing any element in the tensor product (i.e. discarding one of the $n$ provided channels) yields an element in $\S_{n-1}$.
			\item Each set $\S_n$ is closed under permuting the $n$ subsystems of the input and output systems of a channel, i.e.~for any permutation $\pi \in \Sn$ and associated canonical unitary representations $P_A(\pi)$ and $P_B(\pi)$ on $A^n$ and $B^n$, we have for all $\E_n \in \S_n$ that also the permuted channel $\rho \mapsto P_B(\pi) \E_n(P_A(\pi) \rho P_A(\pi)^\dagger) P_B(\pi)^\dagger$ is an element of $\S_n$. 
		\end{enumerate}
		
	\end{definition}

 One can then define the same scenarios, such as the composite i.i.d. setting, the arbitrarily varying setting, and slightly varying settings, as we did for composite state discrimination (below \autoref{def:composite_state_hypothesis}) in a completely analogous way for composite channel discrimination. Note that since the $n$ channels one receives in the composite hypothesis testing problem need not all be equivalent, one might think that (in particular in an adaptive strategy) one might want to order the channels in a certain way, however it is not hard to see that this does not give any advantage, and so we can restrict to strategies that just take the channels in the order they are given. To see this, note that we assumed the sets of channels to be closed under permutations and we are looking at worst-case error probabilities. Hence, for every reordering one would perform for a given sequence of channels, there exists another sequence in the set that inverts this reordering, and hence in the worst-case one cannot gain anything.

	\subsection{The Parallel Case}
	Given a set of channels $\mathcal{A}$ and an input state $\nu \in \DM[RA]$ (where $R$ could be any system, possibly also just trivial), we define the set of all output states as
	\begin{equation}
		\mathcal{A}[\nu] \coloneqq \Set{(\id_R \otimes \E)(\nu) | \E \in \mathcal{A}}\,.
	\end{equation}
	Since we want to be looking at worst-case errors again (as introduced for composite state discrimination in \Cref{sec:composite-state-discrimination}), we will be looking for the best input state $\nu_n$ and measurement $M$, such that \emph{for all} $\E_n \in \S_n$ the error of claiming it coming from $\T_n$ (i.e.~the type I error) stays below some threshold $\varepsilon$ and we otherwise minimize the worst case type II error, i.e.~we want to make sure that the probability of claiming an element $\F_n \in \T_n$ to be from $\S_n$ is as low as possible uniformly over all $\F_n \in \T_n$. 
	Given a joint input state $\nu$, the parallel channel discrimination problem turns into a state discrimination problem, and so we define the following type II error exponent for any $\S_n$ and $\T_n$ which satisfy the properties of \autoref{def:composite_channel_hypothesis}:
	\begin{align}\label{eq:parr_exponent_dh}
		D_H^{\varepsilon}(\S_n\|\T_n) &\coloneqq \sup_{\nu \in \DM[R A]} D^{\varepsilon}_H(\S_n[\nu]\|\T_n[\nu]) = \sup_{\nu \in \DM[R A]} \sup_{\substack{0 \leq M \leq \IdentityMatrix \\ \alpha(M, \S_n[\nu]) \leq \varepsilon}} (- \log \beta(M, \T_n[\nu])) \\ 
		e_P(n, \varepsilon, \S_n, \T_n) &\coloneqq \frac{1}{n} D_H^{\varepsilon}(\S_n \|\T_n) \,.
	\end{align}	
	It is easy to see that $\C(\mathcal{A}[\nu]) = \C(\mathcal{A})[\nu]$, and hence (as above) for any two sets of channels $\S, \T$:
	\begin{equation}\label{eq:add_convex_hull_channels}
		D_H^{\varepsilon}(\S\|\T)  = D_H^{\varepsilon}(\S\|\C(\T)) = D_H^{\varepsilon}(\C(\S)\|\T) = D_H^{\varepsilon}(\C(\S)\|\C(\T))\, .
	\end{equation}
	Our main theorem of this section is the following:
	\begin{theorem}\label{thm:parr}
		Let $\bm{\S} = (\S_n)_n, \bm{\T} = (\T_n)_n$ be two composite quantum channel hypotheses (as defined in \autoref{def:composite_channel_hypothesis}). Then, the quantum Stein exponent of discriminating these two hypotheses with a parallel strategy is given by:
		\begin{equation}
			\lim_{\varepsilon \to 0} \liminfsup_{n \to \infty} \frac{1}{n} D_H^{\varepsilon}(\S_n\|\T_n) = \lim_{n \to \infty} \frac{1}{n}  \min_{\substack{\E_n \in \C(\S_n)\\ \F_n \in \conv(\T_n)}} \max_{\nu \in \DM[R \otimes A^{\otimes n}]} D(\E_n(\nu) \| \F_n(\nu))\,,
		\end{equation}
		where $\liminfsup$ can be $\liminf$ or $\limsup$ (see \autoref{def:liminfsup}). Additionally, on the right-hand side the $\min$ and $\max$ can be exchanged, and one can choose the reference system $R$ to be isomorphic to $A^{\otimes n}$ for all $n$.      
		
		Furthermore, if each $\S_n$ lies in the intersection of $\cptp[A^n \to B^n]$ with a linear subspace, with dimension polynomial in $n$, of the space of linear maps $A^n \to B^n$  (this is for example the case in the composite i.i.d. setting, where all the $\E_n \in \S_n$ are permutation covariant), we can also remove one convex hull:
		\begin{equation}
			\lim_{\varepsilon \to 0} \liminfsup_{n \to \infty} \frac{1}{n} D_H^{\varepsilon}(\S_n\|\T_n) = \lim_{n \to \infty} \frac{1}{n}  \max_{\nu \in \DM[R \otimes A^{\otimes n}]} \min_{\substack{\E_n \in \S_n\\ \F_n \in \conv(\T_n)}}  D(\E_n(\nu) \| \F_n(\nu))\,,
		\end{equation}
		where we however cannot say whether $\min$ and $\max$ can be exchanged.     
	\end{theorem}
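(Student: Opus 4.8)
I would follow the strategy of \cite{brandao_adversarial_2020, berta_composite_2021} and reduce the problem to the classical adversarial hypothesis testing result \autoref{thm:adversarial}; \autoref{thm:general_states} then drops out as the special case in which the input system $A$ is trivial, so that channels $A^{\otimes n}\to B^{\otimes n}$ are states in $\DM[B^{\otimes n}]$ and a parallel strategy is simply a measurement. Throughout one uses that $\S_n$, and hence by Carathéodory $\C(\S_n)$, and $\C(\T_n)$ are compact convex subsets of the compact set $\cptp[A^n\to B^n]$; that $(\E_n,\F_n)\mapsto D(\E_n(\nu)\|\F_n(\nu))$ is jointly convex (jointly convex in the output states, affine in the channels) and $\nu\mapsto D(\E_n(\nu)\|\F_n(\nu))$ is lower semicontinuous, so that $\max_\nu D(\E_n(\nu)\|\F_n(\nu))=\D(\E_n\|\F_n)$ is a convex lower semicontinuous function of $(\E_n,\F_n)$ and the outer minimum is attained; and that by \autoref{lem:restrict_reference_size} the input $\nu$ may be taken pure with $R\cong A^{\otimes n}$.

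\textbf{Converse.} Fix $n$, $\varepsilon$ and an input $\nu$. Since enlarging either argument only makes discrimination harder, and $\alpha,\beta$ are unchanged under convex hulls (cf.\ \eqref{eq:add_convex_hull_channels}), for every $\E_n\in\C(\S_n)$ and $\F_n\in\C(\T_n)$ one has $D_H^\varepsilon(\S_n[\nu]\|\T_n[\nu])\leq D_H^\varepsilon(\E_n(\nu)\|\F_n(\nu))$. Applying the standard one-shot bound $D_H^\varepsilon(\rho\|\sigma)\leq(1-\varepsilon)^{-1}(D(\rho\|\sigma)+h_2(\varepsilon))$ (with $h_2$ the binary entropy), taking the infimum over $\E_n,\F_n$ and then the supremum over $\nu$ yields
\begin{equation}
\frac1n D_H^\varepsilon(\S_n\|\T_n)\;\leq\;\frac{1}{1-\varepsilon}\left(\frac1n\max_{\nu}\min_{\E_n\in\C(\S_n),\,\F_n\in\C(\T_n)}D(\E_n(\nu)\|\F_n(\nu))+\frac{h_2(\varepsilon)}{n}\right),
\end{equation}
so the right-hand side involves the \emph{maximin}. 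Letting $n\to\infty$ then $\varepsilon\to0$ bounds the Stein exponent above by $\liminf_n\tfrac1n$ of the maximin expression.

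\textbf{Achievability.} I would fix a block length $m$, a near-optimal input $\nu_m$ for the $m$-block maximin, feed $\nu_m^{\otimes k}$ into the $n=km$ channels, and use the tensor structure of \autoref{def:composite_channel_hypothesis}: the true channel $\E_n\in\S_n$ restricted to the $j$-th block lies in $\S_m$, and different blocks may be different elements. Measuring each block with one fixed POVM $\M$, the $k$ classical outcomes form an instance of adversarial hypothesis testing between the closed convex classical sets $P\coloneqq\C(\M(\S_m[\nu_m]))$ and $Q\coloneqq\C(\M(\T_m[\nu_m]))$ (the actual adversary being non-adaptive, hence weaker than in \autoref{thm:adversarial}), so \autoref{thm:adversarial} produces a test with vanishing type I error and type II exponent per channel use $\tfrac1m\min_{p\in P,q\in Q}D(p\|q)$. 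The crux is the choice of $\M$: one needs a single measurement achieving, uniformly over the convex family of pairs, $\min_{p\in P,q\in Q}D(p\|q)\geq\min_{\E_m\in\C(\S_m),\F_m\in\C(\T_m)}D_M(\E_m(\nu_m)\|\F_m(\nu_m))-\delta$, which I would obtain from a Sion minimax argument applied to the variational (``$\omega$'') formula for the measured relative entropy (concave in $\omega$, convex in $(\E_m,\F_m)$), exactly as in \cite{brandao_adversarial_2020}. Passing from $D_M$ to $D$ is then done by grouping $\ell$ blocks and invoking the Hiai--Petz-type regularization of the measured relative entropy in its composite, convex-set form, so that over all block lengths achievability delivers a lower bound on the Stein exponent equal to $\sup_m\tfrac1m\max_\nu\min_{\E_m\in\C(\S_m),\F_m\in\C(\T_m)}D(\E_m(\nu)\|\F_m(\nu))$.

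\textbf{Assembling the pieces.} Achievability gives Stein exponent $\geq\sup_m\tfrac1m(\text{maximin})$ and the converse gives $\leq\liminf_n\tfrac1n(\text{maximin})$; since trivially $\limsup_n\tfrac1n(\text{maximin})\leq\sup_m\tfrac1m(\text{maximin})$, all of these coincide, so $\lim_n\tfrac1n(\text{maximin})$ exists and equals the Stein exponent. The minimax form in the statement then follows since the regularized maximin equals the regularized minimax: ``$\leq$'' is immediate, and for ``$\geq$'' I would use the one-shot identity $D_H^\varepsilon(\widetilde P\|\widetilde Q)=\min_{\sigma\in\widetilde Q}D_H^\varepsilon(\widetilde P\|\sigma)$ (a Sion argument on the bilinear $\tr(M\sigma)$, valid because $\widetilde Q$ is convex) to pin the alternative to a single state before running the reduction. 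Finally, the ``remove one convex hull'' statement is a dimension count: if $\S_n$ spans a subspace of dimension $\mathrm{poly}(n)$ then by Carathéodory every $\E_n\in\C(\S_n)$ is a mixture of $O(\mathrm{poly}(n))$ elements of $\S_n$, and the quasi-convexity estimate $D(\sum_i p_i\rho_i\|\sigma)\geq\sum_i p_i D(\rho_i\|\sigma)-H(\{p_i\})$ bounds the gap between the $\C(\S_n)$- and $\S_n$-minima by $O(\log n)$, which the factor $1/n$ kills; that this replacement destroys the convexity in the channel argument is precisely why $\min$ and $\max$ can no longer be swapped in that form. I expect the main obstacle to be the uniform measured-relative-entropy step — producing one measurement simultaneously near-optimal over the whole convex family $\C(\S_m[\nu_m])\times\C(\T_m[\nu_m])$ so that the classical sets $P,Q$ are no further apart than the quantum ones — together with controlling how the tensor-power/grouping construction interacts with the convex hulls, which (unlike for simple i.i.d.\ hypotheses) is exactly what prevents the regularization from collapsing to a single-letter formula.
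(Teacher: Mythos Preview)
Your overall architecture matches the paper's: reduce to classical adversarial testing via a fixed block length and a fixed measurement, use a Sion-type swap to select one measurement uniformly near-optimal over the convex pair family (this is exactly \cite[Lemma~13]{brandao_adversarial_2020}), and then sandwich achievability and converse to establish existence of the limit. Your converse and your Carath\'eodory argument for removing one convex hull are the same as the paper's.

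The genuine gap is the step you only partially flag: passing from $D_M$ to $D$. Your proposed route---``grouping $\ell$ blocks and invoking the Hiai--Petz-type regularization of the measured relative entropy in its composite, convex-set form''---is not a known result and, as stated, does not work: once you group $\ell$ blocks, the minimum runs over $\C(\S_{m\ell})\times\C(\T_{m\ell})$, whose elements applied to $\nu_m^{\otimes\ell}$ are \emph{not} tensor powers of anything (convex hulls do not commute with tensor powers), so there is no Hiai--Petz limit to invoke. The paper does this differently and without further blocking: since $\S_k,\T_k$ are closed under permutations, so are $\C(\S_k),\C(\T_k)$; by convexity and data processing the infimum over $\C(\S_k)\times\C(\T_k)$ is attained at \emph{permutation covariant} channels, and the supremum over inputs at a \emph{permutation invariant} $\nu_k$ (\autoref{lem:inf_perm_covariant} and \autoref{lem:channel_div_perm_invariant}). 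Then $\F_k(\nu_k)$ is permutation invariant, and one applies the pinching bound $D_M(\rho_k\|\sigma_k)\geq D(\rho_k\|\sigma_k)-\log\mathrm{poly}(k)$ valid for permutation-invariant $\sigma_k$ (\autoref{lem:measured_asmptotically_equal}), which the factor $1/k$ kills. This permutation-invariance reduction is the idea missing from your sketch.

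A secondary point: the paper does not obtain the $\min$--$\max$ equality via the one-shot identity $D_H^\varepsilon(\widetilde P\|\widetilde Q)=\min_{\sigma\in\widetilde Q}D_H^\varepsilon(\widetilde P\|\sigma)$ as you suggest---it is not clear how pinning a single $\sigma$ at the one-shot level yields the regularized identity, since the achievability reduction still needs the full set $\T_n$ to run the adversarial argument across blocks. Instead the paper applies a direct Sion argument (\autoref{prop:main_exchange}) to $(\E_k,\F_k,\nu_k)\mapsto \D(\E_k(\nu_k)\|\F_k(\nu_k))$ for $\D\in\{D,D_M\}$, using joint convexity in the channel pair and the direct-sum property to concavify in $\nu$; this gives $\max_\nu\min_{\E,\F}=\min_{\E,\F}\max_\nu$ at every finite $k$, and is invoked both in the achievability and in the converse.
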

	
	\begin{proof}
		This proof is very much inspired by the results for composite state discrimination from~\cite[Theroem 1.1]{berta_composite_2021} and~\cite[Theorem 16]{brandao_adversarial_2020}.
		\paragraph{Achievability}
		For the achievability part, let $\varepsilon \in (0,1)$, fix an integer $k$, and let $\nu_k \in \DM[RA^k]$ be an input state, where $R$ is isomorphic to $A^k$. Additionally, let $\M_k$ be a POVM measurement on $R B^k$ (where we interpret $\M_k$ as a quantum-classical channel that maps to the probability distribution of measurement outcomes). Define the two sets of classical probability distributions $P \coloneqq \Set{\M_k(\E_k(\nu_k))|\E_k \in \S_k}$ and $Q \coloneqq \Set{\M_k(\F_k(\nu_k)) |\F_k \in \T_k}$. The operational procedure is now to take an unknown channel from either $\S_{nk}$ or $\T_{nk}$, feed it with the input state $\nu_k^{\otimes n}$ and apply the measurement $\M_k^{\otimes n}$ to the outcome. Crucially, due to the assumed structure of the $(\S_n)_n$ and $(\T_n)_n$ (as specified in \autoref{def:composite_channel_hypothesis}), the measurement result of each of the individual $n$ POVM measurements will be distributed according to a $p \in P$ or $q \in Q$. Hence, the overall structure of classical outcomes can be seen as an instance of adversarial hypothesis testing with a particular adversary\footnote{In fact, this problem can also be seen to be at most as hard as a composite hypothesis testing task in the arbitrarily varying case, and a similar statement as \autoref{thm:adversarial} for this composite arbitrarily varying task would be sufficient for our purposes.}. For this classical problem, by \autoref{thm:adversarial}, the exponent 
		\begin{equation}
			\inf_{p \in P, q \in Q} D(p\|q)
		\end{equation}
		is asymptotically achievable as $n \to \infty$, which just means that
		\begin{equation}
			\liminf_{n \to \infty} \frac{1}{n} D^{\varepsilon}_H(\S_{nk}\|\T_{nk}) \geq \inf_{p \in P, q \in Q} D(p\|q) = \inf_{\substack{\rho_k \in \S_k[\nu_k]\\ \sigma_k \in \T_k[\nu_k]}} D(\M_k(\rho_k)\|\M_k(\sigma_k))\,,
		\end{equation}
		where dividing by $k$ yields:
		\begin{equation}
			\liminf_{n \to \infty} \frac{1}{nk} D^{\varepsilon}_H(\S_{nk}\|\T_{nk}) \geq {1 \over k} \inf_{\substack{\rho_k \in \S_k[\nu_k]\\ \sigma_k \in \T_k[\nu_k]}} D(\M_k(\rho_k)\|\M_k(\sigma_k))\,.
		\end{equation}
		
		Now, to obtain a procedure for discriminating $m$ channels where $m$ is not a multiple of $k$, we can just ignore at most $k-1$ channels so that we are left with a multiple of $k$ channels and then do the above. This yields a strategy to distinguish $\S_m$ and $\T_m$ for any $m$ and asymptotically the $k-1$ discarded channels do not matter, so we get:
		\begin{equation}
			\liminf_{m \to \infty} \frac{1}{m} D^{\varepsilon}_H(\S_{m}\|\T_{m}) \geq {1 \over k} \inf_{\substack{\rho_k \in \S_k[\nu_k]\\ \sigma_k \in \T_k[\nu_k]}} D(\M_k(\rho_k)\|\M_k(\sigma_k)) \geq \inf_{\substack{\rho_k \in \C(\S_k[\nu_k])\\ \sigma_k \in \C(\T_k[\nu_k])}} {1 \over k} D(\M_k(\rho_k)\|\M_k(\sigma_k)) \,,
		\end{equation}
		where we added convex hulls on the right-hand side (this just makes the infimum smaller). We can now take the supremum over all measurements $\M_k$ on the right-hand side, and by~\cite[Lemma 13]{brandao_adversarial_2020}, we can exchange this supremum with the already present infimum, to find
		\begin{equation}
			\liminf_{m \to \infty} \frac{1}{m} D^{\varepsilon}_H(\S_{m}\|\T_{m}) \geq  \inf_{\substack{\rho_k \in \C(\S_k[\nu_k])\\ \sigma_k \in \C(T_k[\nu_k])}} {1 \over k} D_M(\rho_k\|\sigma_k) \,.
		\end{equation}
		Note that~\cite[Lemma 13]{brandao_adversarial_2020} requires the infimum to be over a convex set, which is why we introduced convex hulls in the previous step. 
		Additionally, we now take the supremum over $\nu_k$ to find
		\begin{align}
			\liminf_{m \to \infty} \frac{1}{m} D^{\varepsilon}_H(\S_{m}\|\T_{m}) &\geq  \sup_{\nu_k \in \DM[R A^k]} \inf_{\substack{\rho_k \in \C(\S_k[\nu_k])\\ \sigma_k \in \C(T_k[\nu_k])}} {1 \over k} D_M(\rho_k\|\sigma_k) \\
			&= \sup_{\nu_k \in \DM[R A^k]} \inf_{\substack{\E_k \in \C(\S_k)\\ \F_k \in \C(T_k)}} {1 \over k} D_M(\E_k(\nu_k)\|\F_k(\nu_k)) \\
			&= \inf_{\substack{\E_k \in \C(\S_k)\\ \F_k \in \C(T_k)}} \sup_{\nu_k \in \DM[R A^k]} {1 \over k} D_M(\E_k(\nu_k)\|\F_k(\nu_k))
		\end{align}
		where the first equality is just a rewriting, and for the second equality we used that by \autoref{prop:main_exchange} (since the infimum is over convex sets, and $D_M$ satisfies the direct sum property) we can exchange infimum and supremum. 
		We take the $\limsup$ over $k$ to get
		\begin{equation}
			\liminf_{m \to \infty} \frac{1}{m} D^{\varepsilon}_H(\S_{m}\|\T_{m}) \geq  \limsup_{k \to \infty}\inf_{\substack{\E_k \in \C(\S_k)\\ \F_k \in \C(T_k)}} \sup_{\nu_k \in \DM[R A^k]} {1 \over k} D_M(\E_k(\nu_k)\|\F_k(\nu_k)) \,.
		\end{equation}
		Now, by \autoref{lem:inf_perm_covariant} the infimum is achieved for permutation covariant channels $\E_k$, $\F_k$, and by \autoref{lem:channel_div_perm_invariant} the supremum is achieved for a permutation invariant state (note that the channels $\E_k$ and $\F_k$ are of course also permutation covariant with regards to permutations within $R$, as they act with the identity on the reference system). Hence the state $\F_k(\nu_k)$ is permutation invariant, and thus by \autoref{lem:measured_asmptotically_equal} we get
		\begin{align}
			\liminf_{m \to \infty} \frac{1}{m} D^{\varepsilon}_H(\S_{m}\|\T_{m}) &\geq  \limsup_{k \to \infty}\min_{\substack{\E_k \in \C(\S_k)\\ \F_k \in \C(T_k)}} \max_{\nu_k \in \DM[R \otimes A^k]} {1 \over k} D(\E_k(\nu_k)\|\F_k(\nu_k))\\
			&= \limsup_{k \to \infty}\min_{\substack{\E_k \in \C(\S_k)\\ \F_k \in \C(T_k)}}  {1 \over k} D(\E_k\|\F_k) \,. \label{eq:channel_achievability_liminf}
		\end{align} 
		
		\paragraph{Converse}
		For the converse part, let $R \cong A$, and then note that by \autoref{lem:upper_bound_individual_mmt}:
		\begin{equation}\label{eq:channel_dhe_inf_outside}
			D_H^{\varepsilon}(\S_n\|\T_n) = \sup_{\nu_n \in \DM[R^n A^n]} D_H^{\varepsilon}(\S_n[\nu_n]\|T_n[\nu_n]) \leq \sup_{\nu_n \in \DM[R^n A^n]} \inf_{\substack{\rho_n \in \S_n[\nu_n] \\ \sigma_n \in 
					\T_n[\nu_n]}} D_H^{\varepsilon}(\rho_n\|\sigma_n)\,.
		\end{equation}
		By \autoref{lem:wang_renner_upper_bound} we have that for any two states $\rho, \sigma$:
		\begin{equation}
			D^{\varepsilon}_H(\rho\|\sigma) \leq \frac{1}{1 - \varepsilon} \qty( D(\rho\|\sigma) + h(\varepsilon))  \,.
		\end{equation}
		Thus,
		\begin{align}
			\lim_{\varepsilon \to 0} \liminf_{n \to \infty} \frac{1}{n} D_H^{\varepsilon}(\S_n\|\T_n) \alignedoverset{\eqref{eq:add_convex_hull_channels}}{=} \lim_{\varepsilon \to 0} \liminf_{n \to \infty} \frac{1}{n} D_H^{\varepsilon}(\C(\S_n)\|\C(\T_n)) \\
			&= \lim_{\varepsilon \to 0} \liminf_{n \to \infty} \frac{1}{n} \sup_{\nu_n \in \DM[R^n A^n]} D_H^{\varepsilon}(\C(\S_n)[\nu_n]\|\C(\T_n)[\nu_n])\\
			\alignedoverset{\eqref{eq:channel_dhe_inf_outside}}{\leq} 
			\lim_{\varepsilon \to 0} \liminf_{n \to \infty} \sup_{\nu_n \in \DM[R^n A^n]}  \min_{\substack{\rho_n \in \C(\S_n)[\nu_n]\\ \sigma_n \in \C(T_n)[\nu_n]}} \frac{1}{n} D_H^{\varepsilon}(\rho_n\|\sigma_n)\\
			&= \liminf_{n \to \infty} \sup_{\nu_n \in \DM[R^n A^n]} \min_{\substack{\E_n \in \C(\S_n)\\ \F_n \in \C(T_n)}}  {1 \over n} D(\E_n(\nu_n)\|\F_n(\nu_n)) \\
			\alignedoverset{\eqref{prop:main_exchange}}{=} \liminf_{n \to \infty}\min_{\substack{\E_n \in \C(\S_n)\\ \F_n \in \C(T_n)}}  {1 \over n} D(\E_n\|\F_n) \label{eq:channel_converse_liminf}
		\end{align}
		where the optimizations are achieved by the same argument as above. 
		Equivalently, one finds the same with $\liminf$ replaced with $\limsup$:
		\begin{equation}
			\lim_{\varepsilon \to 0} \limsup_{n \to \infty} \frac{1}{n} D_H^{\varepsilon}(\S_n\|\T_n) \leq \limsup_{n \to \infty}\min_{\substack{\E_n \in \C(\S_n)\\ \F_n \in \C(T_n)}}  {1 \over n} D(\E_n\|\F_n) \label{eq:channel_converse_limsup}\,.
		\end{equation}
		Combining \eqref{eq:channel_converse_liminf} with the achievability result \eqref{eq:channel_achievability_liminf}, we find
		\begin{equation}
			\liminf_{n \to \infty} \min_{\substack{\E_n \in \C(\S_n)\\ \F_n \in \C(T_n)}}  {1 \over n} D(\E_n\|\F_n) \geq \lim_{\varepsilon \to 0} \liminf_{n \to \infty} \frac{1}{n} D_H^{\varepsilon}(\S_n\|\T_n) \geq  \limsup_{k \to \infty} \min_{\substack{\E_k \in \C(\S_k)\\ \F_k \in \C(T_k)}}  {1 \over k} D(\E_k\|\F_k) 
		\end{equation}
		and hence both inequalities in this line are in fact equalities. Also, combining this again with \eqref{eq:channel_converse_liminf} and \eqref{eq:channel_converse_limsup} we find
		\begin{align}
			\lim_{k \to \infty}\min_{\substack{\E_k \in \C(\S_k)\\ \F_k \in \C(T_k)}}  {1 \over k} D(\E_k\|\F_k)  &\leq \lim_{\varepsilon \to 0} \liminf_{n \to \infty} \frac{1}{n} D^{\varepsilon}_H(\S_{n}\|\T_{n}) \\ &
			\leq \lim_{\varepsilon \to 0} \limsup_{n \to \infty} \frac{1}{n} D^{\varepsilon}_H(\S_{n}\|\T_{n}) \leq \lim_{n \to \infty} \min_{\substack{\E_n \in \C(\S_n)\\ \F_n \in \C(T_n)}}  {1 \over n} D(\E_n\|\F_n)\,,
		\end{align}
		and hence all of these expressions coincide, and we get the desired statement with $\liminfsup$. 

  Finally, the second part of the theorem that applies if $\S_n$ also lies in a linear space with dimension polynomial in $n$, can be seen as an immediate consequence of the first part of the theorem after using \autoref{prop:main_exchange} and \autoref{lem:entropy_convex_caratheodory}. Note that after the application of \autoref{lem:entropy_convex_caratheodory} we do no longer satisfy the convexity assumption necessary for another application of \autoref{prop:main_exchange}, and hence we cannot conclude that the min and max can be exchanged again at this point.
	\end{proof}

\subsection{Classical parallel exponent for finite sets in the composite IID setting}

The characterizations of the asymptotic error exponent in \autoref{thm:parr} are generally hard to calculate, and we would like to find scenarios where one can instead find single-letter formulas. One case in which one can do so (and which will be useful for upcoming examples) is in the \emph{classical} composite i.i.d. case when the two sets $\S$ and $\T$ are also assumed to be finite. 
\begin{proposition}\label{prop:classical_finite}
	Let $\S$, $\T \subset \cptp[\X \to \Y]$ be two finite sets of classical channels. In the composite i.i.d setting, i.e. with
	\begin{align}
		\S_n &\coloneqq \Set{\E^{\otimes n} | \E \in \S}\\
		\T_n &\coloneqq \Set{\F^{\otimes n} | \F \in \T}
	\end{align}		
	the Stein exponent of distinguishing these two hypotheses with a parallel strategy is given by
	\begin{equation}
		\lim_{\varepsilon \to 0} \liminfsup_{n \to \infty} e_P(n, \varepsilon, \S_n, \T_n) = \max_{\nu \in \DM[\X' \X]} \min_{\substack{\E \in \S \\ \F \in \T}} D(\E(\nu)\|\F(\nu))\,,
	\end{equation}
	where $\X'$ is another classical system with the size of $\X$, and $\liminfsup$ can be $\liminf$ or $\limsup$ (see \autoref{def:liminfsup}),
\end{proposition}
\begin{proof}~
    \paragraph{Achievability}
    Picking any classical input state $\nu \in \DM[\X' \X]$ and feeding identical copies of it into the $n$ classical channels, turns this problem into the classical composite i.i.d. hypothesis testing problem of distinguishing the sets $P = \S[\nu]$, $Q = \T[\nu]$. Since they are both finite, we can apply \cite[Theorem III.2]{mosonyi_error_2021}, which states that the optimal exponent of this composite state discrimination problem is given by
    \begin{equation}
        \min_{p \in P, q \in Q} D(P\|Q) = \min_{\substack{\E \in \S \\ \F \in \T}} D(\E(\nu)\|\F(\nu))\,.
    \end{equation}
    Taking the supremum over all input states $\nu$ yields the desired achievability result, and the supremum is achieved by an argument similar to \autoref{lem:sup_achieved}, as the minimum over a finite number of elements does not affect any of the required continuity properties. 
    \paragraph{Converse}
    It follows immediately from the definition \eqref{eq:parr_exponent_dh}, \autoref{lem:upper_bound_individual_mmt} and \autoref{lem:wang_renner_upper_bound} that 
    \begin{align}
        \frac{1}{n} D_H^{\varepsilon}(\S_n\|\T_n) &= \frac{1}{n} \sup_{\nu_n \in \DM[R \X^n]} D_H^{\varepsilon}(\S_n[\nu_n]\|\T_n[\nu_n]) 
         \\&\leq {1 \over n(1 - \varepsilon)} \sup_{\nu_n \in \DM[R \X^n]} \inf_{\substack{\E \in \S \\ \F \in \T}} D(\E^{\otimes n}(\nu_n)\|\F^{\otimes n} (\nu_n)) + o(1) \,.\label{eq:classical_finite_converse}
    \end{align}
    where strictly speaking $R$ could be a quantum system, and hence $\nu_n$ a quantum-classical state of the form
     \begin{equation}
         \nu_n = \sum_{i = 1}^{d^n} p_i \rho_R^{(i)} \otimes \ketbra{i}_{\X^n}
     \end{equation}
    where $\{p_i\}_{i = 1}^{d^n}$ is a probability distribution and the $\rho_R^{(i)}$ are density matrices on $R$. 
    
    By using the joint convexity of relative entropy and additivity under tensor products, it is easy to see though that for any such state $\nu_n$ there exists a probability distribution $\{q_i\}_{i = 1}^d$ such that for all channels $\E$ and $\F$:
	\begin{equation}
		{1 \over n} D(\E^{\otimes n}(\nu_n) \| \F^{\otimes n}(\nu_n)) \leq \sum_{i = 1}^d q_i D(\E(\ketbra{i})\|\F(\ketbra{i})) = D(\E(\nu)\|\F(\nu))\,.
	\end{equation}
	where $\nu = \sum_i q_i \ketbra{i}_{\X'} \otimes \ketbra{i}_\X$. 
    Hence, we can upper bound the last part of \eqref{eq:classical_finite_converse} with the single-letter expression

    \begin{equation}
       {1 \over (1 - \varepsilon)} \sup_{\nu \in \DM[\X' \X]} \min_{\substack{\E \in \S \\ \F \in \T}} D(\E(\nu)\|\F(\nu)) + o(1) 
    \end{equation}
    and the statement follows in the limit $n \to \infty$, $\varepsilon \to 0$.
    
\end{proof}

	\subsection{The Adaptive Case}
	\renewcommand{\N}{\mathcal{N}}
	As stated previously, the most general setup of the channels will allow for channel inputs to depend on previous channel outputs, which is called an adaptive protocol.
	Let $n$ be fixed and let $\Lambda_n = \Lambda^{(1)} \otimes \ldots \otimes \Lambda^{(n)}$ be $n$ black-box channels given to us, where the task is to determine whether they come from $\S_n$ or $\T_n$, where $\S_n$ and $\T_n$ are part of quantum channel hypotheses as specified in \autoref{def:composite_channel_hypothesis}. We write just the first $i$ channels as $\Lambda_i \coloneqq \Lambda^{(1)} \otimes \ldots \otimes \Lambda^{(i)}$ for $i = 1, ..., n$.
	A general adaptive channel discrimination protocol for these $\Lambda_n$, can now be fully specified by an initial state $\omega_0  \in \DM[R \otimes A]$, a set of $n - 1$ CPTP maps $\N_i: R \otimes B \to R \otimes A$, that transform the state before it is fed into the next black-box channel, and a final binary POVM $\{M, \IdentityMatrix -  M\}$ on $R \otimes B$. We will assume the size of reference system $R$ to be fixed and identical throughout the protocol (this is without loss of generality). The protocol consists of alternating applications of a black-box channel and the preparation CPTP maps $\N_i$ (see \autoref{fig:adaptive_strategy}). We define:
	\begin{equation}
		\omega_i(\Lambda_i) \coloneqq \Lambda^{(i)}(\N_i (\omega_{i-1}(\Lambda_{i-1}))),  \qquad \text{for } i \in \{2, \ldots, n\},
	\end{equation}
	where we do not make identities on reference systems explicit (as previously), and $\omega_1(\Lambda_1) \coloneqq \Lambda^{(1)}(\omega_0)$.
	With our notation, the final state before the action of the POVM will be $\omega_n(\Lambda_n)$. Note that since the sets $\S_n$ and $\T_n$ were assumed to be permutation invariant, there is no advantage to be gained from reordering the black-box channels and so this is indeed the most general setup. 

	\begin{figure}[htb]
		\centering
		\includegraphics[width=\linewidth]{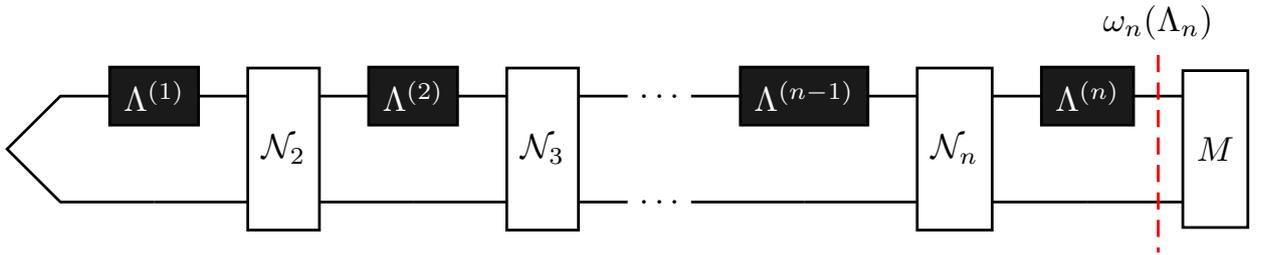}
		\caption{Illustration of a general adaptive protocol with $n$ not necessarily identical black-box channels. The top row makes use of the given black-boxes while the bottom row depicts the memory system $R$. }
		\label{fig:adaptive_strategy}
	\end{figure}

	For a set $\S_n$ corresponding to a hypothesis, we write $\omega_n(\S_n) \coloneqq \Set{\omega_n(\E_n) | \E_n \in \S_n}$. Given an $\omega_n$, the problem then reduces to the composite state-discrimination problem $\omega_n(\S_n)$ vs.\ $\omega_n(\T_n)$. Note that $\omega_n(\S_n) \subset \DM[R \otimes B]$, so this state discrimination problem will not be an instance of a many-copy discrimination problem as studied above, the $n$ just indicates how many channel black-boxes were used in obtaining the states in the set. 
	We can again define the corresponding worst-case type II error exponent as
	\begin{equation}
		e_A(n, \varepsilon, \S_n, \T_n) \coloneqq  \frac{1}{n} \sup_{\omega_n} D_H^{\varepsilon}(\omega_n(\S_n)\|\omega_n(\T_n)) = \frac{1}{n} \sup_{\omega_n} \sup_{\substack{0 \leq M \leq \IdentityMatrix \\ \alpha(M, \omega_n(\S_n)) \leq \varepsilon}} \qty[- \log \beta(M, \omega_n(\T_n))]
	\end{equation}
where the supremum over $\omega_n$ goes over all adaptive strategies, i.e.~all initial states $\omega_0$ and all preparation maps $\N_i$, $i = 2, ..., n$. 
	
	\subsubsection{An upper bound for adaptive strategies}
	We can prove the following upper bound on the Stein exponent for discriminating two composite channel hypotheses with adaptive strategies. This captures the intuition that if the sets $\S_n$ and $\T_n$ are such that they include the i.i.d. problem, then the error exponent has to be less than the worst-case i.i.d. error exponent (for a similar statement for composite state discrimination, see e.g. \cite{mosonyi_error_2022}).
	 \begin{proposition}\label{prop:adaptive_upper_bound}
        Let $\bm{\S} = (\S_n)_n, \bm{\T} = (\T_n)_n$ be two quantum composite channel hypotheses (as in \autoref{def:composite_channel_hypothesis}). Then, for all $n$ and $ε \in [0,1]$, it holds that
        \begin{equation}
            e_A(n, ε, \S_n, \T_n) \leq \inf_{\substack{\E_n \in \S_n\\ \F_n \in \T_n}} e_A(n, ε, \E_n, \F_n)\,.
            \end{equation}
            Furthermore, let $\S \coloneqq \S_1$ and $\T \coloneqq \T_1$. If the hypotheses are such that for all $n$ 		\begin{align}
			\E^{\otimes n} &\in \S_n \quad \forall \, \E \in \S \\
			\F^{\otimes n} &\in \T_n \quad \forall \, \F \in \T
		\end{align}		
		then the Stein exponent for distinguishing these two composite hypotheses by an adaptive strategy is upper bounded by
		\begin{equation}
			\lim_{\varepsilon \to 0} \limsup_{n \to \infty} e_A(n, \varepsilon, \S_n, \T_n) \leq \min_{\substack{\E \in \S \\ \F \in \T}} D_A(\E\|\F) = \min_{\substack{\E \in \S \\ \F \in \T}} D_{\mathrm{reg}}(\E\|\F)\,.
		\end{equation}
\end{proposition}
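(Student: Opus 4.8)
The plan is to reduce the composite adaptive problem to a \emph{simple} channel discrimination problem (using only the assumed i.i.d.\ inclusions) and then run the standard amortization argument for the relative-entropy channel divergence.

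\textbf{Reduction to a simple problem.} Fix arbitrary $\E \in \S$ and $\F \in \T$; by hypothesis $\E^{\otimes n} \in \S_n$ and $\F^{\otimes n} \in \T_n$ for every $n$. For any adaptive strategy (initial state $\omega_0$, preparation maps $\mathcal{N}_2,\dots,\mathcal{N}_n$) and any binary POVM element $M$ on $R\otimes B$ one has $\alpha(M,\omega_n(\E^{\otimes n})) \le \alpha(M,\omega_n(\S_n))$ and $\beta(M,\omega_n(\F^{\otimes n})) \le \beta(M,\omega_n(\T_n))$, simply because $\omega_n(\E^{\otimes n}) \in \omega_n(\S_n)$ and $\omega_n(\F^{\otimes n}) \in \omega_n(\T_n)$. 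Hence the set of $M$ feasible for the composite type-I constraint is contained in the one feasible for the i.i.d.\ constraint while the type-II objective only gets smaller, so $D_H^{\varepsilon}(\omega_n(\S_n)\|\omega_n(\T_n)) \le D_H^{\varepsilon}(\omega_n(\E^{\otimes n})\|\omega_n(\F^{\otimes n}))$. Taking the supremum over all adaptive strategies, $e_A(n,\varepsilon,\S_n,\T_n) \le \tfrac1n \sup_{\omega_n} D_H^{\varepsilon}(\omega_n(\E^{\otimes n})\|\omega_n(\F^{\otimes n}))$, and it therefore suffices to bound the right-hand side by $D_A(\E\|\F)+o(1)$ as $n\to\infty$.

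\textbf{Amortization telescope.} Fix a strategy, and let $\rho_i$, $\sigma_i$ denote the states fed into the $i$-th channel use when all black-boxes equal $\E$, respectively all equal $\F$, so that $\rho_1=\sigma_1=\omega_0$, the state after $i$ channel uses on each branch is $\omega_i(\E^{\otimes i})=(\id_R\otimes\E)(\rho_i)$ and $\omega_i(\F^{\otimes i})=(\id_R\otimes\F)(\sigma_i)$, and $\rho_{i+1}=\mathcal{N}_{i+1}(\omega_i(\E^{\otimes i}))$, $\sigma_{i+1}=\mathcal{N}_{i+1}(\omega_i(\F^{\otimes i}))$. Writing $a_i\coloneqq D(\omega_i(\E^{\otimes i})\|\omega_i(\F^{\otimes i}))$, data processing under $\mathcal{N}_{i+1}$ gives $D(\rho_{i+1}\|\sigma_{i+1})\le a_i$, while the definition of the amortized divergence gives $a_{i+1}-D(\rho_{i+1}\|\sigma_{i+1})\le D_A(\E\|\F)$ (using that $D_A$ is defined with a supremum over arbitrary reference systems, so the fixed memory system $R$ is harmless). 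Hence $a_{i+1}\le a_i+D_A(\E\|\F)$, and since $a_1=D((\id_R\otimes\E)(\omega_0)\|(\id_R\otimes\F)(\omega_0))\le D_A(\E\|\F)$, the recursion telescopes to
\begin{equation}
	D(\omega_n(\E^{\otimes n})\|\omega_n(\F^{\otimes n}))=a_n\le n\,D_A(\E\|\F).
\end{equation}

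\textbf{Conclusion.} Applying \autoref{lem:wang_renner_upper_bound} to the final binary measurement gives $D_H^{\varepsilon}(\omega_n(\E^{\otimes n})\|\omega_n(\F^{\otimes n}))\le \tfrac{1}{1-\varepsilon}\left(n\,D_A(\E\|\F)+h(\varepsilon)\right)$ uniformly over adaptive strategies, so $e_A(n,\varepsilon,\S_n,\T_n)\le \tfrac{1}{1-\varepsilon}\left(D_A(\E\|\F)+h(\varepsilon)/n\right)$. Letting $n\to\infty$ and then $\varepsilon\to0$ yields $\lim_{\varepsilon\to0}\lim_{n\to\infty}e_A(n,\varepsilon,\S_n,\T_n)\le D_A(\E\|\F)$, and minimizing the right-hand side over $\E\in\S$ and $\F\in\T$ gives the stated inequality; the final equality reduces to the per-pair statement $D_A(\E\|\F)=D_{\mathrm{reg}}(\E\|\F)$, i.e.\ the chain rule / amortization collapse for the relative entropy of quantum channels from~\cite{fang_chain_2020,wilde_amortized_2020}. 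I do not expect a genuine obstacle here: the content is the telescoping step, which is exactly the amortization inequality underlying the equivalence of adaptive and parallel strategies in simple channel discrimination, and the reduction is elementary set-inclusion bookkeeping. The two things to get right are (i) performing the reduction to a single pair $(\E,\F)$ \emph{before} amortizing, so that one obtains $D_A$ of a single pair (for which the collapse to $D_{\mathrm{reg}}$ is available) rather than some composite amortized quantity, and (ii) carrying the memory system through the preparation maps cleanly, which is automatic because $D_A$ is defined over arbitrary reference systems; the only imported external fact is the equality $D_A=D_{\mathrm{reg}}$.
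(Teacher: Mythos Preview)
Your proof is correct and follows essentially the same approach as the paper: both combine the set-inclusion reduction to a single i.i.d.\ pair (the paper packages this as \autoref{lem:upper_bound_individual_mmt}), the Wang--Renner bound (\autoref{lem:wang_renner_upper_bound}), the amortization telescope, and the identity $D_A=D_{\mathrm{reg}}$ from~\cite{fang_chain_2020}. The only cosmetic difference is ordering---you fix the pair $(\E,\F)$ first and then amortize, while the paper keeps the infimum over $\S_n,\T_n$ through the Wang--Renner step and specializes to i.i.d.\ pairs at the end---and you omit the short lower-semicontinuity argument showing the infimum is attained.
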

    \begin{proof}
		As mentioned above, we write $\omega_n(\Lambda_n)$ for the state at the end of an adaptive strategy $ω_n$ with $n$ channel uses, when the $n$ black-box channels are given by $\Lambda_n$. With \autoref{lem:upper_bound_individual_mmt} we get:
        \begin{align}
            e_A(n, \varepsilon, \S_n, \T_n) &= \frac{1}{n} \sup_{\omega_n} D_H^{\varepsilon}(\omega_n(\S_n)\|\omega_n(\T_n)) \leq \frac{1}{n} \sup_{\omega_n} \inf_{\substack{ρ_n \in \omega_n(\S_n) \\ σ_n \in \omega_n(\T_n)}} D_H^{ε}(ρ_n\|σ_n) \\
            &= \frac{1}{n} \sup_{\omega_n} \inf_{\substack{\E_n \in \S_n \\ \F_n \in \T_n}} D_H^{ε}(\omega_n(\E_n)\|ω_n(\F_n)) \leq \frac{1}{n}\inf_{\substack{\E_n \in \S_n \\ \F_n \in \T_n}}  \sup_{\omega_n}  D_H^{ε}(\omega_n(\E_n)\|ω_n(\F_n)) \\
            &= \inf_{\substack{\E_n \in \S_n \\ \F_n \in \T_n}} e_A(n, ε, \E_n, \F_n)
        \end{align}
        which proves the first claim. For the second claim, if the requirements are satisfied, we get
        \begin{equation}
            \inf_{\substack{\E_n \in \S_n \\ \F_n \in \T_n}} e_A(n, ε, \E_n, \F_n) \leq \inf_{\substack{\E \in \S \\ \F \in \T}} e_A(n, ε, \E\n, \F\n)\,.
        \end{equation} 
        The known characterization of the adaptive asymptotic error exponent of simple i.i.d.\ channel discrimination (see e.g. \cite{fang_chain_2020, wang_resource_2019}) thus implies
        \begin{equation}
            \lim_{\varepsilon \to 0} \limsup_{n \to \infty} e_A(n, \varepsilon, \S_n, \T_n) \leq \inf_{\substack{\E \in \S \\ \F \in \T}} D^A(\E\|\F) = \inf_{\substack{\E \in \S \\ \F \in \T}} D^{\mathrm{reg}}(\E\|\F)\,.
        \end{equation}
         Finally, the sequence 
    $D(\E^{\otimes n} \|\F^{\otimes n})$ is superadditive in $n$, and hence is monotonically increasing in $n$, and hence by Fekete's Lemma we can replace the limit $n \to \infty$ in the regularized divergence with a supremum over $n$. Thus, the regularized divergence is lower semi-continuous (as the supremum of lower semi-continuous functions is lower semi-continuous), and hence the infimum is achieved (and the same also for the infimum in $D^A$, since $D^A = D^{\mathrm{reg}}$). 
	\end{proof}	
	We will give a classical example in the next section where this upper bound is achieved and is strictly larger than the achievable exponent of parallel strategies. Hence this demonstrates an advantage of adaptive strategies for composite channel discrimination even if everything is classical. 
	\begin{remark}\label{remark:upper_bound_loose}
	While the upcoming example demonstrates that this upper bound can sometimes be achieved, it cannot always be achieved. Hence, it is not a candidate for the optimal asymptotic exponent of adaptive strategies. This can be seen by taking all channels to be replacer channels\footnote{A replacer channel is a quantum channel which outputs a fixed quantum state regardless of the input.}. In this case the task of channel discrimination reduces to that of state discrimination, for which adaptive and parallel strategies are equivalent. In the composite i.i.d. setting (i.e.~when $S_n = \Set{\rho^{\otimes n} | \rho \in S}$ and $T_n = \Set{\sigma^{\otimes n} | \sigma \in T}$) it has been shown that there exist sets $S$ and $T$ such that
	\begin{equation}
		\lim_{\varepsilon \to 0} \liminfsup_{n \to \infty} D_H^{\varepsilon}(S_n\|T_n) = \lim_{n \to \infty} {1 \over n} \inf_{\substack{\rho_n \in \C(S_n) \\ \sigma_n \in \C(T_n)}} D(\rho_n \| \sigma_n) < \inf_{\substack{\rho \in S \\ \sigma \in T}} D(\rho\|\sigma),
	\end{equation}
	and different examples exist where $S$ and $T$ are either convex~\cite[Section 4.2]{berta_composite_2021} or discrete~\cite[Section IV.A]{mosonyi_error_2021}. 
 \end{remark}

\subsubsection{A classical example of an adaptive advantage}
 In the following we give a fully classical example that demonstrates how adaptive strategies can be (also asymptotically) beneficial with composite hypotheses in the composite i.i.d. setting.

\medskip

	\begin{example}\label{example:adaptive_advantage}
    There exist classical composite channel hypotheses $\S = \{\E_1, \E_2\}$ and $\T = \{\F_1, \F_2\}$, such that the adaptive error exponent in the composite i.i.d. setting is strictly larger than the parallel one. Specifically, we show that 
    \begin{equation}
        \lim_{\varepsilon \to 0} \liminfsup_{n \to \infty} {1 \over n} e_A(n, \varepsilon, \S_n, \T_n) = \min_{i,j \in \{1,2\}} D(\E_i\|\F_j) = 2 \lim_{\varepsilon \to 0} \liminfsup_{n \to \infty} {1 \over n} e_P(n, \varepsilon, \S_n, \T_n)
    \end{equation}
    where $\S_n = \Set{\E_i^{\otimes n}|i = 1,2}$, $\T_n = \Set{\F_i^{\otimes n}|i = 1,2}$.
    \end{example}
 
	When defining the channels, we will use quantum notation for convenience, but everything should be seen as classical, i.e.~all states are diagonal in the computational basis.

	The channels used in our example are then: 
	\begin{align}
		\E_1(\rho) &= \tau \otimes \ketbra{0}{0} \\
		\E_2(\rho) &= \tau \otimes \ketbra{1}{1} \\
		\F_1(\rho) &= \frac{1}{2}\left[ \tau + \bkbraket{0|\rho|0} \ketbra{0}{0} + \bkbraket{1|\rho|1} \tau\right] \otimes \ketbra{0}{0} \\
		\F_2(\rho) &= \frac{1}{2}\left[ \tau + \bkbraket{0|\rho|0} \tau + \bkbraket{1|\rho|1} \ketbra{0}{0} \right] \otimes \ketbra{1}{1}
	\end{align}
	Where $\tau = \IdentityMatrix_{2}/2$ is the maximally mixed state. 
	%When inputting states into these channels, we will sometimes write $0 = \ketbra{0}$ and $1 = \ketbra{1}$, i.e.~ 
 For notational simplicity, we denote $\E(0) \coloneqq \E(\ketbra{0})$, $\E(1) \coloneqq \E(\ketbra{1})$.
	\paragraph{The adaptive strategy}
	The channels are constructed to allow for the following adaptive strategy: Given a black-box channel, we first use it with an arbitrary input state. Depending on the second output bit we will be able to determine with certainty the \enquote{index} of the channel, i.e.~we will know that the channel is either $\E_1$ or $\F_1$ if the second bit is zero, or alternatively if the second bit is one we will know that the channel is either $\E_2$ or $\F_2$. It is easy to see that the optimal input state to discriminate $\E_1$ from $\F_1$ is $\ketbra{0}$, whereas the optimal input state to discriminate $\E_2$ from $\F_2$ is $\ketbra{1}$. Hence, in our adaptive strategy, for all subsequent channel uses, we input the value of the second bit we received out of the first channel use. This will lead to the following exponent:
	\begin{equation}\label{eq:classical_example_adaptive_rate}
		\min_{i \in\{ 1,2 \}} \,\max_{\rho \in \DM[\X]} D(\E_i(\rho) \| \F_i(\rho)) = D\big(\E_1(0) \|\F_1(0)\big) = D\big(\E_2(1) \|\F_2(1)\big) = \log_2(4/3)/2\,.
	\end{equation}
 It is easy to see that this is also equal to 
 \begin{equation}
     \min_{i,j \in\{ 1,2 \}} \,\max_{\rho \in \DM[\X]} D(\E_i(\rho) \| \F_i(\rho))
 \end{equation}
 since this minimum is always achieved for $i = j$, as otherwise the second output bit allows for the two channels to be distinguished with certainty, which makes the relative entropy infinite. Since this is equal to the upper bound from \autoref{prop:adaptive_upper_bound} (for classical channels the regularized channel divergence collapses to the single-letter channel divergence), this is an asymptotically optimal adaptive strategy.
	\paragraph{The best parallel strategy}
    By \autoref{prop:classical_finite}, the optimal parallel exponent is given by
    \begin{equation}
		\max_{\nu \in \DM[\X' \X]} \min_{\substack{\E \in \S \\ \F \in T }} D(\E(\nu)\|\F(\nu)) \,.
	\end{equation} Similarly to the argument used in the proof of \autoref{prop:classical_finite}, by using the joint convexity of the relative entropy we find that for any state $\nu \in \DM[\X' \X]$ there exists a $p \in [0,1]$ such that for any $i,j$:
 \begin{equation}
				 D(\E_i(\nu)\|\F_j(\nu)) \leq p D(\E_i(0)\|\F_j(0)) + (1-p) D(\E_i(1)\| F_j(1))\,,
	\end{equation}
 and picking $\nu = p \ketbra{00}_{\X'\X} + (1-p) \ketbra{11}_{\X'\X}$ achieves the right-hand side.
 Hence, we can write:
 \begin{equation}
		\max_{\nu \in \DM[\X' \X]} \min_{\substack{\E \in \S \\ \F \in T }} D(\E(\nu)\|\F(\nu)) = \max_{0 \leq p \leq 1} \, \min_{i,j \in \{1,2\}} \big( p D(\E_i(0)\|\F_j(0)) + (1-p) D(\E_i(1)\| F_j(1))\big)\,.
	\end{equation}
 Similarly to above, the minimum will be achieved at $i = j$, and it is easy to see by explicit computation that the optimum value of $p$ is $1/2$. Since $D(\E_2(0)\|\F_2(0)) = D(\E_1(1)\|F_1(1)) = 0$ the parallel exponent is thus
	\begin{equation}
		\frac{1}{2} D\big(\E_1(0) \|\F_1(0)\big)
	\end{equation}
	which is half the exponent we were able to achieve with the adaptive strategy. It is also easy to see that a way to achieves this parallel exponent is just to alternate the two input states $0$ and $1$. This captures the intuition that since we do not know the \enquote{index} of the channel in advance, we have to balance between the two optimal input states, and half of the time we will have chosen the wrong one, which means that half the channel outputs will be useless, and hence we can only achieve half the rate.
	
	\subsubsection{Classical equality under convexity}
	Looking back at the previous example, one finds that the advantage of the adaptive strategy can be seen as coming from the fact that the order of the maximum over input states and minimum over channels (for example in \eqref{eq:classical_example_adaptive_rate}) matters: The parallel strategy has to find a good input state for all channels (this corresponds to taking the maximum over states outside), whereas the adaptive strategy can reduce the problem to a simple discrimination problem between just two channels and then tailor the input state to these two channels (this corresponds to taking the infimum over channels outside). Indeed, one also finds that an application of our exchange result \autoref{prop:main_exchange} (or similar minimax theorems) is not permitted in this example, as the sets of channels $\S$ and $\T$ are not convex. We show subsequently that, in the classical case, convexity of these sets is indeed sufficient for there not to be an advantage of adaptive strategies. 
	\begin{theorem}\label{thm:classical_equality}
		Let $\bm{\S} = (\S_n \subset \cptp[\X^n \to \Y^n])_n, \bm{\T} = (\T_n \subset \cptp[\X^n \to \Y^n])_n$ be two composite classical channel hypotheses (still satisfying the properties of \autoref{def:composite_channel_hypothesis}). If $\S \coloneqq \S_1$ and $\T \coloneqq \T_1$ are convex, and additionally for all $n$ 
		\begin{align}
			\E^{\otimes n} &\in \S_n \quad \forall \E \in \S \\
			\F^{\otimes n} &\in \T_n \quad \forall \F \in \T
		\end{align}		
		then the Stein exponent of distinguishing these two composite hypotheses (with a possibly adaptive strategy) is given by
		\begin{equation}
			\lim_{\varepsilon \to 0} \liminfsup_{n \to \infty} e(n, \varepsilon, \S_n, \T_n) = \min_{\substack{\E \in \S \\ \F \in \T}} \max_{\nu \in \DM[\X]} D(\E(\nu)\|\F(\nu))
		\end{equation}
		where $\liminfsup$ can be $\liminf$ or $\limsup$ (see \autoref{def:liminfsup}), and this optimal exponent can be achieved with a parallel strategy.
	\end{theorem}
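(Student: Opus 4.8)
The plan is to sandwich the exponent. Write $R:=\min_{\E\in\S,\F\in\T}\max_{\nu\in\DM[\X]}D(\E(\nu)\|\F(\nu))$, and let $e_P$ and $e_A$ denote the parallel and adaptive exponents (so $e$ in the statement is $e_A$, the most general one). Since a parallel strategy is a special adaptive strategy, $e_P\le e_A$, so it suffices to prove $\lim_{\varepsilon\to0}\lim_{n\to\infty}e_A(n,\varepsilon,\S_n,\T_n)\le R$ and $\lim_{\varepsilon\to0}\lim_{n\to\infty}e_P(n,\varepsilon,\S_n,\T_n)\ge R$. The converse is essentially \autoref{prop:adaptive_upper_bound}: its hypotheses are exactly the assumed i.i.d.\ inclusion $\E^{\otimes n}\in\S_n$, $\F^{\otimes n}\in\T_n$, and it gives $\lim e_A\le\min_{\E\in\S,\F\in\T}D_{\mathrm{reg}}(\E\|\F)$. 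It then remains to observe that for classical channels the regularised channel divergence collapses: by additivity of the relative entropy, and because the optimal input for the (stabilised) channel divergence of classical channels is a point mass --- joint convexity of $D$ makes extreme points optimal, and a classical reference register cannot help --- one gets $D(\E^{\otimes n}\|\F^{\otimes n})=n\max_x D(\E(\delta_x)\|\F(\delta_x))$ for every $n$, hence $D_{\mathrm{reg}}(\E\|\F)=\max_x D(\E(\delta_x)\|\F(\delta_x))=\max_{\nu}D(\E(\nu)\|\F(\nu))$. Thus $\lim e_A\le R$.

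For achievability I would run a parallel strategy built from a single-letter input that keeps a copy of the channel input in a reference register. Fix an input distribution $\mu\in\DM[\X]$ and set $\rho_\mu:=\sum_x\mu(x)\ketbra{xx}{xx}_{\X'\X}$, where $\X'\cong\X$ is a reference. Feed $\rho_\mu^{\otimes n}$ into the $n$ black boxes (keeping the $\X'$-registers). By the tensor-product structure of \autoref{def:composite_channel_hypothesis}, for any channel in $\S_n$ the output at the $j$-th use is independent of the others and lies in the set $P:=\{(\id_{\X'}\otimes\E)(\rho_\mu):\E\in\S\}$, which is \emph{closed and convex} because $\S$ is; similarly $\T$ produces $Q$. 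Hence the induced classical problem is an instance of adversarial hypothesis testing (indeed an arbitrarily-varying composite problem) between the convex sets $P$ and $Q$, so by \autoref{thm:adversarial} the exponent $\min_{p\in P,q\in Q}D(p\|q)$ is asymptotically achievable. Since everything is classical and the $\X'$-registers carry orthogonal labels, $D((\id\otimes\E)(\rho_\mu)\|(\id\otimes\F)(\rho_\mu))=\sum_x\mu(x)D(\E(\delta_x)\|\F(\delta_x))$, and optimising over $\mu$ gives $\lim e_P\ge\sup_{\mu\in\DM[\X]}\inf_{\E\in\S,\F\in\T}\sum_x\mu(x)D(\E(\delta_x)\|\F(\delta_x))$.

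The final step is to exchange the supremum and infimum. The objective $\sum_x\mu(x)D(\E(\delta_x)\|\F(\delta_x))$ is linear --- hence concave and continuous --- in $\mu$ over the compact convex simplex $\DM[\X]$, and jointly convex (and lower semicontinuous) in $(\E,\F)$ over the compact convex set $\S\times\T$, so Sion's minimax theorem (or \autoref{prop:main_exchange}) applies and yields $\lim e_P\ge\inf_{\E,\F}\sup_\mu\sum_x\mu(x)D(\E(\delta_x)\|\F(\delta_x))=\inf_{\E,\F}\max_x D(\E(\delta_x)\|\F(\delta_x))$; a last appeal to joint convexity of $D$ reduces $\max_{\nu}D(\E(\nu)\|\F(\nu))$ to point masses, so this equals $R$. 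Combining with the converse gives $R\le\lim e_P\le\lim e_A\le R$, so all three coincide and the value is attained by a parallel strategy. The main obstacle, and the only place where convexity of $\S$ and $\T$ is genuinely used, is exactly this exchange: the naive reference-free bound would involve $\sup_\nu\inf_{\E,\F}D(\E(\nu)\|\F(\nu))$, and since $\nu\mapsto D(\E(\nu)\|\F(\nu))$ need not be concave a minimax argument there fails; putting a classical copy of the input into the reference linearises the dependence on the input distribution, and convexity is what makes both the adversarial-testing reduction and the minimax step go through.
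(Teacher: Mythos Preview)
Your proof is correct and follows essentially the same two-step strategy as the paper: the converse via \autoref{prop:adaptive_upper_bound} together with the collapse of $D_{\mathrm{reg}}$ for classical channels, and the achievability via the adversarial reduction (\autoref{thm:adversarial}) followed by a minimax exchange using the convexity of $\S$ and $\T$. The only noticeable difference is presentational: the paper feeds in a bare input $\nu\in\DM[\X]$ and then invokes \autoref{prop:main_exchange} (whose proof implicitly introduces exactly the classical reference register you describe), whereas you build the cc-state $\rho_\mu$ explicitly from the outset so that the objective becomes linear in $\mu$ and Sion applies directly. Your version makes it more transparent why the exchange is legitimate, and your closing remark that the reference-free map $\nu\mapsto D(\E(\nu)\|\F(\nu))$ need not be concave is exactly the subtlety that \autoref{prop:main_exchange} handles behind the scenes.
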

	\begin{proof} We split the proof in to the achievability and converse parts.
		\paragraph{Achievability}
		Picking any classical input state $\nu \in \DM[\X]$ and feeding identical copies of it into the $n$ classical channels, turns this problem into the classical composite hypothesis testing problem which is at most as hard as distinguishing the sets $P = \S[\nu]$, $Q = \T[\nu]$ in an adversarial setting (this follows from the properties of a composite channel hypothesis as specified in \autoref{def:composite_channel_hypothesis}). Then, by \autoref{thm:adversarial}, the exponent 
		\begin{equation}
			\min_{p \in P, q\in Q} D(p\|q) 
		\end{equation}
		is achievable, and hence, by optimizing over $\nu$, also the exponent
		\begin{equation}
			\sup_{\nu \in \DM[\X]} \min_{\substack{p \in \S[\nu] \\ q \in \T[\nu]}} D(p\|q) = \sup_{\nu \in \DM[\X]} \min_{\substack{\E \in \S \\ \F \in \T}} D(\E(\nu)\|\F(\nu))
		\end{equation}
		is achievable. Now, since $\S$ and $\T$ are convex, we can apply \autoref{prop:main_exchange} and exchange the minimum and the supremum (where the supremum is also achieved, e.g.~by \autoref{lem:sup_achieved}). %This is in fact the only place where the convexity is required. 
		
		\paragraph{Converse}
		From \autoref{prop:adaptive_upper_bound} we get:
		\begin{equation}
			\lim_{\varepsilon \to 0} \lim_{n \to \infty} e(n, \varepsilon, \S_n, \T_n) \leq \min_{\substack{\E \in \S \\ \F \in \T}} D_{\mathrm{reg}}(\E\|\F)\,.
		\end{equation}
		If all channels $\E$ and $\F$ are classical, the regularization is not necessary~\cite{hayashi_discrimination_2009}. This can be easily seen as follows: Since the relative entropy is jointly convex, the optimization over the input state is achieved at an extreme point of the convex set of input states, and classically all extreme points are product distributions, which makes the regularization collapse and also eliminates the need for any reference system. 
		Hence
		\begin{equation}
			\lim_{\varepsilon \to 0} \limsup_{n \to \infty} e(n, \varepsilon, \S_n, \T_n) \leq \min_{\substack{\E \in \S \\ \F \in \T}} \max_{\nu \in \DM[\X]} D(\E(\nu)\|\F(\nu))
		\end{equation}
		which is what we wanted to prove. 
		
	\end{proof}

	\section{Open Problems}
    We have been able to provide new insight into the relation between adaptive and parallel channel discrimination strategies, by studying such strategies for composite channel hypotheses and demonstrating that there is a gap in the asymptotic setting. However, there are still many open questions regarding composite channel discrimination, as can be seen by the number of cells in \autoref{tab:main_results} for which we cannot give a definitive answer. Here, we want to briefly describe some of these problems and elaborate on possible solutions. 

    First of all, for classical composite hypotheses which are non-convex, we currently do not have an entropic expression for the optimal achievable rate of adaptive strategies, so far we have not even been able to prove that the worst-case i.i.d. upper bound cannot always be achieved\footnote{\cite{mosonyi_error_2021} provide an example where discriminating states is not possible with this upper bound. This, however, requires continuous probability distributions (i.e. the analogue of infinite-dimensional Hilbert spaces), which we do not consider here.}. Intuitively though, we consider it to be unlikely that this bound is always achieved, and we are also not particularly hopeful that there will be a simple entropic formula for the adaptive exponent. This comes from imagining generalizations of \autoref{example:adaptive_advantage}: In our example, determining the index of the channel within the two sets was possible perfectly after only one use, and hence one was able to use the optimal input state for all subsequent channel uses. One could, however, think about examples where determining this index is not perfectly possible, and hence one is expected to have to pay a certain (asymptotically non-vanishing) number of channel uses to distinguish the individual elements of the sets and then prepare the best input state, which should make the upper bound of \autoref{prop:adaptive_upper_bound} not achievable in this case. This procedure of determining which channels in the set we seem to be provided with also becomes significantly more complex once one stops having the symmetry between the sets $\S$ and $\T$ which we have in \autoref{example:adaptive_advantage}, and in the general case it is not obvious at all how one could capture in a simple entropic expression the intricacies of gaining knowledge about which elements in this set one might be given.

    Additionally, we would like to see if there is an advantage for adaptive strategies in the quantum composite i.i.d.~case when the sets of channels $\S_1$ and $\T_1$ are convex (recall that we showed that this is not possible classically). Given that the regularization is necessary in general in the quantum case, and the sets $\S_n$ and $\T_n$ will not be convex even if $\S_1$ and $\T_1$ are, we consider it not unlikely that there will again be an asymptotic gap between adaptive and parallel strategies. 

    Finally, we have only studied asymmetric error exponents in this work, even though it would of course also be very interesting to look at similar problems for symmetric error exponents and potentially also Hoeffding exponents.    
    
	\printbibliography
	
	\appendix
	\section{Technical Lemmas}
 Lemma~\ref{lem:wang_renner_upper_bound} is a well-known consequence of the data-processing inequality of the relative entropy and has been widely used in converse proofs in information theory. A statement and proof using our notation can be found in~\cite{wang_one-shot_2012}, although the essence of the statement can already be found much earlier, for example in~\cite[Theorem 2.2]{hiai_proper_1991} and also~\cite[(3.30)]{hayashi_quantum_2006}.
    \begin{lemma}[Upper bound on $D_H^\varepsilon$]\label{lem:wang_renner_upper_bound}
	Let $\rho, \sigma \in \DM$ be quantum states. Then for all $\varepsilon \in [0, 1)$ 
	\begin{equation}
		D_H^{\varepsilon}(\rho\|\sigma) \leq \frac{1}{1 - \varepsilon}\big(D(\rho\|\sigma) + h(\varepsilon)\big),
	\end{equation}
	where $h(\varepsilon)$ is the binary entropy function.
	\end{lemma}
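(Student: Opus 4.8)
The plan is to reduce the statement to the elementary two-outcome classical case by a single application of the data-processing inequality for the Umegaki relative entropy (here I use the convention $D_H^{\varepsilon}(\rho\|\sigma)=-\log\min\{\Tr(\sigma M):0\le M\le\IdentityMatrix,\ \Tr(M\rho)\ge 1-\varepsilon\}$). If $\rho\not\ll\sigma$ then $D(\rho\|\sigma)=\infty$ and the inequality is vacuous, so I may assume $D(\rho\|\sigma)<\infty$. Fix any feasible $M$, i.e.\ $0\le M\le\IdentityMatrix$ with $p\coloneqq\Tr(M\rho)\ge 1-\varepsilon$, and set $q\coloneqq\Tr(\sigma M)$; note $q>0$, since $\rho\ll\sigma$ forces any $M$ with $\Tr(\sigma M)=0$ to satisfy $\Tr(M\rho)=0<1-\varepsilon$. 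Applying the data-processing inequality to the binary measurement channel $X\mapsto\Tr(MX)\ketbra{0}{0}+\Tr((\IdentityMatrix-M)X)\ketbra{1}{1}$ would give
\begin{equation}
  D(\rho\|\sigma)\ \ge\ D\big((p,1-p)\,\big\|\,(q,1-q)\big),
\end{equation}
the right-hand side being the classical binary relative entropy.

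Next I would lower-bound this binary relative entropy. The map $t\mapsto D\big((t,1-t)\|(q,1-q)\big)$ is nondecreasing on $[q,1]$ (its derivative there is $\log\frac{t(1-q)}{q(1-t)}\ge 0$). Hence in the main case $q<1-\varepsilon$, using $q<1-\varepsilon\le p$, I obtain $D\big((p,1-p)\|(q,1-q)\big)\ge D\big((1-\varepsilon,\varepsilon)\|(q,1-q)\big)$; expanding the latter and dropping the nonnegative term $-\varepsilon\log(1-q)$ leaves
\begin{equation}
  D(\rho\|\sigma)\ \ge\ -(1-\varepsilon)\log q-h(\varepsilon),\qquad\text{equivalently}\qquad -\log q\ \le\ \frac{D(\rho\|\sigma)+h(\varepsilon)}{1-\varepsilon}.
\end{equation}
In the remaining case $q\ge 1-\varepsilon$ the bound is immediate: $-\log q\le\log\frac{1}{1-\varepsilon}\le\frac{h(\varepsilon)}{1-\varepsilon}\le\frac{D(\rho\|\sigma)+h(\varepsilon)}{1-\varepsilon}$, where the middle step is the elementary fact $h(\varepsilon)=-\varepsilon\log\varepsilon-(1-\varepsilon)\log(1-\varepsilon)\ge-(1-\varepsilon)\log(1-\varepsilon)$.

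Finally, since $D_H^{\varepsilon}(\rho\|\sigma)=\sup\{-\log\Tr(\sigma M):M\text{ feasible}\}$ and the displayed bound on $-\log\Tr(\sigma M)$ holds for \emph{every} feasible $M$, taking the supremum over $M$ yields the claim. I do not expect a genuine obstacle here — this is a classical converse-type argument; the only points requiring a little care are the case split on whether $q$ lies above or below $1-\varepsilon$ (so that the monotonicity step is invoked only on the interval where $D\big((\cdot,1-\cdot)\|(q,1-q)\big)$ is actually monotone) and the boundary values $p=1$ and $q\in\{1-\varepsilon,1\}$, each of which is a one-line check.
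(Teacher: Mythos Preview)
Your proof is correct and is precisely the standard data-processing argument the paper alludes to: the paper does not supply its own proof of this lemma but merely remarks that it is ``a well-known consequence of the data-processing inequality of the relative entropy'' and refers to \cite{wang_one-shot_2012, hiai_proper_1991, hayashi_quantum_2006}. Your reduction to the two-outcome measurement, followed by the monotonicity-and-drop-the-nonnegative-term computation, is exactly that argument spelled out in full.
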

 
	\begin{lemma}\label{lem:upper_bound_individual_mmt}
		For any two sets of states $S$ and $T$, and any $\varepsilon \in [0,1]$,
		\begin{equation}
			D_H^{\varepsilon}(S\|T) \leq \inf_{\substack{\rho \in S \\ \sigma \in T}} D_H^{\varepsilon}(\rho\|\sigma), 
		\end{equation}
		where the intuition is that the left-hand side corresponds to choosing one measurement for all pairings of $\rho$ and $\sigma$, and the right-hand side allows for a different measurement for each pairing.
	\end{lemma}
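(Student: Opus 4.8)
The plan is to unfold both sides into their defining variational problems over measurement operators and then recognize the set-valued problem as a \emph{constrained} version of each single-pair problem, so its optimal type~II value can only be worse. Recall that
\[
D_H^{\varepsilon}(S\|T) = -\log \inf_{\substack{0 \le M \le \IdentityMatrix \\ \alpha(M,S)\le\varepsilon}} \beta(M,T), \qquad \alpha(M,S) = \sup_{\rho\in S}\Tr\big((\IdentityMatrix-M)\rho\big), \quad \beta(M,T)=\sup_{\sigma\in T}\Tr(M\sigma),
\]
while for a single pair $\rho\in S$, $\sigma\in T$ one has $D_H^{\varepsilon}(\rho\|\sigma) = -\log \inf_{0 \le M \le \IdentityMatrix,\; \Tr(M\rho)\ge 1-\varepsilon} \Tr(M\sigma)$. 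So it suffices to show that for every fixed $\rho\in S$, $\sigma\in T$ the infimum defining the left-hand side is at least the infimum defining $D_H^{\varepsilon}(\rho\|\sigma)$, and then take the infimum over $\rho,\sigma$ at the end.

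First I would fix an arbitrary $\rho\in S$, $\sigma\in T$ and make two elementary observations. (i) Feasibility: if $\alpha(M,S)\le\varepsilon$, then in particular $\Tr\big((\IdentityMatrix-M)\rho\big)\le\alpha(M,S)\le\varepsilon$, i.e.\ $\Tr(M\rho)\ge 1-\varepsilon$; hence every $M$ feasible for the set problem is feasible for the single-pair problem, so $\{M : \alpha(M,S)\le\varepsilon\}\subseteq\{M : \Tr(M\rho)\ge 1-\varepsilon\}$. (ii) Objective: for any such $M$, $\beta(M,T)=\sup_{\sigma'\in T}\Tr(M\sigma')\ge\Tr(M\sigma)$. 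Combining these,
\[
\inf_{\substack{0\le M\le\IdentityMatrix\\\alpha(M,S)\le\varepsilon}}\beta(M,T) \;\ge\; \inf_{\substack{0\le M\le\IdentityMatrix\\\alpha(M,S)\le\varepsilon}}\Tr(M\sigma) \;\ge\; \inf_{\substack{0\le M\le\IdentityMatrix\\\Tr(M\rho)\ge1-\varepsilon}}\Tr(M\sigma),
\]
where the first step uses (ii) termwise and the second uses (i), since enlarging the feasible region can only decrease an infimum. Applying $-\log$, which reverses inequalities, gives $D_H^{\varepsilon}(S\|T)\le D_H^{\varepsilon}(\rho\|\sigma)$; since $\rho\in S$ and $\sigma\in T$ were arbitrary, taking the infimum over $\rho,\sigma$ yields the claim.

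There is essentially no genuine obstacle here; the only points requiring a little care are (a) the direction of the inequality after applying the order-reversing map $-\log$, (b) noting that the feasible set for the set problem is never empty ($M=\IdentityMatrix$ always qualifies), so both sides are well defined, and (c) the degenerate cases $S=\emptyset$ or $T=\emptyset$, which are vacuous under the convention $\inf\emptyset=+\infty$. In particular no data-processing, continuity, or compactness input is needed, and the same argument applies verbatim when the states carry a reference system, which is how the lemma is used in the converse parts of \autoref{thm:parr} and \autoref{prop:adaptive_upper_bound}.
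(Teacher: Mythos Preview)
Your proof is correct and follows essentially the same approach as the paper: both arguments rest on the two elementary observations that the feasible set for the composite problem is contained in the feasible set for each single-pair problem, and that the composite objective $\beta(M,T)$ dominates each individual $\Tr(M\sigma)$. The paper merely packages this slightly differently, working with $2^{-D_H^{\varepsilon}}$ and relaxing the constraint and the objective in two separate steps (first replacing $\sup_{\rho\in S}\Tr(\bar{M}\rho)\le\varepsilon$ by $\Tr(\bar{M}\rho)\le\varepsilon$ for a fixed $\rho$, then swapping $\inf_M$ and $\sup_\sigma$), whereas you fix the pair $(\rho,\sigma)$ at the outset and apply both relaxations at once before taking the infimum.
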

	\begin{proof}
		One finds that
		\begin{align}
			2^{-D_H^{\varepsilon}(S\|T)} &= \inf_{\substack{0 \leq M \leq \IdentityMatrix \\ \sup_{\rho \in S} \Tr(\bar{M} \rho)\leq \varepsilon}} \sup_{\sigma \in T} \Tr(M \sigma)  
			\geq \sup_{\rho \in S} \inf_{\substack{0 \leq M \leq \IdentityMatrix \\ \Tr(\bar{M} \rho)\leq \varepsilon}} \sup_{\sigma \in T} \Tr(M \sigma) \\
			&\geq \sup_{\rho \in S} \sup_{\sigma \in T} \inf_{\substack{0 \leq M \leq \IdentityMatrix \\ \Tr(\bar{M} \rho)\leq \varepsilon}} \Tr(M \sigma) = \sup_{\substack{\rho \in S \\ \sigma \in T}} 2^{-D^{\varepsilon}_H(\rho\|\sigma)}
		\end{align}
		where we used the notation $\bar{M} = \IdentityMatrix - M$, and the first inequality can be seen as follows: For any $\rho \in S$, if an $M$ is chosen such that $\sup_{\rho' \in S}  \Tr(\bar{M} \rho') \leq \varepsilon$, then obviously also $\Tr(\bar{M} \rho) \leq \varepsilon$ and hence the infimum on the right-hand is over a set of $M$ which can only be larger, and hence the expression can only be smaller. The second inequality is just a very basic property of infima and suprema, and the desired statement then follows by taking negative logarithms.	
	\end{proof}

	\renewcommand{\D}{\mathbf{D}}
		
	The following is a standard argument used to show that the size of reference systems can be restricted. We prove it again here for completeness in a setting that also includes infima over channels.
	\begin{lemma}\label{lem:restrict_reference_size}
		Let $\D$ be a quantum divergence satisfying the data-processing inequality, and let $\S, \T \subset \cptp$ be two arbitrary sets of channels. Then,
		\begin{equation}
			\sup_{\substack{\nu \in \DM[R \otimes A]\\ R \text{ arbitrary}}}	\inf_{\substack{\E \in \S \\ \F \in \T}} \D(\E(\nu) \|\F(\nu)) = \sup_{\substack{\nu \in \DM[R \otimes A] \\ R \cong A}}  \inf_{\substack{\E \in \S \\ \F \in \T}} \D(\E(\nu) \|\F(\nu)),
		\end{equation}
		i.e.~the size of the system $R$ can be restricted to be isomorphic to $A$. 
	\end{lemma}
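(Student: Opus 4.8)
The plan is to prove the non-trivial inequality \enquote{$\leq$}; the reverse direction is immediate, since the supremum on the right-hand side ranges over a subset of the states appearing on the left.

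First I would reduce to \emph{pure} input states. Given any $\nu \in \DM[R \otimes A]$ with $R$ arbitrary, pick a purification $\psi \in \DM[R' \otimes R \otimes A]$ of $\nu$ and regard $R'R$ as a single, larger reference system. The partial trace over $R'$ is a channel acting only on the reference, so it commutes with $\id \otimes \E$ for every channel $\E : A \to B$. Hence, by the data-processing inequality, for every pair $\E \in \S$, $\F \in \T$,
\[
	\D\big((\id \otimes \E)(\psi)\,\big\|\,(\id \otimes \F)(\psi)\big) \;\geq\; \D\big((\id \otimes \E)(\nu)\,\big\|\,(\id \otimes \F)(\nu)\big).
\]
Taking the infimum over $\E, \F$ on both sides shows it suffices to take the supremum on the left-hand side of the lemma over pure states only.

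Next I would cut down the reference dimension via the Schmidt decomposition. For a pure state $\psi_{\tilde R A}$ (with $\tilde R$ the enlarged reference), the Schmidt rank across the cut $\tilde R : A$ is at most $\dim A$, so $\psi$ is supported on a subspace $\tilde R_0 \subseteq \tilde R$ with $\dim \tilde R_0 \leq \dim A$. Fix a system $R''$ with $R'' \cong A$ and an isometry $V : \tilde R_0 \hookrightarrow R''$. Then $\rho \mapsto (V \otimes \IdentityMatrix)\,\rho\,(V \otimes \IdentityMatrix)^\dagger$ is a channel acting only on the reference, and on the relevant supports it admits a channel left-inverse (e.g.\ $\tau \mapsto (V^\dagger \otimes \IdentityMatrix)\,\tau\,(V \otimes \IdentityMatrix) + \Tr[(\IdentityMatrix - VV^\dagger \otimes \IdentityMatrix)\tau]\,\omega$ for a fixed state $\omega$). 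Applying the data-processing inequality in \emph{both} directions — again using that these maps commute with $\id \otimes \E$ and $\id \otimes \F$ — gives, for every $\E, \F$,
\[
	\D\big((\id \otimes \E)(\psi')\,\big\|\,(\id \otimes \F)(\psi')\big) \;=\; \D\big((\id \otimes \E)(\psi)\,\big\|\,(\id \otimes \F)(\psi)\big),
\]
where $\psi' \coloneqq (V \otimes \IdentityMatrix)\,\psi\,(V \otimes \IdentityMatrix)^\dagger \in \DM[R'' \otimes A]$ with $R'' \cong A$. Chaining the two displays, then taking the infimum over $\E, \F$ and the supremum over $\nu$, yields the claimed inequality.

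The only point requiring a little care — and the closest thing to an obstacle — is that every reduction must be carried out by an operation acting on the reference alone, so that it commutes with the (fixed but otherwise arbitrary) channels applied to $A$; this is precisely what allows the per-pair (in)equalities to survive the infimum over $\S \times \T$. Everything else (existence of purifications, the Schmidt-rank bound, and the isometric invariance of a data-processing divergence obtained from two-sided data processing) is standard.
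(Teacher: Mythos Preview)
Your proof is correct and follows essentially the same route as the paper: purify the input state (gaining via data processing), then use the Schmidt decomposition / uniqueness of purifications up to an isometry on the reference to replace the large purifying system by one isomorphic to $A$, invoking isometric invariance of $\D$. The paper phrases the second step as relating the purification to the \emph{canonical} purification of $\nu_A$, whereas you spell out the Schmidt-rank bound and the explicit isometric embedding with a channel left-inverse, but these are the same standard argument; your remark that every reduction acts on the reference alone (and hence commutes with the channels and survives the infimum over $\S\times\T$) is exactly the point the paper is using implicitly.
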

	
	\begin{proof}
		Let $R$ be an arbitrary reference system, and consider a state $\nu_{RA} \in \DM[R \otimes A]$. Let $\nu$ have a purification $\nu_{PRA}$. Furthermore, take the state $\nu_{A} = \Tr_{R} \nu_{AR}$ and its canonical purification $\nu_{S A}$, where $S \cong A$ (it is well known that such a canonical purification always exists). Then, as also $\nu_{PRA}$ is a purification of $\nu_A$, $\nu_{PRA}$ and $\nu_{SA}$ are related by an isometry $V: PR \to S$. As the channels $\E$ and $\F$ act as identity on these systems the isometry commutes with them, and as any divergence satisfying the data-processing property is invariant under isometries (see e.g.~\cite{khatri_principles_2020})
		\begin{equation}
			D(\E(\nu_{RA})\|\F(\nu_{RA})) \leq D(\E(\nu_{PRA})\|\F(\nu_{PRA}))  = D(\E(\nu_{SA})\|\F(\nu_{SA})))
		\end{equation}
		and hence the supremum can be restricted to reference systems isomorphic to $A$. Note that the set $\D(S \otimes A) = \D(A \otimes A)$ is compact.
	\end{proof}

 	\begin{lemma}[{Generalized minimax theorem~\cite[Theorem 5.2]{farkas_potential_2006}}]
		\label{lem:gen_minimax}
		Let $X$ be a compact and convex subset of a Hausdorff topological vector space and let $Y$ be a convex subset of a linear space. Let $f : X \times Y \to  \reals \cup \{\infty\}$ be lower semi-continuous on $X$ for fixed $y \in Y$, convex in x and concave in y. Then 
		\begin{equation}
			\sup_{y \in Y} \inf_{x \in X} f(x, y) = \inf_{y \in Y} \sup_{x \in X} f(x,y)\,.
		\end{equation}
	\end{lemma}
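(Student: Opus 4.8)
This is (a form of) Sion's minimax theorem, so although it is quoted from the literature let me sketch the proof I would give. Write $A \coloneqq \inf_{x\in X}\sup_{y\in Y} f(x,y)$ and $B \coloneqq \sup_{y\in Y}\inf_{x\in X} f(x,y)$. The inequality $B \le A$ is immediate: for any $x_0\in X$ and $y_0\in Y$ one has $\inf_x f(x,y_0)\le f(x_0,y_0)\le\sup_y f(x_0,y)$, and taking $\sup_{y_0}$ and then $\inf_{x_0}$ gives $B\le A$. The content is the reverse inequality, which I would phrase as: for every real $c<A$ (and, if $A=+\infty$, for every real $c$) there is a point $y^\star\in Y$ with $f(x,y^\star)\ge c$ for all $x\in X$; indeed this gives $B\ge\inf_x f(x,y^\star)\ge c$, and letting $c\uparrow A$ yields $B\ge A$. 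The degenerate case $A=-\infty$ cannot occur, since $f$ is $\reals\cup\{\infty\}$-valued, so $x\mapsto\sup_y f(x,y)$ is a supremum of lower semi-continuous functions, hence lower semi-continuous on the compact set $X$, and therefore attains its minimum, which is $>-\infty$.

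To produce $y^\star$ the plan is a compactness reduction followed by a finite-dimensional minimax. Fix $c<A$. For every $x$ we have $\sup_y f(x,y)>c$, so the sets $U_y\coloneqq\{x\in X:f(x,y)>c\}$, which are open by lower semi-continuity of $f(\cdot,y)$, cover $X$; by compactness $X=U_{y_1}\cup\dots\cup U_{y_n}$ for some $y_1,\dots,y_n\in Y$, i.e.~$\max_{1\le i\le n}f(x,y_i)>c$ for all $x$. Working inside the polytope $\mathrm{conv}\{y_1,\dots,y_n\}\subseteq Y$ (here convexity of $Y$ is used) and using concavity of $f(x,\cdot)$, one has $f(x,\sum_i\lambda_i y_i)\ge\sum_i\lambda_i f(x,y_i)$ for $\lambda\in\Delta_{n-1}$, so it suffices to find $\lambda^\star\in\Delta_{n-1}$ with $\sum_i\lambda_i^\star f(x,y_i)\ge c$ for all $x$ and then set $y^\star=\sum_i\lambda_i^\star y_i$. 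Writing $g(x,\lambda)\coloneqq\sum_i\lambda_i f(x,y_i)$ --- convex and lower semi-continuous in $x$ on the compact convex $X$, affine in $\lambda$ on the finite-dimensional simplex $\Delta_{n-1}$ --- what is needed is
\[
\max_{\lambda\in\Delta_{n-1}}\inf_{x\in X}g(x,\lambda)\;=\;\inf_{x\in X}\max_{\lambda\in\Delta_{n-1}}g(x,\lambda)\;=\;\inf_{x\in X}\max_{1\le i\le n}f(x,y_i)\;\ge\;c ,
\]
where the final inequality is exactly the finite-subcover property, the second equality is elementary (a linear function is maximised over a simplex at a vertex), and the outer $\max$ on the left is attained because $\lambda\mapsto\inf_x g(x,\lambda)$ is concave and upper semi-continuous on the compact simplex.

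Hence the genuine obstacle is the first equality in that display: the finite-dimensional minimax theorem for a function that is merely lower semi-continuous --- not continuous --- in its compact-convex argument. This is the classical Kneser--Fan / von Neumann minimax theorem, which I would establish via the Knaster--Kuratowski--Mazurkiewicz lemma (equivalently Sperner's lemma), applied to the closed convex sublevel sets $\{x\in X:g(x,\lambda)\le c'\}$ indexed by $\lambda$ over the simplex, or alternatively via a fixed-point theorem applied to the best-response correspondence. This combinatorial--topological core is the hard part; the remainder is the bookkeeping above, together with minor care about $\reals\cup\{\infty\}$-valued functions and about semicontinuity on the boundary of $\Delta_{n-1}$, which can be sidestepped by first running the argument with $c-\delta$ in place of $c$ and letting $\delta\downarrow0$.
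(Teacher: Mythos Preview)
The paper does not give a proof of this lemma; it is quoted directly from \cite[Theorem~5.2]{farkas_potential_2006} and invoked as a black box in the proof of \autoref{prop:main_exchange}. Your sketch is a correct and standard route to a Sion-type minimax theorem: the easy inequality, the open-cover reduction via compactness of $X$ and lower semi-continuity of $f(\cdot,y)$ to finitely many $y_1,\dots,y_n$, the passage to the simplex $\Delta_{n-1}$ using concavity in $y$, and then deferral of the remaining minimax for $g(x,\lambda)$ to a Kneser--Fan/KKM argument. You also implicitly correct what appears to be a typographical slip in the displayed equation of the lemma, where the right-hand side should read $\inf_{x\in X}\sup_{y\in Y}$ rather than $\inf_{y\in Y}\sup_{x\in X}$. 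Since the paper supplies no argument of its own, there is nothing to compare against; your outline is sound and would serve as a self-contained justification of the cited result.
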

 \bigskip

	We say that a divergence $\D$ satisfies the direct-sum property, if 
	\begin{equation}
		\D\infdivx*{\bigoplus_{i=1}^n p_i \rho_i }{\bigoplus_{i=1}^n p_i \sigma_i}= \sum_{i=1}^n p_i \D(\rho_i\|\sigma_i)\,.
	\end{equation}
	whenever $\rho_i, \sigma_i \in \HS_i$ are two sets of density matrices and $\{p_i\}_{i = 1}^n$ is a probability distribution.

	\begin{proposition}\label{prop:main_exchange}
		Let $\S,\T \subset \cptp$ be two closed, convex sets of channels. Let $\D$ be a quantum divergence that satisfies the data-processing inequality, is (jointly) lower semi-continuous, and also satisfies the direct-sum property. Then
		\begin{equation}
			\inf_{\substack{\E \in \S \\ \F \in \T}} \sup_{\nu \in \DM[R \otimes A]} \D(\E(\nu)\|\F(\nu)) = \sup_{\nu \in \DM[R \otimes A]} \inf_{\substack{\E \in \S \\ \F \in \T}}  \D(\E(\nu)\|\F(\nu)) .\label{eq:main_exchange}
		\end{equation}
	\end{proposition}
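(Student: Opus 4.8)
The plan is to deduce this from the generalized minimax theorem \autoref{lem:gen_minimax} by checking its hypotheses for the function $f(\E\otimes\F, \nu) \coloneqq \D(\E(\nu)\|\F(\nu))$, where the ``$x$'' variable ranges over the pair $(\E,\F) \in \S \times \T$ and the ``$y$'' variable is the input state $\nu \in \DM[R\otimes A]$. We must be a little careful about which variable plays which role: in \autoref{lem:gen_minimax} the compactness is required of the set over which we take the infimum, which here is $\S\times\T$, and indeed $\cptp$ is compact (it is a closed bounded subset of a finite-dimensional space, e.g.\ via Choi matrices), so $\S\times\T$ is compact and convex as a product of closed convex subsets of compact convex sets. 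The set $\DM[R\otimes A]$ of input states is convex, which is all that is required of the ``$y$''-side. So the statement we actually get from \autoref{lem:gen_minimax}, after swapping the roles of the two statements of that lemma (it is symmetric under exchanging which side is $\sup\inf$ versus $\inf\sup$ as long as the convexity/compactness/semicontinuity hypotheses are met on the correct sides), is exactly \eqref{eq:main_exchange}.

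The three properties of $f$ to verify are: (i) joint lower semi-continuity in $(\E,\F)$ for fixed $\nu$ — actually we need lower semi-continuity on the compact set $\S\times\T$; (ii) convexity in $(\E,\F)$; and (iii) concavity in $\nu$. For (i): the map $(\E,\F)\mapsto (\E(\nu),\F(\nu))$ is continuous (it is linear in the Choi matrices), and $\D$ is assumed jointly lower semi-continuous, so the composition is lower semi-continuous. For (iii): fix $\E,\F$; since $\E$ and $\F$ are linear, a convex combination $\nu = \sum_i p_i \nu_i$ maps to $\E(\nu) = \sum_i p_i \E(\nu_i)$ and likewise for $\F$. Embedding the mixture as a block-diagonal state on an enlarged space $\bigoplus_i \nu_i$ (with the classical label as a flag), data-processing under the pinching/``forget the flag'' channel together with the direct-sum property gives
\begin{equation}
  \D\Big(\E\big(\textstyle\sum_i p_i \nu_i\big)\Big\|\F\big(\textstyle\sum_i p_i \nu_i\big)\Big) \;\le\; \D\Big(\bigoplus_i p_i \E(\nu_i)\Big\|\bigoplus_i p_i \F(\nu_i)\Big) \;=\; \sum_i p_i \D(\E(\nu_i)\|\F(\nu_i)),
\end{equation}
which is precisely concavity in $\nu$. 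For (ii), convexity in $(\E,\F)$: here I would use the \emph{joint} data-processing/convexity of $\D$ in an analogous way. Given $(\E,\F) = \sum_i p_i (\E_i,\F_i)$, consider feeding $\nu$ through the channels $\E_i$ (resp.\ $\F_i$) in superposition with a classical register recording $i$: the state $\bigoplus_i p_i \E_i(\nu)$ versus $\bigoplus_i p_i \F_i(\nu)$ processes (by discarding the register) to $\E(\nu)$ versus $\F(\nu)$, so data-processing plus the direct-sum property again yields $\D(\E(\nu)\|\F(\nu)) \le \sum_i p_i \D(\E_i(\nu)\|\F_i(\nu))$.

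The main obstacle is getting property (ii) cleanly, since ``convexity in the pair of channels'' is not one of the three stated axioms for $\D$ (data-processing, lower semi-continuity, direct-sum) and must be synthesized from them; the superposition-with-a-flag trick above is the standard way to do this, but one has to make sure the flag register is set up so that the \emph{same} probability distribution $\{p_i\}$ and the \emph{same} partition appears on both the $\E$-side and the $\F$-side, so that discarding it is a legitimate channel acting identically on both arguments — the direct-sum property then converts the resulting block-diagonal divergence into the desired convex combination. Once (i)--(iii) are in place, the result is an immediate application of \autoref{lem:gen_minimax} with $X = \S\times\T$ (compact convex, semicontinuity on it) and $Y = \DM[R\otimes A]$ (convex), noting that one may also invoke \autoref{lem:restrict_reference_size} if one wishes to take $R\cong A$ so that $Y$ is a fixed compact set, though this is not needed for the bare statement.
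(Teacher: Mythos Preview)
Your argument has a genuine gap at step (iii). The inequality you derive,
\[
\D\Big(\E\big(\textstyle\sum_i p_i \nu_i\big)\Big\|\F\big(\textstyle\sum_i p_i \nu_i\big)\Big) \;\le\; \sum_i p_i\, \D(\E(\nu_i)\|\F(\nu_i)),
\]
is the definition of \emph{convexity} in $\nu$, not concavity. Indeed the map $\nu \mapsto \D(\E(\nu)\|\F(\nu))$ is convex for any $\D$ satisfying the stated assumptions (it is the composition of a jointly convex function with a pair of linear maps); for the Umegaki relative entropy this is the textbook joint-convexity statement. Since \autoref{lem:gen_minimax} requires concavity in the $\sup$-variable, it cannot be applied directly with $Y=\DM[R\otimes A]$. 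Swapping the roles of $X$ and $Y$ does not help either, as one would then need concavity in $(\E,\F)$, which fails for the same reason.

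The paper's proof circumvents exactly this obstruction by enlarging the $Y$-side to discrete probability measures $\mu$ on $\DM[R\otimes A]$ and working with
\[
f\big((\E,\F),\mu\big)\;\coloneqq\;\underset{\nu\sim\mu}{\mathbb{E}}\,\D(\E(\nu)\|\F(\nu)),
\]
which is \emph{linear} (hence concave) in $\mu$ while remaining convex and lower semi-continuous in $(\E,\F)$. After applying \autoref{lem:gen_minimax} in that setting, the left-hand side is lower-bounded by restricting to Dirac measures (recovering a supremum over single states), and on the right-hand side one uses Carath\'eodory's theorem together with the direct-sum property to absorb a finite mixture $\sum_x p_x \nu_x$ into a single state $\tilde\nu=\sum_x p_x\,\ketbra{x}\otimes\nu_x$ on an enlarged reference; \autoref{lem:restrict_reference_size} then brings the reference back down to $R\cong A$. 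Your verification of (i) and (ii) is correct and matches the paper's argument; the missing ingredient is precisely this passage through measures to manufacture the needed concavity.
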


	\begin{proof}
        After publishing the first version of this paper, we became aware that a similar result has also previously been shown in \cite[Theorem 2]{Gour_2019}.
        This proof is inspired by the proof of the similar minimax result \cite[Lemma 13]{brandao_adversarial_2020}.
		The $\geq$ direction follows immediately from very basic properties of $\inf$ and $\sup$. For the $\leq$ direction, let $\mu$ be a discrete measure on the set of density matrices $\DM[RA]$, and consider the function:
		\begin{equation}
			f((\E, \F), \mu) \coloneqq \underset{\nu \sim \mu}{\mathbb{E}} \D(\E(\nu)\|\F(\nu))\,.
		\end{equation}
		where we write $\underset{\nu \sim \mu}{\mathbb{E}}$ for the expectation value with respect to the measure $\mu$ (this is the same as integrating $\nu$ with respect to the measure $\mu$).
		It is easy to see that a divergence that satisfies the data-processing inequality and the direct-sum property is jointly convex (see e.g.~\cite{khatri_principles_2020}). Hence the function $f$ is convex in its first argument, and it is clearly also linear (and hence concave) in the second argument. Note also that the set of channels $\cptp$ is bounded (for example in diamond norm) and hence compact, and so are the closed subsets $\S$ and $\T$. Then, by \autoref{lem:gen_minimax} we have
		\begin{equation}
			\inf_{\substack{\E \in \S \\ \F \in \T}} \sup_{\mu} \underset{\nu \sim \mu}{\mathbb{E}} \D(\E(\nu)\|\F(\nu)) = \sup_{\mu} \inf_{\substack{\E \in \S \\ \F \in \T}} \underset{\nu \sim \mu}{\mathbb{E}}  \D(\E(\nu)\|\F(\nu)).
		\end{equation}
		We can lower bound the left-hand side by restricting the supremum to singular (i.e.~Dirac) measures, which recovers the left-hand side of \eqref{eq:main_exchange}. For the right-hand side, note that by Caratheodory's theorem we can write the expectation value as a convex combination with a finite number of terms. For a given $\mu$ we will write
		\begin{equation}
			\underset{\nu \sim \mu}{\mathbb{E}}  \D(\E(\nu)\|\F(\nu)) = \sum_x p_x \D(\E(\nu_x)\|\F(\nu_x)).
		\end{equation}
		Now define the new state
		\begin{equation}
			\tilde{\nu}_{XRA} \coloneqq \sum_x p_x \ketbra{x}{x} \otimes \nu_x.
		\end{equation}
		By the direct-sum property of $\D$ (note that $\E$ and $\F$ act with an identity on the new system $X$)
		\begin{equation}
			\D(\E(\tilde{\nu})\|\F(\tilde{\nu})) = \sum_x p_x \D(\E(\nu_x)\|\F(\nu_x))\, ,
		\end{equation}
		and so the supremum over measures can be replaced by a supremum over states $\nu_{XRA}$. Finally, by \autoref{lem:restrict_reference_size} the supremum can further be restricted to states $\nu_{RA}$ where $R \cong A$.
	\end{proof}
	It is well-known (see e.g.~\cite{khatri_principles_2020}) that the properties of $\D$ required in \autoref{prop:main_exchange} are satisfied for the Umegaki relative entropy. They are also satisfied for the measured relative entropy:
	\begin{lemma}\label{lem:measured_divergence_properties}
		The measured relative entropy $D_M$ satisfies the data-processing inequality, is lower semi-continuous, and satisfies the direct-sum property. 
	\end{lemma}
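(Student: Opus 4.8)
The plan is to verify the three properties one at a time, each time reducing to a standard fact about the classical relative entropy via the elementary observation that a POVM composed with a quantum channel, or restricted to an invariant subspace, is again a POVM.

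For the data-processing inequality I would argue as follows: given a channel $\Phi$ and any POVM $\M$ on the output space, the composition $\M\circ\Phi$ is again a measurement channel (it corresponds to the POVM with elements $\Phi^\dagger(M_i)$), so $D(\M(\Phi(\rho))\|\M(\Phi(\sigma))) = D((\M\circ\Phi)(\rho)\|(\M\circ\Phi)(\sigma)) \le D_M(\rho\|\sigma)$; taking the supremum over $\M$ on the left then yields $D_M(\Phi(\rho)\|\Phi(\sigma)) \le D_M(\rho\|\sigma)$. For lower semi-continuity I would note that for each fixed $\M$ the map $(\rho,\sigma)\mapsto(\M(\rho),\M(\sigma))$ is linear, hence continuous, and the classical relative entropy is jointly lower semi-continuous on pairs of probability distributions; therefore $(\rho,\sigma)\mapsto D(\M(\rho)\|\M(\sigma))$ is jointly lower semi-continuous, and $D_M$, being the pointwise supremum over $\M$ of such functions, is jointly lower semi-continuous since a supremum of lower semi-continuous functions is lower semi-continuous.

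For the direct-sum property I would prove the two inequalities separately. For \enquote{$\le$}, given an arbitrary POVM $\{M_j\}_j$ on $\bigoplus_i\HS_i$, set $M_j^{(i)} := P_i M_j P_i$ with $P_i$ the projection onto $\HS_i$; then $\{M_j^{(i)}\}_j$ is a POVM on $\HS_i$ and, since $\bigoplus_i p_i\rho_i$ is block-diagonal, $\Tr(M_j(\bigoplus_i p_i\rho_i)) = \sum_i p_i\Tr(\rho_i M_j^{(i)})$, and likewise for $\sigma$. Thus the two output distributions are the $p_i$-weighted mixtures of $\M^{(i)}(\rho_i)$ and $\M^{(i)}(\sigma_i)$, so joint convexity of the classical relative entropy gives $D(\M(\bigoplus_i p_i\rho_i)\|\M(\bigoplus_i p_i\sigma_i)) \le \sum_i p_i D(\M^{(i)}(\rho_i)\|\M^{(i)}(\sigma_i)) \le \sum_i p_i D_M(\rho_i\|\sigma_i)$, and taking the supremum over $\{M_j\}_j$ finishes \enquote{$\le$}. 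For \enquote{$\ge$}, I would fix a near-optimal POVM $\M^{(i)} = \{M_j^{(i)}\}_j$ on $\HS_i$ for each $D_M(\rho_i\|\sigma_i)$ and assemble the operators $M_j^{(i)}$ (extended by zero outside $\HS_i$) into one POVM on $\bigoplus_i\HS_i$ with outcome set $\{(i,j)\}$; this is legitimate since $\sum_{i,j} M_j^{(i)} = \sum_i \IdentityMatrix_{\HS_i} = \IdentityMatrix$. Measuring $\bigoplus_k p_k\rho_k$ with it produces the distribution $(i,j)\mapsto p_i\Tr(\rho_i M_j^{(i)})$, and analogously for $\sigma$, whose classical relative entropy equals $\sum_i p_i D(\M^{(i)}(\rho_i)\|\M^{(i)}(\sigma_i))$ by the direct-sum property of the classical relative entropy; optimizing the $\M^{(i)}$ makes this arbitrarily close to $\sum_i p_i D_M(\rho_i\|\sigma_i)$, giving \enquote{$\ge$}. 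Combining the two inequalities yields the direct-sum property.

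I do not expect a genuine obstacle, since the content is entirely standard; the step most likely to hide a subtlety is the bookkeeping in the direct-sum part — checking that the restricted family $\{P_i M_j P_i\}_j$ and the assembled family $\{M_j^{(i)}\}_{i,j}$ really are POVMs, and that the induced classical outcome distributions have exactly the claimed block structure $(i,j)\mapsto p_i\,(\cdot)$ — after which everything reduces to the lower semi-continuity, the direct-sum identity, and the joint convexity of the classical relative entropy.
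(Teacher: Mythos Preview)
Your proposal is correct and follows essentially the same route as the paper: the data-processing inequality via $\M\circ\Phi$ being a POVM, lower semi-continuity via suprema of lower semi-continuous functions, and the direct-sum property via joint convexity for $\leq$ and an assembled block-POVM with outcome set $\{(i,j)\}$ for $\geq$. Your version is in fact slightly more explicit about the bookkeeping (the restricted POVMs $P_i M_j P_i$ and the near-optimal approximation), which is exactly the subtlety you correctly flagged.
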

	\begin{proof}
		For the data-processing inequality it is easy to see that concatenating a POVM $\M$ (treated as a classical-quantum channel as explained in the mathematical preliminaries) and an arbitrary quantum channel $\E$ as $\M \circ \E$ yields another POVM, and hence $D_M$ satisfies the data-processing inequality as the supremum over all POVMs of the form $\M \circ \E$ can only be smaller than the supremum over all POVMs. $D_M$ is also lower semi-continuous as a supremum of lower semi-continuous functions is lower semi-continuous, and $D$ is lower semi-continuous. For the direct-sum property, let $\HS = \bigoplus_{i = 1}^n \HS_i$, and $\rho_i$, $\sigma_i \in \DM[\HS_i]$ be two sets of density matrices, $\M_i$ be a set of POVMs each on $\HS_i$ and $\{p_i\}_i$ be a probability distribution, where all indices are in the range $i = 1, ... n$. Define $\rho = \bigoplus_{i = 1}^n p_i \rho_i$, $\sigma = \bigoplus_{i = 1}^n p_i \sigma_i$ and $\bar{\M} \coloneqq \bigoplus_{i = 1}^n \M_i$. It is easy to see that 
		\begin{equation}
			\bar{\M}\qty(\bigoplus_{i = 1}^n p_i \rho_i) = \bigoplus_{i = 1}^{n} p_i \M_i(\rho_i)
		\end{equation}
		and hence
		\begin{align}
			D_M\infdivx*{\bigoplus_{i = 1}^n p_i \rho_i}{\bigoplus_{i = 1}^n p_i \sigma_i} &\geq D\infdivx*{\bar{M}\qty(\bigoplus_{i = 1}^n p_i \rho_i)}{\bar{M}\qty(\bigoplus_{i = 1}^n p_i \sigma_i)}\\
			&= D\infdivx*{\bigoplus_{i = 1}^n p_i \M_i(\rho_i)}{\bigoplus_{i = 1}^n p_i \M_i(\sigma_i)} \\
			&= \sum_{i=1}^n p_i D(\M_i(\rho_i)\|\M_i(\sigma_i)),
		\end{align}
		which leads to the $\geq$ direction in the direct-sum property after optimizing over the $\M_i$. For the reverse direction, let $\M$ be an arbitrary POVM on $\HS$. Then
		\begin{equation}
			D(\M(\rho)\|\M(\sigma)) = D\infdivx*{\sum_i p_i \M(\rho_i)}{\sum_i p_i \M(\sigma_i)} \leq \sum_i p_i D(\M(\rho_i)\|\M(\sigma_i)) \leq \sum_i p_i D_M(\rho_i\|\sigma_i),
		\end{equation}
		where we used the joint convexity of the relative entropy. The claim follows by optimizing over all $\M$. 
	\end{proof}

	\begin{lemma}\label{lem:umegaki_continuity}
		Let $\E, \F \in \cptp$ be such that $D_{\max}(\E\|\F) < \infty$, and let $R$ be an arbitrary auxiliary system. Then the function
		\begin{equation}
			(\M, \nu) \mapsto D(\M \circ \E(\nu)\|\M \circ \F(\nu))
		\end{equation}
		is continuous on $\cptp[B \to C] \times \DM[R \otimes A]$.
	\end{lemma}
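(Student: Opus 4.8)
The plan is to reduce the assertion to the continuity of the Umegaki relative entropy on the set of state pairs obeying a fixed operator domination, and to prove that latter statement via the integral representation of the logarithm together with dominated convergence. Throughout, $\E, \F \in \cptp[A \to B]$, and identities on the reference system $R$ are suppressed as in the statement.

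\textbf{Step 1 (the domination coming from $D_{\max}(\E\|\F) < \infty$).} Set $\lambda := 2^{D_{\max}(\E\|\F)} < \infty$. By definition of the max-divergence channel divergence, $D_{\max}(\E\|\F) = \sup_{\nu} D_{\max}((\id_R\otimes\E)(\nu)\|(\id_R\otimes\F)(\nu))$, so for every reference system $R$ and every $\nu \in \DM[R\otimes A]$ one has the operator inequality $(\id_R\otimes\E)(\nu) \le \lambda\,(\id_R\otimes\F)(\nu)$. Applying the completely positive map $\id_R\otimes\M$ to both sides gives $\M\circ\E(\nu) \le \lambda\,\M\circ\F(\nu)$ for every $\M$ and every $\nu$. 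Hence, writing $\rho := \M\circ\E(\nu)$ and $\sigma := \M\circ\F(\nu)$, the pair $(\rho,\sigma)$ always lies in
\begin{equation}
	\mathcal{K}_\lambda := \Set{(\rho,\sigma) \in \DM[R\otimes C] \times \DM[R\otimes C] | \rho \le \lambda\,\sigma}\,.
\end{equation}
The map $(\M,\nu) \mapsto (\rho,\sigma)$ is continuous, being bilinear application of channels on finite-dimensional operator spaces, so it suffices to prove that $(\rho,\sigma) \mapsto D(\rho\|\sigma)$ is continuous on $\mathcal{K}_\lambda$. Note that $\rho \le \lambda\sigma$ forces $\supp{\rho} \subseteq \supp{\sigma}$, so this relative entropy is automatically finite on $\mathcal{K}_\lambda$ (indeed $D(\rho\|\sigma) \le D_{\max}(\rho\|\sigma) \le \log\lambda$).

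\textbf{Step 2 (integral representation and dominated convergence).} From the scalar identity $\log x = \int_0^\infty (\tfrac{1}{1+s} - \tfrac{1}{x+s})\,ds$ and the functional calculus one gets, whenever $\supp{\rho}\subseteq\supp{\sigma}$,
\begin{equation}
	D(\rho\|\sigma) = \int_0^\infty g_s(\rho,\sigma)\,ds\,, \qquad g_s(\rho,\sigma) := \Tr(\rho(\sigma+s)^{-1}) - \Tr(\rho(\rho+s)^{-1})\,.
\end{equation}
For each fixed $s>0$, $g_s$ is continuous on $\mathcal{K}_\lambda$ because $\rho+s, \sigma+s \ge s > 0$ make both resolvents continuous in $(\rho,\sigma)$. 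I would then supply an integrable majorant uniform over $\mathcal{K}_\lambda$: first, $\rho \le \lambda\sigma$ gives $\Tr(\rho(\sigma+s)^{-1}) \le \lambda\Tr(\sigma(\sigma+s)^{-1}) \le \lambda\,d$ with $d := \dim(R\otimes C)$, and $\Tr(\rho(\rho+s)^{-1}) \le d$, hence $|g_s| \le (\lambda+1)d$; second, the resolvent identity $(\sigma+s)^{-1} - (\rho+s)^{-1} = (\sigma+s)^{-1}(\rho-\sigma)(\rho+s)^{-1}$ together with $\norm{\rho(\rho+s)^{-1}}_\infty \le 1/s$, $\norm{(\sigma+s)^{-1}}_\infty \le 1/s$, $\norm{\rho-\sigma}_1 \le 2$, and the Hölder inequality $|\Tr(AB)| \le \norm{A}_\infty\norm{B}_1$ give $|g_s| \le 2/s^2$. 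Thus $|g_s(\rho,\sigma)| \le h(s) := \min\{(\lambda+1)d,\,2/s^2\}$ for all $(\rho,\sigma) \in \mathcal{K}_\lambda$, and $\int_0^\infty h(s)\,ds < \infty$. Dominated convergence then shows $(\rho,\sigma) \mapsto \int_0^\infty g_s(\rho,\sigma)\,ds = D(\rho\|\sigma)$ is continuous on $\mathcal{K}_\lambda$; composing with the continuous map of Step 1 finishes the proof.

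\textbf{Expected main obstacle.} The delicate point is the uniform control of $g_s$ near $s=0$: this is exactly the regime where relative entropy fails to be continuous in general (it is only lower semi-continuous, the failure occurring when the second argument loses rank in the limit). It is the domination $\rho \le \lambda\sigma$, available precisely because $D_{\max}(\E\|\F) < \infty$, that both keeps $D$ finite and supplies the $(\rho,\sigma)$-uniform bound $|g_s| \le (\lambda+1)d$ that tames the small-$s$ part of the integral. A more elementary alternative would split $-\Tr(\rho\log\sigma)$ over the support of the limiting $\sigma$ and its orthogonal complement, using $-t\log t \to 0$ on the complement, but one then has to argue separately that the off-support and cross terms of $\rho$ vanish under $\rho \le \lambda\sigma$; the integral representation packages all of this into a single dominated-convergence step.
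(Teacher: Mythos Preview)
Your argument is correct. It differs from the paper's proof, which never touches the integral representation: instead the paper observes that for $0<\alpha<1$ the Petz--R\'enyi divergence $D_\alpha(\rho\|\sigma)=\tfrac{1}{\alpha-1}\log\Tr(\rho^\alpha\sigma^{1-\alpha})$ is jointly continuous, and then invokes an external continuity bound (\cite[Lemma~9]{bergh_parallelization_2022}) of the form $|D_\alpha(\rho\|\sigma)-D(\rho\|\sigma)|\le (1-\alpha)\log^2(2^{D_{\max}(\rho\|\sigma)}+2)$. Since $D_{\max}(\M\circ\E(\nu)\|\M\circ\F(\nu))\le D_{\max}(\E\|\F)$ by data processing, this bound is uniform in $(\M,\nu)$, so $D_\alpha\to D$ uniformly as $\alpha\uparrow 1$ and the limit is continuous.

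Both proofs rest on exactly the same structural observation you isolate in Step~1: the hypothesis $D_{\max}(\E\|\F)<\infty$ forces the output pair $(\rho,\sigma)$ into a set $\{\rho\le\lambda\sigma\}$ on which the relative entropy behaves well, and the task reduces to continuity of $D$ on that set. The paper outsources that last step to a quotable R\'enyi-to-Umegaki estimate, whereas you prove it from scratch via dominated convergence on the resolvent integral. Your route is self-contained and arguably more elementary; the paper's is shorter on the page but relies on a lemma proved elsewhere. Either way the real content is the same: the domination $\rho\le\lambda\sigma$ is precisely what tames the small-$s$ (equivalently, $\alpha\uparrow 1$) regime where lower semicontinuity would otherwise be the best one could hope for.
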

	\begin{proof}
		It is easy to see that for $0 < \alpha < 1$ the Petz-\renyi divergence \cite{petz_quasi-entropies_1986}
		\begin{equation}
			D_\alpha(\rho\|\sigma) \coloneqq \frac{1}{\alpha - 1} \log \Tr(\rho^{\alpha} \sigma^{1 - \alpha})
		\end{equation}
		is jointly continuous in $\rho$ and $\sigma$. Using the continuity bound from~\cite[Lemma 9]{bergh_parallelization_2022}, we find
		\begin{equation}
			| D_{\alpha}(\rho\|\sigma) - D(\rho\|\sigma) | \leq (1 - \alpha) \log^2\qty(2^{D_{\max}(\rho\|\sigma)} + 2)\,.
		\end{equation}
		As $D_{\max}(\M \circ \E(\nu) \|\M \circ \F(\nu)) \leq D_{\max}(\E\|\F)$ which is independent of $\M$ and $\nu$, we get that $D_{\alpha}(\M \circ \E(\nu)\|\M \circ \F(\nu))$ converges to $D(\M \circ \E(\nu)\|\M \circ \F(\nu))$ uniformly in $\nu$ and $\M$ as $\alpha \uparrow 1$. Hence, also the limiting function $(\M, \nu) \mapsto D(\M \circ \E(\nu)\|\M \circ \F(\nu))$ is continuous.
	\end{proof}
	
	\begin{lemma}\label{lem:sup_achieved}
		Let $\E, \F \in \cptp$, and $R$ be an arbitrary auxiliary system. Then, if $\D$ is either $D$ or $D_M$:
		\begin{equation}
			\sup_{\nu \in \DM[R \otimes A]} \D(\E(\nu) \|\F(\nu)) = \max_{\substack{\nu \in \DM[R \otimes A] \\ R \cong A}} \D(\E(\nu) \|\F(\nu)) 
		\end{equation}
		i.e.~the supremum is achieved and $R$ can be chosen isomorphic to $A$. 
	\end{lemma}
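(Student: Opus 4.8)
The plan is to first shrink the reference system and then split according to whether $D_{\max}(\E\|\F)$ is finite; throughout I use that $D$ satisfies data processing and that, by \autoref{lem:measured_divergence_properties}, so does $D_M$. By \autoref{lem:restrict_reference_size} applied to the singleton sets $\S=\{\E\}$, $\T=\{\F\}$, it suffices to show that $\sup_{\nu}\D(\E(\nu)\|\F(\nu))$ over the compact set of states $\nu\in\DM[R\otimes A]$ with $R\cong A$ is attained.

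\emph{The case $D_{\max}(\E\|\F)<\infty$.} Here I would invoke \autoref{lem:umegaki_continuity}, applied to the channels $\id_R\otimes\E$ and $\id_R\otimes\F$ (which have the same max-divergence, since the $D_{\max}$ channel divergence is stable under tensoring with an identity): for any fixed output system the map $(\M,\nu)\mapsto D(\M(\E(\nu))\|\M(\F(\nu)))$ is jointly continuous. Taking $\M=\id$ shows $\nu\mapsto D(\E(\nu)\|\F(\nu))$ is continuous on the compact set $\{\nu:R\cong A\}$, hence attains its maximum; this settles $\D=D$. For $\D=D_M$ one additionally uses the standard fact that the supremum defining $D_M(\rho\|\sigma)$ is attained and may be restricted to a compact family $\mathcal K$ of POVMs whose number of outcomes is bounded in terms of $\dim(R\otimes B)$; then joint continuity on the compact set $\mathcal K\times\{\nu:R\cong A\}$ implies that $\nu\mapsto\max_{\M\in\mathcal K}D(\M(\E(\nu))\|\M(\F(\nu)))=D_M(\E(\nu)\|\F(\nu))$ is continuous, so it too attains its maximum.

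\emph{The case $D_{\max}(\E\|\F)=\infty$.} I claim there is a pure $\nu_0$ on $R\otimes A$ with $R\cong A$ such that $\E(\nu_0)\not\ll\F(\nu_0)$. Granting this, $D(\E(\nu_0)\|\F(\nu_0))=+\infty$ by definition, and $D_M(\E(\nu_0)\|\F(\nu_0))=+\infty$ as well (measuring with the projection onto $\supp{\F(\nu_0)}$ and its orthocomplement yields classical distributions with non-overlapping supports), so the supremum equals $+\infty$ and is attained at $\nu_0$. To produce $\nu_0$, recall the standard fact that the $D_{\max}$ channel divergence is attained at the maximally entangled input $\ket{\Phi}$ on $R\otimes A$ (equivalently, it equals $D_{\max}$ of the unnormalized Choi operators of $\E$ and $\F$): writing any pure input as $(K_R\otimes\IdentityMatrix_A)\ket{\Phi}$ up to normalization and conjugating an operator inequality $J_\E\le\lambda J_\F$ by $K_R\otimes\IdentityMatrix_A$ gives $D_{\max}(\E(\psi)\|\F(\psi))\le D_{\max}(\E(\Phi)\|\F(\Phi))$, while \autoref{lem:restrict_reference_size} together with purification (using data processing for $D_{\max}$) shows pure inputs with $R\cong A$ already compute the channel divergence. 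Hence $\E(\Phi)\ll\F(\Phi)$ would force $D_{\max}(\E\|\F)<\infty$; since we are in the case $D_{\max}(\E\|\F)=\infty$, we must have $\E(\Phi)\not\ll\F(\Phi)$, and $\nu_0=\ketbra{\Phi}{\Phi}$ works.

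The main obstacle is the $\D=D_M$ part in the finite-$D_{\max}$ case: \autoref{lem:umegaki_continuity} controls only measurements into a fixed output system, so to upgrade it to continuity — and hence attainment — for $D_M$ itself one must first reduce the supremum over all POVMs to a compact, bounded-outcome family. Everything else is routine bookkeeping with compactness, data processing, and the elementary fact that relative entropies of states with non-overlapping supports are infinite.
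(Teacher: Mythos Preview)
Your proposal is correct and follows essentially the same approach as the paper: the same reduction of the reference system via \autoref{lem:restrict_reference_size}, the same case split on whether $D_{\max}(\E\|\F)$ is finite, the same use of the maximally entangled input in the infinite case (the paper simply cites~\cite{wilde_amortized_2020} for this), and the same appeal to \autoref{lem:umegaki_continuity} plus compactness in the finite case. The only minor difference is how you compactify the POVM optimization for $D_M$: the paper restricts to projective rank-one (von Neumann) measurements via~\cite{berta_variational_2017}, which are parametrized by unitaries and hence manifestly compact, whereas you invoke the standard bound on the number of POVM outcomes; either route closes the gap you flag as the ``main obstacle''.
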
	
	\begin{proof}
		The restriction to $R$ isomorphic to $A$ follows from \autoref{lem:restrict_reference_size}. What remains to show is that the supremum is achieved.
		Consider first the case where $D_{\max}(\E\|\F) = \infty$. As shown in~\cite{wilde_amortized_2020}, $D_{\max}(\E\|\F) = D_{\max}(\E(\Omega)\|\F(\Omega))$ where $\Omega = \Omega_{RA}$ is a maximally entangled state. It is well known that (in finite dimensions) $D_{\max}(\rho\|\sigma) = \infty \Rightarrow D(\rho\|\sigma) = \infty$ for all states $\rho$, $\sigma$, and hence if $\D = D$ the value of infinity is achieved in this case. It is also not hard to see that if $D(\rho\|\sigma) = \infty$ also $D_M(\rho\|\sigma) = \infty$ (in this case $\sigma$ will have a smaller support than $\rho$, so a POVM built from the projection onto the support of $\sigma$ will achieve infinity). Hence in this case also with $\D = D_M$ the supremum is achieved.
		So assume $D_{\max}(\E\|\F) < \infty$. If $\D = D$, then by \autoref{lem:umegaki_continuity}, the function is continuous in $\nu$ and hence, since the set of density matrices optimized over is compact, the optimum value is achieved. If $\D = D_M$ we have the expression 
		\begin{equation}
			\sup_{\nu \in \DM[R \otimes A]} D_M(\E(\nu) \|\F(\nu)) = \sup_{\nu \in \DM[R \otimes A]} \sup_{\M \text{ POVM}} D(\M \circ \E(\nu)\|\M \circ \F(\nu))
		\end{equation}
		Again, by \autoref{lem:umegaki_continuity} the expression optimized over is continuous in $\nu$ and $\M$. Also, by~\cite{berta_variational_2017} the optimization over $\M$ in the measured relative entropy can be restricted to von Neumann measurements (i.e.~projective rank-1 measurements), and the set of these measurements is compact (as such a measurement is uniquely specified by a unitary matrix). Hence, also here the optimum value is achieved. 
	\end{proof}
	We say that a state $\rho_n \in \DM[\HS^{\otimes n}]$ is \emph{permutation invariant}, if for any permutation of $n$ systems $\pi \in \mathfrak{S}(n)$ and associated unitary operator $P_{\HS}(\pi)$ it holds that $P_{\HS}(\pi) \rho_n P_{\HS}(\pi)^\dagger = \rho_n$. We say that a channel $\E_n \in \cptp[A^n \to B^n]$ is \emph{permutation covariant} if $\E_n(P_A(\pi) \rho_n P_A(\pi)^\dagger) = P_B(\pi) \E_n(\rho_n) P_B(\pi)^\dagger$ for all input states $\rho_n \in \DM[A^n]$ and all permutations $\pi \in \Sn$. We say that a set of channels $\S_n \subset \cptp[A^n \to B^n]$ is \emph{closed under permutations}, if for any $\E_n \in \S_n$ and any permutation $\pi \in \Sn$, also the channel with permuted input and output systems $\rho \mapsto P_B(\pi)^\dagger \E_n\big(P_A(\pi) \rho P_A(\pi)^\dagger\big) P_B(\pi)$ is an element of $\S_n$.

The simplest examples of permutation invariant states are just tensor power states, although the set of permutation invariant states is significantly bigger than that. Similarly, the simplest examples of permutation covariant channels are channels which are tensor powers, i.e. $\E_n = \E\n$, although again the set of permutation covariant channels is significantly bigger than that. The main reason for permutation invariance being important in this thesis is the following Lemma, which establishes that if a sequence of channels is permutation covariant (e.g.\ because it is an i.i.d.\ string of channels), then also the optimizing input state will be permutation invariant. 

\begin{lemma}\label{lem:inf_perm_covariant}
		Let $\S_n$, $\T_n \subset \cptp[A^n \to B^n]$ be closed convex and also closed under permutations, and let $\D$ be lower semi-continuous and satisfy the data-processing inequality. Then,
		\begin{equation}
			\inf_{\substack{\E_n \in \S_n \\ \F_n  \in \T_n}} \sup_{\substack{\nu \in \DM[R \otimes A^{\otimes n}] \\ R \text{ arbitrary}}} \D(\E_n(\nu) \|\F_n(\nu)) = 	\min_{\substack{\E_n \in \S_n \\ \F_n  \in T_n \\ \E_n, \F_n \text{ perm. covariant.}}} \sup_{\substack{\nu \in \DM[R \otimes A^{\otimes n}] \\ R \text{ arbitrary}}} \D(\E_n(\nu) \|\F_n(\nu))
		\end{equation}
		i.e.~the infimum is achieved for permutation covariant elements of $\S_n$ and $\T_n$. 
	\end{lemma}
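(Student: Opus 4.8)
The direction $\geq$ is immediate, since the right-hand side minimizes over a subset of the pairs on the left. For $\leq$ the plan is a twirling (symmetrization) argument: given an arbitrary pair $\E_n \in \S_n$, $\F_n \in \T_n$, I would average both channels over the permutation group, verify that the averaged channels remain in $\S_n$ and $\T_n$ and are permutation covariant, and show that this averaging does not increase $\sup_\nu \D(\E_n(\nu)\|\F_n(\nu))$.

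Concretely, writing $\Pi_A(\pi), \Pi_B(\pi)$ for the canonical unitary representations of $\pi \in S(n)$ on $A^{\otimes n}$ and $B^{\otimes n}$, I would put
\[
\bar{\E}_n(\cdot) \coloneqq \frac{1}{n!}\sum_{\pi \in S(n)} \Pi_B(\pi)^\dagger\,\E_n\!\big(\Pi_A(\pi)\,(\cdot)\,\Pi_A(\pi)^\dagger\big)\,\Pi_B(\pi),
\]
and define $\bar{\F}_n$ by the same formula. Each summand is a $\pi$-permuted version of $\E_n$ and hence lies in $\S_n$ by closure under permutations; a short computation using that the $\Pi_A, \Pi_B$ are group representations (substitute $\pi \mapsto \pi\tau^{-1}$) shows $\bar{\E}_n$ is permutation covariant; and by convexity of $\S_n$ the uniform average $\bar{\E}_n$ is again in $\S_n$. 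Likewise $\bar{\F}_n \in \T_n$ and is permutation covariant.

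The key step is to dominate $\D(\bar{\E}_n(\nu)\|\bar{\F}_n(\nu))$ via data processing. Fixing $\nu \in \DM[R \otimes A^{\otimes n}]$, I would adjoin a classical register $C$ of dimension $n!$ labelling permutations and form the enlarged input
\[
\tilde{\nu} \coloneqq \frac{1}{n!}\sum_{\pi \in S(n)} \ketbra{\pi}_C \otimes \big(\id_R \otimes \mathrm{Ad}_{\Pi_A(\pi)}\big)(\nu) \in \DM[C \otimes R \otimes A^{\otimes n}],
\]
together with the \emph{fixed} CPTP map $\Psi : C \otimes R \otimes B^{\otimes n} \to R \otimes B^{\otimes n}$ that reads $C$ in the labelling basis, applies $\mathrm{Ad}_{\Pi_B(\pi)^\dagger}$ to $B^{\otimes n}$ conditioned on the outcome $\pi$, and then discards $C$. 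By construction $\Psi\big((\id_{CR} \otimes \E_n)(\tilde{\nu})\big) = \bar{\E}_n(\nu)$ and $\Psi\big((\id_{CR} \otimes \F_n)(\tilde{\nu})\big) = \bar{\F}_n(\nu)$, while $\Psi$ does not depend on $\E_n$ or $\F_n$. The data-processing inequality for $\D$ then yields
\[
\D(\bar{\E}_n(\nu)\|\bar{\F}_n(\nu)) \;\leq\; \D(\E_n(\tilde{\nu})\|\F_n(\tilde{\nu})) \;\leq\; \sup_{\nu'} \D(\E_n(\nu')\|\F_n(\nu')),
\]
the last inequality holding because $C \otimes R$ is an admissible reference system. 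Taking the supremum over $\nu$ gives $\sup_\nu \D(\bar{\E}_n(\nu)\|\bar{\F}_n(\nu)) \leq \sup_{\nu'} \D(\E_n(\nu')\|\F_n(\nu'))$. Since every pair in $\S_n \times \T_n$ thus admits a permutation-covariant counterpart with no larger objective value, and covariant pairs form a subset of all pairs, the two infima in the lemma coincide.

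To replace the infimum on the right by a minimum, I would observe that permutation covariance is a closed linear condition, so the permutation-covariant channels in $\S_n$ form a compact subset of $\cptp[A^n \to B^n]$, and likewise for $\T_n$; moreover $(\E_n, \F_n, \nu) \mapsto \D(\E_n(\nu)\|\F_n(\nu))$ is lower semi-continuous, since $(\E_n, \nu) \mapsto \E_n(\nu)$ is jointly continuous (e.g.\ in diamond and trace norm) and $\D$ is lower semi-continuous, so $(\E_n, \F_n) \mapsto \sup_\nu \D(\E_n(\nu)\|\F_n(\nu))$ is lower semi-continuous as a supremum of lower semi-continuous functions and therefore attains its infimum on that compact set. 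The step I expect to be the main obstacle is the data-processing argument: one must enlarge the reference system and, crucially, exhibit a \emph{single} post-processing channel $\Psi$ that simultaneously maps $\E_n(\tilde{\nu}) \mapsto \bar{\E}_n(\nu)$ and $\F_n(\tilde{\nu}) \mapsto \bar{\F}_n(\nu)$, all while keeping careful track of which subsystems carry identities; the group-theoretic covariance check and the compactness/semicontinuity bookkeeping are routine.
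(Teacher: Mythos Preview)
Your proposal is correct and is essentially the same argument as the paper's: both symmetrize $\E_n,\F_n$ by twirling over $S(n)$, realize the twirl as a pre-processing (adjoin a classical permutation register and permute the input) followed by a post-processing (controlled de-permutation on the output), and then invoke data processing together with the freedom to enlarge the reference system. The paper writes this as a super-channel $\mathcal{B}\circ(\,\cdot\,\otimes\id_{R'})\circ\mathcal{A}$ while you package the pre-processing into the state $\tilde\nu$ and the post-processing into $\Psi$, but these are the same maps; the compactness/lower-semicontinuity justification for attainment of the minimum is likewise the same.
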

	\begin{proof} 
		First, the infimum is achieved since the infimum of a lower semi-continuous function over a compact set is achieved, and the supremum of multiple lower semi-continuous functions is lower semi-continuous. Let $\E_n \in \S_n$ and $\F_n \in \T_n$ be two channels, and consider the permuted versions:
		\begin{align}
			\bar{\E}_n(\rho) &\coloneqq \frac{1}{n!} \sum_{\pi \in \Sn} \Pbd \E_n(\Pa \rho \Pad) \Pb \\
			\bar{\F}_n(\rho) &\coloneqq \frac{1}{n!} \sum_{\pi \in \Sn} \Pbd \F_n(\Pa \rho \Pad) \Pb
		\end{align}
		These two permuted versions can also be seen as a permutation super-channel having been applied to $\E_n$ and $\F_n$, and it is known that any channel divergence can only decrease under the action of such super-channels (see e.g.~\cite{gour_comparison_2019}), however, we will still show this explicitly again here for the reader's convenience:
		
		Define the channel $\mathcal{A} : A^n \to A^n \otimes R'$
		\begin{equation}
			\mathcal{A}(\rho) \coloneqq \frac{1}{n!} \sum_{\pi \in \Sn} (\Pa \rho \Pad) \otimes \ketbra{\pi}_{R'}
		\end{equation}
		where $R'$ is some additional classical register storing the permutation $\pi$. Defining $\mathcal{B}: B^n \otimes R' \to B^n$:
		\begin{equation}
			\mathcal{B}(\rho) \coloneqq \sum_{\pi \in \Sn} \Pbd \bra{\pi}_{R'}\rho \ket{\pi}_{R'} \Pb
		\end{equation} 
		we find that
		\begin{align}
			\bar{\E}_n = \mathcal{B} \circ (\E_n \otimes \id_{R'}) \circ \mathcal{A} \\
			\bar{\F}_n = \mathcal{B} \circ (\F_n \otimes \id_{R'}) \circ \mathcal{A}\,.
		\end{align}
		Hence
		\begin{equation}
			\sup_{\substack{\nu \in \DM[R \otimes A^{\otimes n}] \\ R \text{ arbitrary}}} \D(\bar{\E}_n(\nu) \|\bar{\F}_n(\nu)) \leq \sup_{\substack{\nu \in \DM[R \otimes A^{\otimes n}] \\ R \text{ arbitrary}}} \D(\E_n(\mathcal{A}(\nu)) \|\F_n(\mathcal{A}(\nu))) \leq \sup_{\substack{\nu \in \DM[R \otimes A^{\otimes n}] \\ R \text{ arbitrary}}} \D(\E_n(\nu) \|\F_n(\nu)) 
		\end{equation}
		where the first inequality is the normal data-processing inequality (we omitted the $\id_{R'}$), and the second inequality uses that all the states $\mathcal{A}(\nu)$ are itself included in the supremum. As it is easy to see that the channels $\bar{\E}_n$ and $\bar{\F}_n$ are permutation covariant and are also included in $\S_n$ and $\T_n$ (as they were assumed to be closed under permutations), the minimum will be achieved for permutation covariant channels. 
	\end{proof}

\begin{lemma}\label{lem:channel_div_perm_invariant}
    Let $\E_n, \F_n \in \cptp[A^n \to B^n]$ both be permutation covariant and let $\D$ be $D$ or $D_M$. Then, 
    \begin{equation}
        \sup_{\substack{\nu \in \DM[R \otimes A^{\otimes n}] \\ R \text{ arbitrary}}} \D(\E_n(\nu)\|\F_n(\nu)) = 
        \max_{\substack{\nu \in \DM[R^{\otimes n} \otimes A^{\otimes n}] \\ R \cong A \\ \nu \text{ permut. invariant}}} \D(\E_n(\nu) \|\F_n(\nu))
    \end{equation}
    i.e.~the supremum is achieved for a permutation invariant state $\nu_{R^n A^n} = \nu_{(RA)^n}$ where $R$ is isomorphic to $A$. Note that we mean permutation invariant with respect to permutations permuting the n copies of $(RA)$. 
\end{lemma}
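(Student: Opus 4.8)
The plan is to reduce the claim to the single-channel facts already available and to transfer the permutation symmetry from the channels onto the input state via a classical \enquote{flag} register. The inequality \enquote{$\ge$} is immediate, since a permutation-invariant state on the $n$ copies of $(RA)$ with $R\cong A$ uses a reference of dimension $\dim(A^{\otimes n})$ and hence is among the states optimised over on the left once \autoref{lem:restrict_reference_size} is used to drop the bound on the size of the reference. For \enquote{$\le$} I would first record, using \autoref{lem:restrict_reference_size} together with the fact that passing from a state to a purification of it cannot decrease $\D$ (data processing under the partial trace of the purifying system), that
\[
\sup_{\nu}\D\big(\E_n(\nu)\|\F_n(\nu)\big)=\sup_{\rho\in\DM[A^{\otimes n}]}\D\big(\E_n(\psi_\rho)\|\F_n(\psi_\rho)\big),
\]
where $\psi_\rho$ is the canonical purification of $\rho$ on $A^{\otimes n}\otimes S$ with $S\cong A^{\otimes n}$; fix a factorisation $S=R^{\otimes n}$, $R\cong A$, so that permutations act diagonally on the $n$ pairs $(RA)$. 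By \autoref{lem:sup_achieved} this value is attained, and replacing an attaining state by the canonical purification of its $A^{\otimes n}$-marginal does not decrease it, so we may take the maximiser to be $\nu^*=\psi_{\rho^*}$ for some $\rho^*\in\DM[A^{\otimes n}]$. Put $\bar\rho\coloneqq\frac{1}{n!}\sum_{\pi\in S(n)}\Pi_A\rho^*\Pi_A$, which is permutation invariant.

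The crux is then to show $\D(\E_n(\psi_{\bar\rho})\|\F_n(\psi_{\bar\rho}))\ge\D(\E_n(\nu^*)\|\F_n(\nu^*))$. For this I would introduce the flagged state
\[
\hat\nu\coloneqq\frac{1}{n!}\sum_{\pi\in S(n)}(\Pi_R\otimes\Pi_A)\,\nu^*\,(\Pi_R\otimes\Pi_A)\otimes\ketbra{\pi}_P
\]
on $(RA)^{\otimes n}\otimes P$, with $P$ a classical register and $\Pi_R$ the copy of $\Pi_A$ on $R^{\otimes n}$. Since $\E_n$ and $\F_n$ are permutation covariant and act as the identity on $R^{\otimes n}$ and on $P$, applying them to the $\pi$-th summand conjugates $\E_n(\nu^*)$ (resp.\ $\F_n(\nu^*)$) by the unitary $\Pi_R\otimes\Pi_B$; combining the unitary invariance of $\D$ (a consequence of data processing) with the direct-sum property of $\D$ over the classical register $P$ gives
\[
\D\big(\E_n(\hat\nu)\|\F_n(\hat\nu)\big)=\frac{1}{n!}\sum_{\pi\in S(n)}\D\big(\E_n(\nu^*)\|\F_n(\nu^*)\big)=\D\big(\E_n(\nu^*)\|\F_n(\nu^*)\big).
\]

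On the other hand $\hat\nu$ is an extension of $\bar\rho$ (its marginal on $A^{\otimes n}$ is $\bar\rho$), and every extension of $\bar\rho$ is obtained from the purification $\psi_{\bar\rho}$ by a channel acting only on the purifying system; since that channel commutes with $\E_n$ and $\F_n$, data processing yields $\D(\E_n(\hat\nu)\|\F_n(\hat\nu))\le\D(\E_n(\psi_{\bar\rho})\|\F_n(\psi_{\bar\rho}))$. Finally, because $\bar\rho$ is permutation invariant and the permutation operators are real, the canonical purification $\psi_{\bar\rho}$ is invariant under the diagonal permutation unitaries on $A^{\otimes n}\otimes S$, i.e.\ it is a permutation-invariant state on the $n$ copies of $(SA)$ with $S\cong A$, hence one of the states appearing on the right-hand side. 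Chaining the last two displays with this observation proves \enquote{$\le$}; attainment on the right follows since the permutation-invariant states of the stated form are a compact set and $\nu\mapsto\D(\E_n(\nu)\|\F_n(\nu))$ is lower semi-continuous for $\D=D$ and $\D=D_M$ (using \autoref{lem:measured_divergence_properties} in the latter case), just as in \autoref{lem:sup_achieved}, with the case $D_{\max}(\E_n\|\F_n)=\infty$ handled as there.

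The main obstacle is that one cannot simply symmetrise the optimal input: the symmetrised state $\frac{1}{n!}\sum_{\pi\in S(n)}(\Pi_R\otimes\Pi_A)\nu^*(\Pi_R\otimes\Pi_A)$ has divergence \emph{at most} $\D(\E_n(\nu^*)\|\F_n(\nu^*))$ by joint convexity of $\D$, the wrong direction. The flag register is precisely what preserves the value (through the direct-sum property), and one then has to \enquote{remove} the flag, which works because the flagged state is an extension of the symmetrised marginal $\bar\rho$, the divergence of a purification dominates that of any other extension of the same state, and the canonical purification of a permutation-invariant state is again permutation invariant with reference isomorphic to $A$.
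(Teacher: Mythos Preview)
Your proof is correct and follows essentially the same route as the paper: introduce a classical flag register so that the direct-sum property preserves the divergence value under symmetrisation, observe that the flagged state is an extension of the permutation-invariant marginal $\bar\rho$ on $A^{\otimes n}$, and then pass to a permutation-invariant purification of $\bar\rho$ on $(RA)^{\otimes n}$ with $R\cong A$ via data processing. The differences are cosmetic: the paper permutes only the $A$-systems in the flag (without first reducing the reference to $R^{\otimes n}$) and invokes \cite[Lemma~II.5]{christandl_one-and--half_2007} for the existence of a permutation-invariant purification, whereas you first reduce to canonical purifications and then observe directly that the canonical purification of a permutation-invariant state is permutation invariant (using that permutation matrices are real). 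Your variant is slightly more self-contained at the cost of the preliminary reduction step.
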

\begin{proof}
    This can be seen as a special case of \cite[Proposition II.4]{leditzky_approaches_2018}, although for this special case we provide a slightly simpler proof. We use the above introduced notation for permutations and associated unitary operators, for $\pi \in \Sn$ the action of e.g. $P_A(\pi)$ will only permute the $n$ copies of the system $A$ and ignore any additional (reference) systems. Let $\nu = \nu_{R_0 A^n} \in \DM[R_0 \otimes A^{\otimes n}]$ be an arbitrary state, where $R_0$ is an arbitrary reference system, and let $\pi \in \Sn$. Then by unitary invariance of $\D$ (which follows from the data-processing inequality):
    \begin{align}
        \D(\E_n(\nu)\|\F_n(\nu)) &= \D(\Pb \E_n(\nu) \Pbd \| \Pb \F_n(\nu) \Pbd) \\&= \D\divarg{\E_n\qty(\Pa \nu \Pad)}{\F_n\qty(\Pa \nu \Pad)}\,.
    \end{align}
    Define
    \begin{equation}
        \omega_{P R_0 A^n} = {1 \over n!} \sum_{\pi \in \Sn} \ketbra{\pi} \otimes \Pa \nu_{R_0 A^n} \Pad \,,
    \end{equation}
    where the first register is classical and stores the permutation $\pi$. 
    By the direct sum property we have
    \begin{align}
        \D(\E_n(\nu)\|\F_n(\nu)) &= {1 \over n!} \sum_{\pi \in \Sn} \D\divarg{\E_n\qty(\Pa \nu \Pad)}{\F_n\qty(\Pa \nu \Pad)} \\&= \D(\E_n(\omega_{P R_0 A^n})\| \F_n(\omega_{P R_0 A^n}))\,.
    \end{align}
    Let $\omega_{S P R_0 A^n}$ be a purification of $\omega_{P R_0A^n}$.  Note that 
    \begin{equation}
        \omega_{A^n} = {1 \over n!} \sum_{\pi \in \Sn}\Pa \nu_{A^n} \Pad
    \end{equation}
    is permutation invariant, and hence by~\cite[Lemma II.5]{christandl_one-and--half_2007}, there exists a system $K$ isomorphic to $A$, and a permutation invariant purification $\omega_{(KA)^n} \in \DM[K^{\otimes n} \otimes A^{\otimes n}]$ where the permutations act on $\omega_{(KA)^n}$ by permuting the copies of $(KA)$. Now the two purifications $\omega_{(KA)^n}$ and $\omega_{SPR_0 A^n}$ will be related by a partial isometry $V: K^n \to SPR_0$ which commutes with $\E_n$ and $\F_n$ (since they only act on $A^n$). Hence,
    \begin{align}
        \D(\E_n(\omega_{P R_0 A^n})\| \F_n(\omega_{P R_0 A^n})) &\leq \D(\E_n(\omega_{S P R_0 A^n})\| \F_n(\omega_{S P R_0 A^n})) \\
        &= \D(\E_n(\omega_{(KA)^n})\| \F_n(\omega_{(KA)^n}))
    \end{align}
    by the data processing inequality and isometric invariance. The fact that the supremum is also achieved follows from the same argument as in \autoref{lem:sup_achieved}. 
\end{proof}
The significance of these restrictions to permutation covariant channels and permutation invariant input states comes from the fact that both terms $\E_n(\nu_n)$ and $\F_n(\nu_n)$ will then be permutation invariant, and this allows us to use:
\begin{lemma}[{\cite[Lem. 2.4]{berta_composite_2021}}]\label{lem:measured_asmptotically_equal}
    Let $\rho_n, \sigma_n \in \DM[\HS^{\otimes n}]$ with $\sigma_n$ permutation invariant. Then,
    \begin{equation}
        D(\rho_n\| \sigma_n) - \log \text{\emph{poly}}(n) \leq D_M(\rho_n\|\sigma_n) \leq D(\rho_n\|\sigma_n)\,,
    \end{equation}
\end{lemma}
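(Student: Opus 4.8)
The plan is to treat the two inequalities separately. The upper bound $D_M(\rho_n\|\sigma_n)\le D(\rho_n\|\sigma_n)$ is immediate: any POVM $\M$, regarded as a quantum--classical channel, is a channel, so the data-processing inequality for the Umegaki relative entropy gives $D(\M(\rho_n)\|\M(\sigma_n))\le D(\rho_n\|\sigma_n)$, and taking the supremum over $\M$ proves it.

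For the lower bound I would run the standard pinching argument. Write $\sigma_n=\sum_i\lambda_i\Pi_i$ with the $\lambda_i$ the \emph{pairwise distinct} eigenvalues, let $v(\sigma_n)$ denote their number, and let $\mathcal P(X):=\sum_i\Pi_iX\Pi_i$ be the pinching onto the eigenspaces of $\sigma_n$. Since $\mathcal P$ is a channel with $\mathcal P(\sigma_n)=\sigma_n$, the data-processing property of $D_M$ (\autoref{lem:measured_divergence_properties}) gives $D_M(\rho_n\|\sigma_n)\ge D_M(\mathcal P(\rho_n)\|\sigma_n)$; and since $\mathcal P(\rho_n)$ and $\sigma_n$ commute, measuring both in a common eigenbasis shows $D_M(\mathcal P(\rho_n)\|\sigma_n)=D(\mathcal P(\rho_n)\|\sigma_n)$. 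It then remains to compare $D(\mathcal P(\rho_n)\|\sigma_n)$ with $D(\rho_n\|\sigma_n)$: using that $\log\sigma_n$ and $\log\mathcal P(\rho_n)$ are block-diagonal with respect to the $\Pi_i$, so that $\Tr(\rho_n\log\sigma_n)=\Tr(\mathcal P(\rho_n)\log\sigma_n)$ and $\Tr(\mathcal P(\rho_n)\log\mathcal P(\rho_n))=\Tr(\rho_n\log\mathcal P(\rho_n))$, one finds $D(\rho_n\|\sigma_n)-D(\mathcal P(\rho_n)\|\sigma_n)=\Tr\big(\rho_n(\log\rho_n-\log\mathcal P(\rho_n))\big)$, and the pinching inequality $\rho_n\le v(\sigma_n)\,\mathcal P(\rho_n)$ together with operator monotonicity of $\log$ bounds this right-hand side by $\log v(\sigma_n)$. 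Chaining the three steps yields $D_M(\rho_n\|\sigma_n)\ge D(\rho_n\|\sigma_n)-\log v(\sigma_n)$.

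The remaining ingredient, and the point where permutation invariance is really used, is the bound $v(\sigma_n)=\mathrm{poly}(n)$. By Schur--Weyl duality, $\HS^{\otimes n}\cong\bigoplus_\lambda\mathcal Q_\lambda\otimes\mathcal P_\lambda$, the sum over Young diagrams $\lambda$ with $n$ boxes and at most $\dim\HS$ rows, and any permutation-invariant operator has the block form $\bigoplus_\lambda X_\lambda\otimes I_{\mathcal P_\lambda}$ by Schur's lemma. Hence $v(\sigma_n)\le\sum_\lambda\dim\mathcal Q_\lambda$, and since the number of such $\lambda$ is $O(n^{\dim\HS-1})$ and each $\dim\mathcal Q_\lambda$ is polynomial in $n$ by the Weyl dimension formula, $v(\sigma_n)$ is polynomial in $n$; substituting $\log v(\sigma_n)=\log\mathrm{poly}(n)$ into the previous bound proves the lemma. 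The degenerate case $\rho_n\not\ll\sigma_n$, where $D(\rho_n\|\sigma_n)=\infty$, is immediate, since the two-outcome measurement given by the projection onto $\supp{\sigma_n}$ already forces $D_M(\rho_n\|\sigma_n)=\infty$.

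I expect the representation-theoretic estimate on $v(\sigma_n)$ to be the only genuinely nontrivial point; everything else is routine pinching manipulation. In a write-up one could alternatively just invoke the standard fact that permutation-invariant states on $\HS^{\otimes n}$ have at most $\mathrm{poly}(n)$ distinct eigenvalues rather than re-deriving it.
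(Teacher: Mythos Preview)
Your argument is correct. The paper does not prove this lemma at all; it merely cites \cite[Lemma~2.4]{berta_composite_2021}, so there is no in-paper proof to compare against. What you have written is precisely the standard pinching argument underlying that cited result: data processing for the upper bound, Hayashi's pinching inequality $\rho_n\le v(\sigma_n)\,\mathcal P(\rho_n)$ combined with operator monotonicity of the logarithm for the lower bound, and the Schur--Weyl estimate on the number of distinct eigenvalues of a permutation-invariant operator to get $\log v(\sigma_n)=\log\mathrm{poly}(n)$. Each step is handled cleanly, including the support edge case $\rho_n\not\ll\sigma_n$. Your closing remark is apt: in a write-up one would typically just invoke the known bound on $v(\sigma_n)$ rather than redoing the representation theory.
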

Additionally, we have the following lemma (essentially a variant of~\cite[Lemma 2.5]{berta_composite_2021}) which allows us to remove a convex hull in the infimum over the first argument, if the states also all lie in a sufficiently small linear subspace. This is for example the case if all states are permutation invariant, as the subspace of permutation invariant density matrices in $\DM[\HS^{\otimes n}]$ lies in a linear subspace of $\BO[\HS\n]$ whith dimension upper bounded by $(n+1)^{(\dim \HS)^2}$.
\begin{lemma}\label{lem:entropy_convex_caratheodory}
    Let $\sigma \in \DM[\HS]$, and let $S \subset \mathcal{W} \cap \DM[\HS]$ be a set of density matrices, where $\mathcal{W}$ is a linear subspace of $\BO[\HS]$. Then,
    \begin{equation}
        \inf_{\rho \in \C(S)} D(\rho \| \sigma) \geq \inf_{\rho \in S} D(\rho\|\sigma) - \log(\dim \mathcal{W} + 1)\,,
    \end{equation}
    where $\C(S)$ is the convex hull of $S$. 
\end{lemma}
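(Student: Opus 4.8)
The plan is to bound $D(\rho\|\sigma)$ from below for an arbitrary fixed $\rho \in \C(S)$ and then take the infimum over such $\rho$. We may assume $D(\rho\|\sigma) < \infty$, since otherwise the claimed inequality holds trivially. Since $S$ lies in the linear subspace $\mathcal{W}$ of dimension $\dim\mathcal{W}$, Caratheodory's theorem (applied inside $\mathcal{W}$, not in the ambient space) lets us write $\rho = \sum_{i=1}^{m} \lambda_i \rho_i$ with $m \le \dim\mathcal{W} + 1$, $\rho_i \in S$, $\lambda_i > 0$, and $\sum_i \lambda_i = 1$. Because $\rho \ll \sigma$ and each $\rho_i \ll \rho$ (after discarding any zero weights), all relative entropies appearing below are finite.

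The key step is the elementary identity
\begin{equation}
	\sum_{i=1}^m \lambda_i D(\rho_i \| \sigma) - D(\rho\|\sigma) = \sum_{i=1}^m \lambda_i D(\rho_i\|\rho),
\end{equation}
which holds because $\Tr(\,\cdot\,\log\sigma)$ and $\Tr(\,\cdot\,\log\rho)$ are linear in their argument, so on both sides the $\sigma$-dependent part drops out and one is left with the Holevo quantity $H(\rho) - \sum_i \lambda_i H(\rho_i)$ of the ensemble $\{\lambda_i, \rho_i\}$. Note that joint convexity of $D(\,\cdot\,\|\sigma)$ by itself only gives the inequality in the unhelpful direction; the content here is that the "compensation term" $\sum_i \lambda_i D(\rho_i\|\rho)$ on the right is not too large.

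It then remains to bound $\sum_{i=1}^m \lambda_i D(\rho_i\|\rho) \le \log m$. This is immediate from $\rho \ge \lambda_i \rho_i$ (as $\rho - \lambda_i\rho_i = \sum_{j\ne i}\lambda_j\rho_j \ge 0$), which gives $\rho_i \le \lambda_i^{-1}\rho$ and hence $D(\rho_i\|\rho) \le D_{\max}(\rho_i\|\rho) \le \log(1/\lambda_i)$; summing yields $\sum_i \lambda_i D(\rho_i\|\rho) \le H(\{\lambda_i\}) \le \log m$ (equivalently, this is just the Holevo bound). Combining with the identity,
\begin{equation}
	D(\rho\|\sigma) = \sum_{i=1}^m \lambda_i D(\rho_i\|\sigma) - \sum_{i=1}^m \lambda_i D(\rho_i\|\rho) \ge \min_{1\le i\le m} D(\rho_i\|\sigma) - \log m \ge \inf_{\rho'\in S} D(\rho'\|\sigma) - \log(\dim\mathcal{W}+1),
\end{equation}
using $m \le \dim\mathcal{W}+1$ in the last step; taking the infimum over $\rho \in \C(S)$ on the left gives the statement.

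The only real subtlety is bookkeeping rather than any deep step: one must be sure to apply the identity only where all quantities are finite (handled by first reducing to $D(\rho\|\sigma)<\infty$ and observing $\rho_i \ll \rho \ll \sigma$), and to invoke Caratheodory in $\mathcal{W}$ so that the constant is $\log(\dim\mathcal{W}+1)$ rather than $\log(\dim\DM[\HS]+1)$. Everything else is a one-line operator estimate.
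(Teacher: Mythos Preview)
Your proof is correct and essentially the same as the paper's: both apply Carath\'eodory inside $\mathcal{W}$ to get at most $\dim\mathcal{W}+1$ terms and then establish $D(\rho\|\sigma) \ge \sum_i p_i D(\rho_i\|\sigma) - H(p)$, after which $H(p)\le\log(\dim\mathcal{W}+1)$ finishes. The only cosmetic difference is that you reach this inequality via Donald's identity combined with the Holevo bound (through $D\le D_{\max}$ and $\rho_i\le\lambda_i^{-1}\rho$), whereas the paper obtains it in one step from operator monotonicity of $\log$; your version is a bit more explicit about the support/finiteness bookkeeping.
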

\begin{proof}
    By Caratheodory's theorem, we can write any element $\tilde{\rho} \in \C(S)$ as $\sum_{i=1}^{n} p_i \tilde{\rho_i}$, where $p_i$ is a probability distribution, $\tilde{\rho}_i \in S$ and $n = \dim \mathcal{W} + 1$. We can assume $\tilde{ρ}_i \ll σ$ for all $i$, as otherwise $D(\tilde{ρ}\|σ) = \infty$ and there is nothing to show. For $ε > 0$ fixed, let $ρ_i \coloneqq \tilde{ρ}_i + ε \Pi_σ$, where $\Pi_σ$ is the projection onto the support of $σ$. Then,
    \begin{align}
        D\divarg{\sum_i p_i \rho_i}{\sigma} &= \tr[ \qty(\sum_i p_i \rho_i) \qty(\log(\sum_j p_j \rho_j) - \log(\sigma))] \\
        &= \tr[ \sum_i p_i \rho_i \qty(\log(\sum_j p_j \rho_j) - \log(\sigma))] \\
        &\geq \tr[ \sum_i p_i \rho_i \qty(\log(p_i \rho_i) - \log(\sigma))] \\
        &= \sum_i p_i \tr[\rho_i (\log \rho_i - \log \sigma + \log p_i)] \\
        &= \sum_i p_i D(\rho_i\|\sigma) - H(p) \geq \sum_i p_i D(\rho_i\|\sigma)  - \log(n)\,,
    \end{align}
    where for the first inequality we used the operator monotonicity of the logarithm and that $\sum_j p_j \rho_j \geq p_i \rho_i$ for every $i$. Now, by the already mentioned continuity of $D$ in the first variable (when restricted to density matrices on the support of $σ$), we can take the limit $ε \to 0$ on both sides to get
    \begin{equation}
    			D\divarg{\tilde{ρ}}{\sigma} \geq \sum_i p_i D(\tilde{\rho}_i\|\sigma)  - \log(n) \geq \inf_{\rho \in S} D(\rho\|\sigma) - \log(n)\,.
    \end{equation}
\end{proof}
	
\end{document}